\documentclass[12pt]{article}
\usepackage[a4paper, left=1in, right=1in, top=1in, bottom=1in]{geometry}
\usepackage{amsmath, amsthm, amssymb, amsfonts}
\usepackage{tabulary}
\usepackage{bbm}
\usepackage{multirow}

\usepackage{booktabs}
\usepackage{xcolor}
\usepackage{mathtools}
\usepackage{upgreek}
\usepackage{listings}
\usepackage{natbib}
\usepackage{array}
\usepackage{enumitem}

\newtheorem{property}{Properties}
\newtheorem{theorem}{Theorem}[section]

\newtheorem{prop}{Proposition}[section]
\newtheorem{cor}{Corollary}[section]

\usepackage{algorithm} 
\usepackage{algpseudocode} 
\algtext*{EndFor}  
\usepackage{caption}
\usepackage{subcaption}

\def\P{{\mathbb P}}
\def\1{{\mathcal 1}}

\def\E{{\mathbb E}}

\usepackage{hyperref}
\hypersetup{citecolor=blue}
\hypersetup{
    colorlinks=true,
    linkcolor=blue,
    filecolor=magenta,      
    urlcolor=cyan,
    pdftitle={Overleaf Example},
    pdfpagemode=FullScreen,
    }

\providecommand{\keywords}[1]{
  \small	
  \textbf{\textit{Keywords---}} #1
}

\title{\textbf{On improving the estimates of the sampling variances via Global-Local priors in Small Area Estimation }}

\author{Sirapat Watakajaturaphon and Jairo Fúquene-Patiño\thanks{Correspondence should be addressed to jafuquenepatino@ucdavis.edu.}\\
Department of Statistics, University of California, Davis,
CA, United States}

\date{\today}

\begin{document}
    \maketitle
    
    \begin{abstract}

The Fay-Herriot (FH) model is widely used in official statistics to produce reliable estimates for domains with small sample sizes. In the classical FH model, the sampling variances are treated as known, even though they are typically estimated from the data. In practice, these variance estimates can be highly variable. To address this issue, practitioners often use Generalized Variance Functions (GVFs) to borrow strength across areas and stabilize estimation. In this work, we propose a new Bayesian model to improve the posterior estimation of the sampling variances in Small Area Estimation (SAE).   Our proposed model incorporates Global-Local (GL) priors to improve the level of shrinkage toward the 
posterior estimates obtained with the GVF function.  We study the theoretical properties of the proposed model and develop adaptive Markov chain Monte Carlo (MCMC) algorithms to address computational challenges arising from  conditional distributions involving Gamma functions. The performance of the proposed Bayesian model is investigated through simulation studies and compared with competing approaches. Finally, we implement our proposal in two real applications by estimating the Corn production from the U.S. Department of Agriculture and the Prevalence of the Educational Attainment Index at the municipality levels in Colombia.

    \end{abstract}

\keywords{Small Area Estimation (SAE), Fay-Herriot (FH) model, Generalized Variance Function (GVF), Global-Local (GL) priors, Shrinkage factors, Direct Variance Estimates (DVEs).}

\section{Introduction}\label{sec:introduction}
Due to limited sample sizes at subnational levels (e.g., counties, municipalities, and provinces) or within subpopulations (e.g., gender and age groups), the classical Fay-Herriot model has been  widely used to obtain reliable small area estimates. Consider $m$ small areas, and let $y_i$ be a direct survey estimator for the small area mean parameter $\theta_i$ in the $i$-th area for $i=1,...,m$. The Fay-Herriot (FH) model \citep{1979_Fay&Herriot} assumes

\begin{equation}\label{FH_model}
 y_i = \theta_i + e_i ~\text{ and } ~\theta_i =  \boldsymbol{x}_i^\top \boldsymbol{\beta} + u_i ~\text{ for }~ i = 1, \ldots, m,
\end{equation}

where $e_1,..., e_m$ are independent sampling errors with $e_i \sim N(0, \sigma_i^2)$ and the sampling variances $\sigma_i^2$ are assumed to be known. The area-level random effects 
$u_1,..., u_m$ are independent with $u_i \sim N(0, \sigma_u^2)$, and the vectors $\boldsymbol{e} = (e_1,..., e_m)^\top$ and $\boldsymbol{u} = (u_1,..., u_m)^\top$ are independent. 
In the FH model, $\boldsymbol{x}_i=(x_{i1},...,x_{ip})^\top\in \mathbb{R}^p$ is a vector of covariates and $\boldsymbol{\beta}=(\beta_1,...,\beta_p)^\top \in\mathbb R^p$ is the corresponding regression coefficient vector. The covariates $x_{i1},...,x_{ip}$ are assumed to be known and measured without error and are typically obtained from administrative sources or census data. In practice, the sampling variances $\sigma_i^2$ in the FH model are usually estimated from the survey data. 
In this paper, we denote Direct Variance Estimates (DVEs) of the sampling variances as $D_i,~i=1,...,m$.

Under a Bayesian framework of the FH model the posterior mean of $\theta_i$ is a weighted average of the direct estimator $y_i$ and the synthetic regression estimate $\boldsymbol{x}_i^\top \boldsymbol{\beta}$ as follows:
\begin{equation}\label{eq:FH_mean}
\E(\theta_i \mid y_i, D_i, \boldsymbol{\beta}, \sigma_u^2 ) = (1 - \gamma_{i}) y_i + \gamma_{i}  \boldsymbol{x}_i^\top \boldsymbol{\beta}= y_i - \gamma_{i} (y_i -  \boldsymbol{x}_i^\top \boldsymbol{\beta}),
\end{equation}

where $\gamma_{i} = D_i/(D_i + \sigma_u^2)$. As pointed out by \cite{2009_Maples}, the weight $\gamma_{i}$ uses $D_i$ and therefore the precision in the estimation of the sampling variances affects the estimation of the parameter of interest $\theta_i$. However, as noted by many researchers \citep[e.g.,][]{2009_Maples,2021_Morales,2023_You&Hidiroglou}, the DVEs are highly variable in domains with small sample sizes. 
To address this issue, practitioners often use Generalized Variance Functions (GVFs) proposed by \cite{2007_Wolter} to borrow strength across areas and stabilize estimation.
The GVF method has been widely studied and applied in SAE, particularly for estimating the sampling variances when  the DVEs are unstable;
see for instance, \cite{2012_Kubacki&Jedrzej},  \cite{2018_McIllece},  \cite{2019_Zhang}, and  \cite{2023_You&Hidiroglou}. 
In this work, we consider the GVF model proposed by  \cite{2007_Wolter} given by a log-linear regression model as follows:
\begin{align}\label{GVF_model}
    \log(D_i)=\boldsymbol{z}_i^\top \boldsymbol{\eta }+\varepsilon_i,\quad i=1,...,m,
\end{align}

where $\varepsilon_1,...,\varepsilon_m$ are independent with $\varepsilon_i\sim N(0,\sigma_\varepsilon^2)$, $\boldsymbol{z}_i=(z_{i1},...,z_{iq})^\top\in\mathbb{R}^q$ is a vector of covariates, and $\boldsymbol{\eta}=(\eta_1,...,\eta_q)^\top\in\mathbb{R}^q$ is the corresponding coefficient vector. Estimates of the sampling variances  are obtained by taking the exponential of the fitted values obtained with the GVF model (\ref{GVF_model}). For brevity, we denote the 
exponential of the mean response, $\exp(\boldsymbol{z}_i^\top \boldsymbol{\eta })$, in (\ref{GVF_model}) by Exp-GVF. As noted by \cite{2009_Maples} and \cite{2021_Morales}, the main goal of using the GVF method in SAE is to smooth out the DVEs and reduce their uncertainty.  To avoid \emph{over}-smoothing, suitable covariates should be included in the GVF function. Importantly, in SAE typically the DVEs, $D_i$, are associated with the sample sizes, $n_i$.
For instance, \cite{2023_You&Hidiroglou} proposed the use of the logarithm of the sample sizes with $\boldsymbol{z}_i=(1,\log(n_i))^\top$ in the GVF model (\ref{GVF_model}) to improve the estimates of the sampling variances in SAE. This choice of $\boldsymbol{z}_i$ was also considered by \cite{2021_You} to model the sampling variances through a log-linear model where $\log(\sigma_i^2)\sim N(\eta_0+\eta_1\log(n_i),\sigma_\varepsilon^2)$. We note that the log-linear model for $\sigma_i^2$ in \cite{2021_You} is an extension of the model proposed by \cite{2009_Souza} where $\log(\sigma_i^2)$ is treated as random and $\log(n_i)$ is used in place of $1/n_i$. Besides the sample size $n_i$, \cite{2009_Maples} studied the use of other potential covariates in the GVF function (\ref{GVF_model}) closely related to $\theta_i$ and the specific survey design.

\cite{2006_You&Chapman} proposed a modified FH model which includes an additional model for $D_i$ into the FH model (\ref{FH_model}). Specifically, the DVEs follow a Gamma distribution given by
\begin{align}\label{gamma_model_Di}
    D_i\mid\sigma_i^2,\nu_i \sim{\rm Ga}\left(\frac{\nu_i}{2},~\frac{\nu_i}{2\sigma_i^2}\right),\quad i=1,...,m,
\end{align}

where ${\rm Ga}(a,b)$ denotes a Gamma distribution with shape parameter $a$ and rate parameter $b$ and where $\nu_i=n_i-1$.
The modified FH model defined by (\ref{FH_model}) and (\ref{gamma_model_Di}) has been widely used in SAE and different alternatives to estimate its model parameters have been proposed.

For example, under a frequentist framework, \cite{2002_Rivest&Vandal} and \cite{2003_Wang&Fuller} estimated the model parameters in the modified FH model using Empirical Best Linear Unbiased Prediction (EBLUP). Empirical and hierarchical Bayes estimators were obtained by \cite{2009_Maples}, \cite{2012_Dass}, and \cite{2014_Maiti}, and by \cite{2006_You&Chapman} and \cite{2017_Sukasawa}, respectively.

Under a Bayesian framework, several Inverse-Gamma priors for $\sigma_i^2$ have been proposed. For instance, \cite{2006_You&Chapman} suggested the use of $\sigma_i^2\sim{\rm IG}(a_i,b_i)$ where ${\rm IG}(a_i,b_i)$ denotes an Inverse-Gamma distribution with shape parameter $a_i$ and scale parameter $b_i$. In \cite{2006_You&Chapman}
 the values of $a_i$ and $b_i$ are chosen to be very small  to reflect vague knowledge on $\sigma_i^2$. Similarly, \cite{2021_You} recommended the use of a non-informative Inverse-Gamma prior for $\sigma_i^2$ where the hyperparameter values of $a_i$ and $b_i$ are set to 0.0001. 
\cite{2009_Maples} proposed incorporating the GVF method in the prior distribution of $\sigma_i^2$ to smooth out the variability of the DVEs in areas where the precision of those estimates is lacking. To this end, \cite{2009_Maples} proposed the use of an Inverse-Gamma prior for $\sigma_i^2$ where the precision is given by $\alpha$ and the mean is an Exp-GVF, i.e., $\sigma_i^2\sim{\rm IG}(\alpha+1,\alpha\exp(\boldsymbol{z}_i^\top\boldsymbol{\eta}))$. Importantly, by using the prior in \cite{2009_Maples} the model can shrink the DVE toward the Exp-GVF.


Other prior choices for $\sigma_i^2$ include the proposals of \cite{2014_Maiti}  and \cite{2017_Sukasawa}. Specifically, \cite{2017_Sukasawa} considered two 
alternatives of Inverse-Gamma priors for $\sigma_i^2$. In the first alternative, the shape and scale parameters are $a_i$ and $b_i\gamma$ where $a_i=2$ and $b_i=1/n_i$ respectively. The second proposal in \cite{2017_Sukasawa} considers that the Exp-GVF is included in the scale of the Inverse Gamma prior, i.e., $\sigma_i^2\sim{\rm IG}(a_i,b_i\gamma\exp(\boldsymbol{z}_i^\top\boldsymbol{\eta}))$. The proposed models in \cite{2017_Sukasawa} are an extension of those in \cite{2014_Maiti}. However, fully Bayesian inference is used  in \cite{2017_Sukasawa}. 
The results in \cite{2017_Sukasawa} suggested that their proposed priors for $\sigma_i^2$ can improve the posterior estimates of $\theta_i$ and $\sigma_i^2$ compared to those obtained with the prior proposed by \cite{2006_You&Chapman}. 


An important feature of the proposed models in \cite{2009_Maples} and \cite{2017_Sukasawa} is that the posterior mean of $\sigma_i^2$  is a shrinkage estimator obtained as a weighted average of $D_i$ and the prior mean of $\sigma_i^2$.  
In practice, when the DVE is unreliable, shrinkage toward the prior mean should be produced. However, while the modified FH models in \cite{2009_Maples} and \cite{2017_Sukasawa} yield shrinkage estimators for $\sigma_i^2$, they 
are not designed to capture heterogeneity across areas and to allow shrinkage toward a regression fit obtained via the GVF method in a data-driven manner. 
To improve the models' ability to adapt to heterogeneity and sparsity patterns, both Global and Local parameters should be incorporated in the prior of $\sigma_i^2$. 



In the Bayesian literature, Global-Local (GL) shrinkage priors have been extensively studied. Particularly, under Normal prior distributions, several local shrinkage priors for the variance (or scale) have been proposed. Those local priors include the heavy-tailed exponential prior distribution of \cite{2008_Park&Casella} which leads to the Laplace prior, and the heavy-tailed polynomial Beta Prime prior distribution which leads to the Horseshoe prior proposed by \cite{2010_Carvalho}. Although both exponential and polynomial priors provide attractive shrinkage properties, the Horseshoe prior is better suited for sparse problems. 
In the SAE context, the use of GL priors was originally proposed by \cite{2018_TangGhosh}. \cite{2018_TangGhosh} showed both theoretically and empirically that  GL priors for random effects in the FH model can improve the precision of small area estimates in SAE.

Recently, \cite{2024_Hamura} developed a new Bayesian model using GL shrinkage priors for positive-valued parameters 
under Gamma sampling models. \cite{2024_Hamura} proposed including both global ($\alpha$) and local ($\omega_i$) parameters in an Inverse-Gamma distribution to address sparsity in positive-valued data. Specifically, \cite{2024_Hamura} considered 
that the data is Gamma distributed with shape $\delta_i$ and rate $\delta_i/\varrho_i$
and proposed an Inverse-Gamma  prior for $\varrho_i$ with GL parameters, i.e., $\varrho_i\sim{\rm IG}(\alpha\omega_i+1,\alpha\omega_i\lambda)$, where 
the prior mean of $\rho_i$ is $\lambda$. Under Gamma models, the GL priors induce ``sparsity" in the sense that most of the parameters $\varrho_i$ are concentrated around the grand mean $\lambda$ with only a small fraction deviating from it. \cite{2024_Hamura} showed two fundamental  properties under GL priors: i.) tail-robustness for large 
signals and ii.) Kullback-Leibler super-efficiency under sparse settings. Due to the limited number of studies on GL priors 
under Gamma models, the theoretical findings obtained by \cite{2024_Hamura} are illuminating in the Bayesian literature.


In this paper, we propose a new Bayesian model by considering  the modified FH model and GL priors to improve the estimation of the sampling variances in SAE. Specifically, the proposed model uses GL parameters in the prior of $\sigma_i^2$  
to improve the level of shrinkage toward the Exp-GVF.
In our proposal, the considered covariates in the GVF model are $\log(n_i)$ and other auxiliary variables obtained from administrative sources or population censuses. For the local parameter, we study polynomial-tailed and exponential-tailed priors. The contributions of this work are threefold. First, theoretically,  we examine properties of the marginal prior distributions of $\sigma_i^2$ under both local polynomial-tailed and exponential-tailed priors to assess which prior is more suitable for our model.
Furthermore, we study the posterior mean of $\sigma_i^2$ and posterior concentration of the posterior shrinkage factor under the proposed model. 
Second, computationally, we propose adaptive Markov chain Monte Carlo (MCMC) algorithms to address mixing and convergence issues
arising from conditional distributions involving Gamma functions. Third, practically, we demonstrate the usefulness of the proposed model through simulated data and two real SAE applications.

The work is structured as follows. In Section \ref{sec:proposed_model}, we introduce the proposed GL shrinkage prior for the sampling variances in SAE.
We present theoretical properties of the marginal priors of $\sigma_i^2$ 
and a theoretical result for the posterior concentration of the shrinkage factor. MCMC algorithms for the proposed and existing Bayesian SAE models are given in Section \ref{sec:computation}. 
Also, we provide a theoretical guarantee that
the posterior distribution of the proposed model is proper under some mild conditions. 
In Section \ref{sec:simulation}, we consider simulation studies to compare the performance of the proposed model with the other existing models in the SAE literature. In Section \ref{sec:application}, we implement our proposed model in the famous  Corn data set studied previously by researchers, e.g., \cite{2006_You&Chapman} and \cite{2017_Sukasawa}. Also, we consider a real SAE  application that, to the best of our knowledge, has not previously been explored in the SAE literature. To this end, we apply our proposal to estimate the Prevalence of the Educational Attainment Index--At Least High School (PEAI--AHS) at the municipality levels in Colombia. Section \ref{sec:conclusions} contains the conclusions of our work. Proofs, additional algorithms, and further details of the applications are provided in the supplementary material.

\section{Proposed model}\label{sec:proposed_model}

\subsection{Global-Local shrinkage prior for the sampling variances}\label{subsec:GL_shrinkage_prior}

In this work, we propose the use of an Inverse-Gamma prior for $\sigma_i^2$ with GL parameters in the modified FH model defined by (\ref{FH_model}) and (\ref{gamma_model_Di}) as follows:

\begin{align}\label{proposed_GL_shrinkage_IG_prior}
    \sigma_i^2\mid \omega_i,\alpha,\boldsymbol{z}_i^\top\boldsymbol{\eta}&\sim{\rm IG}(\alpha \omega_i+1,\alpha \omega_i\exp(\boldsymbol{z}_i^{\top}\boldsymbol{\eta})),\quad i=1,...,m,
\end{align}

where $\alpha$ and $\omega_i$ are global and local parameters, respectively, and $\boldsymbol{z}_i$ is a vector of covariates used in the  Exp-GVF. 
The GL shrinkage prior (\ref{proposed_GL_shrinkage_IG_prior}) assumes that the sampling variances are centered around the Exp-GVF, i.e., $\E(\sigma_i^2)=\exp(\boldsymbol{z}_i^{\top}\boldsymbol{\eta})$. Since ${\rm Var}(\sigma_i^2)=(\exp(\boldsymbol{z}_i^{\top}\boldsymbol{\eta}))^2/(\alpha\omega_i-1)$ if $\alpha\omega_i>1$, both $\alpha$ and $\omega_i$ affect the variance of $\sigma_i^2$  and therefore the shrinkage toward $\exp(\boldsymbol{z}_i^{\top}\boldsymbol{\eta})$.
Larger values of both parameters imply stronger shrinkage around the Exp-GVF.
In particular, the global parameter $\alpha$ controls the overall shrinkage whereas the local parameter $\omega_i$ 
determines the area-specific shrinkage toward the Exp-GVF.

As mentioned, \cite{2024_Hamura} considered GL priors  to improve robustness in Gamma models. Although our work shares some similarities with the work of \cite{2024_Hamura}, the two approaches are substantially different. In \cite{2024_Hamura} the proposed GL prior aims to produce shrinkage around the grand mean mainly because their work was not developed within a SAE framework but instead under Gamma models. Specifically, for Gamma models in \cite{2024_Hamura} sparsity is induced around the grand mean whereas in this work sparsity is produced around the Exp-GVF under the SAE context. More importantly, our theoretical findings are related to the development of shrinkage estimators with GL priors for $\sigma_i^2$ in SAE while in \cite{2024_Hamura} the robustness of GL priors for Gamma models was studied in detail.
To the best of our knowledge, the use of the GL prior given by (\ref{proposed_GL_shrinkage_IG_prior}) for the sampling variances under the modified FH model has not previously been studied in the context of SAE. 
Although our theoretical findings and proposed methodology are general and can accommodate other types of covariates, we focus on the logarithm of the sample size.
Specifically, following \cite{2023_You&Hidiroglou} in this work we consider  $\boldsymbol{z}_i=(1,~\log(n_i))^\top$ in the Exp-GVF. Furthermore, in Section \ref{subsec:PEAI-AHS}, we demonstrate the use of an additional auxiliary variable together with $\log(n_i)$ in a real SAE problem.

\subsection{Marginal prior for $\sigma_i^2$ and shrinkage factors under the proposed SAE model}\label{subsec:GL_shrinkage_prior_2}

\cite{2024_Hamura} explored the behavior of the marginal prior for $\sigma_i^2$ according to the local parameter $\omega_i$ under Gamma models. 
To this end, \cite{2024_Hamura} stated three desirable properties for the marginal prior distribution of $\sigma_i^2$. We adapt those properties considered by \cite{2024_Hamura} to our proposed framework and present them in Properties \ref{propertiesH}. 
The first property (P1) is that the marginal prior $p(\sigma_i^2\mid\alpha,\boldsymbol{z}_i^\top\boldsymbol{\eta})$ of $\sigma_i^2$ \textit{induces non-shrinkage at the origin}.
The second and third properties, (P2) and (P3), are that $p(\sigma_i^2\mid\alpha,\boldsymbol{z}_i^\top\boldsymbol{\eta})$ \textit{has a heavy right tail} and \textit{a spike at the Exp-GVF}, respectively.

\begin{property}[] \label{propertiesH}
Suppose that the GL prior $\pi(\sigma_i^2\mid\omega_i,\alpha,\boldsymbol{z}_i^\top\boldsymbol{\eta})$ in (\ref{proposed_GL_shrinkage_IG_prior}) is assumed for $\sigma_i^2$ and that a proper prior $\pi(\omega_i)$ for the local parameter is given by $\pi(\omega_i)>0$ and $\int_0^\infty\pi(\omega_i)d\omega_i<\infty$. Let $p(\sigma_i^2\mid \alpha,\boldsymbol{z}_i^\top\boldsymbol{\eta})=\int_0^\infty\pi(\sigma_i^2\mid\omega_i,\alpha,\boldsymbol{z}_i^\top\boldsymbol{\eta})\pi(\omega_i)~d\omega_i$ denote the marginal prior of $\sigma_i^2$. In the context of shrinkage priors under the Gamma sampling model given by (\ref{gamma_model_Di}), we define the following three desirable properties for $p(\sigma_i^2\mid \alpha,\boldsymbol{z}_i^\top\boldsymbol{\eta})$:

\begin{enumerate}[
  labelwidth=1cm, labelsep=0.5em, leftmargin=!]
    \item[{\rm(P1)}] Non-shrinkage at the origin.
    
    As $\sigma_i^2\to0$, the marginal prior of $\sigma_i^2$ decays to zero, i.e.,
    \begin{align*}
        p(\sigma_i^2\mid\alpha,\boldsymbol{z}_i^\top\boldsymbol{\eta})\to0.
    \end{align*}
    
    \item[{\rm (P2)}] A heavy right tail. 
    
    As $\sigma_i^2\to\infty$, the marginal prior of $\sigma_i^2$ decays to zero at a polynomial rate, i.e.,
    \begin{align*}
        p(\sigma_i^2\mid\alpha,\boldsymbol{z}_i^\top\boldsymbol{\eta})\approx(\sigma_i^2)^{-2}\to0.
    \end{align*}
    
    \item[{\rm (P3)}] A spike at the Exp-GVF (i.e., the prior mean of $\sigma_i^2$). 
    
    As $\sigma_i^2\to\exp(\textbf{z}_i^\top\boldsymbol{\eta})$, the marginal prior of $\sigma_i^2$ puts an infinite probability mass, i.e.,
    \begin{align*}
        p(\sigma_i^2\mid\alpha,\boldsymbol{z}_i^\top\boldsymbol{\eta})\to\infty.
    \end{align*}
\end{enumerate}
\end{property}

\subsubsection{Prior choices for the local parameter}

\label{prior_choices}

We evaluate two popular prior distributions in the Bayesian literature for the local parameter $\omega_i$. The first one is the Beta Prime (BP) prior with a polynomial tail and the second one is the Gamma prior  with an exponential tail which includes the exponential prior as a special case. 
We defer discussion of the prior for the global parameter  $\alpha$ to Section \ref{subsec:computation_proposed_model}, focusing in this section on  $\omega_i$.

The BP prior has been studied by several researchers. Three important examples in the Bayesian literature include a particular case of the BP prior called the Horseshoe prior proposed by \cite{2010_Carvalho}, the BP robust prior in \cite{2017_Perez}, and the BP local prior for sparse Bayesian inference under Gamma models proposed by \cite{2024_Hamura}. 
The BP prior with hyperparameters $a,b>0$ is given by

\begin{align}\label{BP_prior}
\pi^{\rm BP}(\omega_i)\propto(\omega_i)^{a-1}(1+\omega_i)^{-(a+b)},\quad \omega_i>0.
\end{align}

Several researchers have also considered the Gamma prior. For instance,  \cite{2010_Griffin&Brown} proposed the use of the Gamma prior for the variance parameters of regression coefficients in Normal linear regression models. The Gamma prior with hyperparameters $c,d>0$ is given by

\begin{align}\label{Gamma_prior}
\pi^{\rm Ga}(\omega_i)\propto(\omega_i)^{c-1}\exp(-d\omega_i),\quad \omega_i>0.       
\end{align}

Proposition 2.1 of \cite{2024_Hamura} shows the behavior of the marginal prior distribution of $\sigma_i^2$ when the BP prior (\ref{BP_prior}) is considered for the local parameter. \cite{2024_Hamura} used the results of their Proposition 2.1 to determine the hyperparameters, $a$ and $b$, of the BP prior so that all three properties of the marginal prior similarly described in Properties \ref{propertiesH} are satisfied.
In Corollary \ref{cor}, we extend the results in Proposition 2.1 of \cite{2024_Hamura} to the context of our proposed SAE model. Parts (i) and (iii) in Corollary \ref{cor} show that the properties (P1) and (P3) in Properties \ref{propertiesH} are achieved when the hyperparameters of the BP local prior are $a>1$ and $b\leq1/2$ respectively. According to the formulation of Part (ii) in Corollary \ref{cor}, the marginal prior of $\sigma_i^2$ under the BP local prior exhibits a right-tail behavior proportional to $(\sigma_i^2)^{-2}$ up to a logarithmic factor $(\log\sigma_i^2)^{-(a+1)}$. 
Thus, the distribution  $p^{\rm BP}(\sigma_i^2\mid\alpha,\boldsymbol{z}_i^\top\boldsymbol{\eta})$  is heavy-tailed
and hence the property (P2) is always achieved under the BP prior regardless of the hyperparameter values. However, since a smaller value of $a$ corresponds to a heavier right tail, the hyperparameter $a$ should be chosen as small as possible. 
In summary, similar to \cite{2024_Hamura}, we also suggest using the hyperparameters $a=2$ and $b=1/2$ for the BP local prior in our proposed model as it satisfies all three desirable properties in Properties \ref{propertiesH}. In Section \ref{shrinkage_factors}, we also discuss the prior elicitation of the hyperparameter $a$ in the BP prior based on the concentration of the posterior distribution of the posterior shrinkage factor.

\begin{cor}\label{cor}
Suppose that the GL prior in (\ref{proposed_GL_shrinkage_IG_prior}) is assumed for $\sigma_i^2$ and that the BP prior  with hyperparameters $a$ and $b$ in (\ref{BP_prior}) is assigned to the local parameter $\omega_i$. Let $p^{\rm BP}(\sigma_i^2\mid\alpha,\boldsymbol{z}_i^\top\boldsymbol{\eta})$ denote the marginal prior of $\sigma_i^2$ under the BP prior. Then it satisfies the following properties:

\begin{enumerate}
    \item[{\rm (i)}] As $\sigma_i^2\to0$,
    \begin{align*}
        p^{\rm BP}(\sigma_i^2\mid\alpha,\boldsymbol{z}_i^\top\boldsymbol{\eta})\propto\frac{(\sigma_i^2)^{a-1}}{(\alpha\exp(\boldsymbol{z}_i^\top\boldsymbol{\eta}))^a}.
    \end{align*}
    
    \item[{\rm (ii)}] As $\sigma_i^2\to\infty$,
    \begin{align*}
        p^{\rm BP}(\sigma_i^2\mid\alpha,\boldsymbol{z}_i^\top\boldsymbol{\eta})\propto\alpha\exp(\boldsymbol{z}_i^\top\boldsymbol{\eta})(\sigma_i^2)^{-2}\left(\frac{1}{\log(\sigma_i^2/\exp(\boldsymbol{z}_i^\top\boldsymbol{\eta}))^\alpha}\right)^{a+1}\to0.
    \end{align*}
    
    \item[{\rm (iii)}] As $\sigma_i^2\to\exp(\boldsymbol{z}_i^\top\boldsymbol{\eta})$,
    $$
    p^{\rm BP}(\sigma_i^2\mid\alpha,\boldsymbol{z}_i^\top\boldsymbol{\eta})\to\begin{cases}
    C\sqrt{\alpha}/\exp(\boldsymbol{z}_i^\top\boldsymbol{\eta})<\infty & \text{if}~b>1/2,\\
        \infty & \text{if}~b\leq1/2
    \end{cases}
    $$

    where $C$ is some finite positive constant depending on the values of $a$ and $b$.
\end{enumerate}
\end{cor}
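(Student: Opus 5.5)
The plan is to reduce the corollary to Proposition 2.1 of \cite{2024_Hamura} by a change of variables. Write $\lambda_i=\exp(\boldsymbol{z}_i^\top\boldsymbol{\eta})$ and set $t=\sigma_i^2/\lambda_i$. The inverse-gamma density in (\ref{proposed_GL_shrinkage_IG_prior}) is
\begin{align*}
\pi(\sigma_i^2\mid\omega_i,\alpha,\lambda_i)=\frac{(\alpha\omega_i\lambda_i)^{\alpha\omega_i+1}}{\Gamma(\alpha\omega_i+1)}(\sigma_i^2)^{-\alpha\omega_i-2}\exp\!\left(-\frac{\alpha\omega_i\lambda_i}{\sigma_i^2}\right).
\end{align*}
It can be rewritten as $\lambda_i^{-1}g(t;\alpha\omega_i)$, where $g(t;k)=k^{k+1}t^{-k-2}e^{-k/t}/\Gamma(k+1)$. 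Hence
\begin{align*}
p^{\rm BP}(\sigma_i^2\mid\alpha,\lambda_i)=\lambda_i^{-1}\int_0^\infty g(t;\alpha\omega)\pi^{\rm BP}(\omega)\,d\omega .
\end{align*}
This is exactly the marginal in \cite{2024_Hamura} with grand mean $\lambda=\lambda_i$. So all three limits can be read from their Proposition 2.1 with $\lambda$ replaced by the Exp-GVF. The remaining work is to track how the constants involve $\alpha$ and $\lambda_i$. I would also sketch each regime directly to confirm those constants.

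For (i), let $t\to0$. The factor $e^{-\alpha\omega/t}$ forces the mass onto $\omega=O(t)$. On that range $\pi^{\rm BP}(\omega)\approx\omega^{a-1}/B(a,b)$ and $\Gamma(\alpha\omega+1)\to1$. Substitute $s=\alpha\omega/t$; then $(\alpha\omega)^{\alpha\omega+1}t^{-\alpha\omega-2}=s\,t^{-1}(s)^{\alpha\omega}\to s\,t^{-1}$. The integral becomes approximately $t^{-1}(t/\alpha)^{a}\int_0^\infty s^{a}e^{-s}\,ds$, which is of order $t^{a-1}\alpha^{-a}$. Multiplying by $\lambda_i^{-1}$ and writing $t=\sigma_i^2/\lambda_i$ gives $(\sigma_i^2)^{a-1}/(\alpha\lambda_i)^{a}$, as stated in (i). The dominated-convergence justification uses only the bound $\Gamma(k+1)\ge c>0$ together with the exponential factor.

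For (ii), let $t\to\infty$. Write $g(t;k)=t^{-2}\exp\{-k\log t+(k+1)\log k-\log\Gamma(k+1)-k/t\}$. Because of $e^{-k\log t}$, only $\omega\lesssim1/(\alpha\log t)$ contributes. On that range $g\approx t^{-2}\alpha\omega\,e^{-\alpha\omega\log t}$, so the integral is approximately $t^{-2}\alpha\int\omega^{a}e^{-\alpha\omega\log t}\,d\omega$. This is of order $\alpha t^{-2}(\alpha\log t)^{-(a+1)}$, which gives the $(\log(\sigma_i^2/\lambda_i))^{-(a+1)}$ factor with the $\alpha$-power as displayed, and the prefactor $\lambda_i$ comes from $\lambda_i^{-1}t^{-2}$. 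The Stirling-type control of $(k+1)\log k-\log\Gamma(k+1)$ for small $k$ is routine.

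Part (iii) is the main obstacle. At $t=1$, $g(1;k)=k^{k+1}e^{-k}/\Gamma(k+1)$, and Stirling gives $g(1;k)\sim\sqrt{k/(2\pi)}$ as $k\to\infty$. So $p^{\rm BP}$ at the spike is finite if and only if $\int^\infty\sqrt{\alpha\omega}\,\omega^{-b-1}\,d\omega<\infty$, that is, if and only if $b>1/2$. In that case the value is $C\sqrt{\alpha}/\lambda_i$ with $C=\int\sqrt{\omega/2\pi}\,\pi^{\rm BP}(\omega)\,d\omega$ (up to the small-$\omega$ correction). For $t\to1$ with $b\le1/2$, I would lower-bound $g(t;k)\gtrsim\sqrt{k}$ on the window $k\le(\log t)^{-2}$ and apply monotone convergence (Fatou) to get divergence. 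For $b>1/2$, I would use the uniform bound $g(t;k)\lesssim\sqrt{k}+1$ and dominated convergence to get continuity. The delicate point is making this Stirling-based bound uniform in $t$ near $1$.
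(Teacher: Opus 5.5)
Your reduction is correct and is in substance the paper's route. Conditional on $\boldsymbol{\eta}$, the marginal of $\sigma_i^2$ is exactly the marginal of \cite{2024_Hamura} with $\lambda$ replaced by $\lambda_i=\exp(\boldsymbol{z}_i^\top\boldsymbol{\eta})$. Your scaling identity $p^{\rm BP}=\lambda_i^{-1}\int g(t;\alpha\omega)\pi^{\rm BP}(\omega)\,d\omega$ is right, and your constants in (i) and (ii), namely $\Gamma(a+1)/B(a,b)=a\Gamma(a+b)/\Gamma(b)$, match the paper's.

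The paper organizes the direct check around one device you did not use. Set $h(t)=t^{-1}-1+\log t\ge 0$. Then $g(t;k)=t^{-2}g(1;k)e^{-kh(t)}$. The paper puts $\xi_i=\alpha h(t)$, so that
\[
p^{\rm BP}=\alpha\lambda_i(\sigma_i^2)^{-2}\int_0^\infty \omega e^{-\xi_i\omega}f(\omega,\alpha)\pi^{\rm BP}(\omega)\,d\omega .
\]
With this, (i) and (ii) are the single limit $\xi_i\to\infty$, with $\xi_i\sim\alpha/t$ and $\xi_i\sim\alpha\log t$ respectively. Part (iii) is the limit $\xi_i\downarrow 0$.

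\textbf{A step that fails as written.} In (i), dominated convergence cannot rest only on $\Gamma(k+1)\ge c>0$. After $s=\alpha\omega/t$ the integrand carries the factor $s^{st+1}e^{-s}$. For fixed small $t$ this is not even integrable: it grows without bound once $s>e^{1/t}$. So no dominating function exists, and the superexponential growth of $k^{k+1}$ must be cancelled by the growth of $\Gamma(k+1)$.

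\textbf{The fix.} Use Stirling's lower bound $\Gamma(k)\ge\sqrt{2\pi}\,k^{k-1/2}e^{-k}$, i.e.\ $g(1;k)\le\sqrt{k/(2\pi)}$. This gives $g(t;k)\le t^{-2}\sqrt{k/(2\pi)}\,e^{-kh(t)}$. Once $h(t)\ge 1$, the region $\omega\ge 1$ then contributes at most a constant times $t^{-2}e^{-\alpha h(t)}$, which is negligible in both regimes. Dominated convergence on $\omega<1$ then goes through. In the paper this is done via the bound $\sup_{u\in(0,\alpha)}u^u/\Gamma(u+1)<\infty$ after rescaling $\omega\mapsto\omega/\xi_i$. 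Splitting off $\omega\ge1$ is precisely what the paper imports from Hamura's bound (\ref{pf_bound_hamura}).

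Your claim in (ii), that ``only $\omega\lesssim 1/(\alpha\log t)$ contributes,'' silently needs the same control. Since $k^{k+1}/\Gamma(k+1)\approx\sqrt{k/(2\pi)}\,e^{k}$, the true decay rate in $k$ is $h(t)$, not $\log t$.

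\textbf{Part (iii).} The factorization also removes your ``delicate point.'' As $t\to1$ from either side, $h(t)\downarrow 0$, so $e^{-\alpha\omega h(t)}\uparrow 1$. Monotone convergence then gives $p^{\rm BP}\to\lambda_i^{-1}\int g(1;\alpha\omega)\pi^{\rm BP}(\omega)\,d\omega$ in every case. By your Stirling asymptotics this limit is infinite if and only if $b\le 1/2$, so no window argument is needed. The paper treats $b<1/2$ separately, and there it also gets the blow-up rate $\xi_i^{-(1/2-b)}$.

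For $b>1/2$, the exact inequality $g(1;k)\le\sqrt{k/(2\pi)}$ makes your ``up to the small-$\omega$ correction'' precise. It gives the bound $C\sqrt{\alpha}/\lambda_i$ with exactly your constant, $C=\Gamma(a+1/2)\Gamma(b-1/2)/\{\sqrt{2\pi}\,\Gamma(a)\Gamma(b)\}$. This upper bound, not an equality, is all the paper actually proves.
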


\begin{proof}
See Appendix \ref{proof_Corollary_2_1}.
\end{proof}

We next investigate whether the Gamma local prior also satisfies all the properties listed in Properties  \ref{propertiesH}, potentially making it another suitable prior for $\omega_i$.
Proposition \ref{proposition} demonstrates the behavior of the marginal prior of $\sigma_i^2$ when the Gamma prior (\ref{Gamma_prior}) is used for the local parameter. 
Similar to Corollary \ref{cor}, Part (i) in our Proposition \ref{proposition} shows that the property (P1) is satisfied when the shape of the Gamma prior is $c>1$. Thus, Part (i)  implies that the exponential prior, which is a special case of the Gamma prior when $c=1$, always fails to achieve the property (P1). Similar to Corollary \ref{cor}, Part (ii) in Proposition \ref{proposition} also indicates that the right tail of $p^{\rm Ga}(\sigma_i^2\mid\alpha,\boldsymbol{z}_i^\top\boldsymbol{\eta})$ is sufficiently heavy. Hence, the property (P2) is always satisfied under the Gamma local prior with a small value of the hyperparameter $c$. 
Nevertheless, according to Part (iii) in Proposition \ref{proposition}, the property  (P3) which  is arguably the most important property is impossible to achieve under the Gamma prior. 
Unlike the BP prior, the Gamma prior cannot produce a spike at the Exp-GVF for any value of the hyperparameter $d$. Since the Gamma local prior lacks the ability to shrink the sampling variance $\sigma_i^2$ toward the Exp-GVF, we do not adopt this choice in our work and therefore investigate the use of the BP local prior with the hyperparameters $(a,b)=(2,1/2)$ for the proposed GL shrinkage prior (\ref{proposed_GL_shrinkage_IG_prior}).

\begin{prop}\label{proposition}
Suppose that the GL prior in (\ref{proposed_GL_shrinkage_IG_prior}) is assumed for $\sigma_i^2$ and that the Gamma prior with hyperparameters $c$ and $d$ in (\ref{Gamma_prior}) is assigned to the local parameter $\omega_i$. Let $p^{\rm Ga}(\sigma_i^2\mid\alpha,\boldsymbol{z}_i^\top\boldsymbol{\eta})$ denote the marginal prior of $\sigma_i^2$ under the Gamma prior. Then it satisfies the following properties:
\begin{enumerate}
    \item[{\rm (i)}] As $\sigma_i^2\to0$,
    $$
    p^{\rm Ga}(\sigma_i^2\mid\alpha,\boldsymbol{z}_i^\top\boldsymbol{\eta})\propto\frac{(\sigma_i^2)^{c-1}}{(\alpha\exp(\boldsymbol{z}_i^\top\boldsymbol{\eta}))^c}. 
    $$
    
    \item[{\rm (ii)}] As $\sigma_i^2\to\infty$,
    \begin{align*}
        p^{\rm Ga}(\sigma_i^2\mid\alpha,\boldsymbol{z}_i^\top\boldsymbol{\eta})\propto\alpha\exp(\boldsymbol{z}_i^\top\boldsymbol{\eta})(\sigma_i^2)^{-2}\left(\frac{1}{\log(\sigma_i^2/\exp(\boldsymbol{z}_i^\top\boldsymbol{\eta}))^\alpha}\right)^{c+1}\to0.
    \end{align*}
    
    \item[{\rm (iii)}] As $\sigma_i^2\to\exp(\boldsymbol{z}_i^\top\boldsymbol{\eta})$,
    $$
    p^{\rm Ga}(\sigma_i^2\mid\alpha,\boldsymbol{z}_i^\top\boldsymbol{\eta})\to
    C'\sqrt{\alpha}/\exp(\boldsymbol{z}_i^\top\boldsymbol{\eta})<\infty, 
    $$

    where $C'$ is some finite positive constant depending on the values of $c$ and $d$.
\end{enumerate}
\end{prop}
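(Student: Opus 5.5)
Write $\mu=\exp(\boldsymbol{z}_i^\top\boldsymbol{\eta})$ and $s=\sigma_i^2/\mu$. The plan is to write the marginal explicitly as a one-dimensional integral over $\omega=\omega_i$ and analyse it with Laplace-type arguments, mirroring the proof of Corollary~\ref{cor} but with the Beta Prime prior replaced by the Gamma density $\pi^{\rm Ga}(\omega)\propto\omega^{c-1}e^{-d\omega}$. With $k=\alpha\omega$, the Inverse-Gamma density gives
\begin{align*}
p^{\rm Ga}(\sigma_i^2\mid\alpha,\boldsymbol{z}_i^\top\boldsymbol{\eta})
\propto\frac{1}{\sigma_i^2}\int_0^\infty \frac{k^{k+1}}{\Gamma(k+1)}\, s^{-k-1}e^{-k/s}\,\omega^{c-1}e^{-d\omega}\,d\omega ,
\end{align*}
where $k^{k+1}/\Gamma(k+1)$ comes from $(\alpha\omega\mu)^{\alpha\omega+1}/\Gamma(\alpha\omega+1)$ after factoring out $\mu$. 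I will use Stirling's formula in the form $k^{k+1}/\Gamma(k+1)\asymp \sqrt{k}\,e^{k}$ for large $k$; the ratio is bounded for small $k$. This reduces the integrand to roughly
\begin{align*}
\sqrt{\alpha\omega}\;\omega^{c-1}\exp\{-\alpha\omega\,h(s)-d\omega\}, \qquad h(s)=1/s+\log s-1\ge 0,
\end{align*}
with equality only at $s=1$.

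\textbf{Part (iii): $s\to1$.} Here $h(s)\to0$, so the exponent tends to $-d\omega$. Dominated convergence, with dominating function $\sqrt{\alpha\omega}\,\omega^{c-1}e^{-d\omega}$, is the key step. This function is integrable because $d>0$, and that is exactly what distinguishes this case from the Beta Prime case with $b\le 1/2$, where the polynomial tail makes the limit infinite. The limit is therefore $\sqrt{\alpha}\,\Gamma(c+1/2)d^{-(c+1/2)}$ times the normalising constant. Restoring the factor $1/\sigma_i^2\to1/\mu$ gives $C'\sqrt{\alpha}/\mu$ with $C'$ depending only on $(c,d)$.

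I expect the uniform control near $\omega\to0$ to be the main obstacle. There Stirling is invalid, so I will bound $k^{k+1}/\Gamma(k+1)$ by a constant on $[0,1]$ and handle that region separately.

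\textbf{Part (i): $s\to0$.} The factor $e^{-k/s}$ forces $\omega$ to be of order $s/\alpha$, so I substitute $\omega=s t/\alpha$. In this regime $k$ is small, so $k^{k+1}/\Gamma(k+1)\approx k$ and $s^{-k}\to1$. The integral then scales as
\begin{align*}
(s/\alpha)^{c}\int_0^\infty t^{c}e^{-t}\,dt\cdot s^{-1}.
\end{align*}
Combined with the prefactor $1/\sigma_i^2$ and the $\mu$ factors, this gives $(\sigma_i^2)^{c-1}/(\alpha\mu)^{c}$.

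\textbf{Part (ii): $s\to\infty$.} Now $h(s)\sim\log s$, and the integrand behaves like $s^{-1}\omega^{c}e^{-\alpha\omega\log s}$ for small $\omega$. The relevant $\omega$ is of order $1/(\alpha\log s)$, where $e^{-d\omega}\to1$. The substitution $\omega=t/(\alpha\log s)$ then yields a factor $(\alpha\log s)^{-(c+1)}$, up to $\alpha$-dependent constants. Multiplying by the $1/\sigma_i^2$ and $s^{-1}$ factors gives $\alpha\mu(\sigma_i^2)^{-2}\bigl(\log(\sigma_i^2/\mu)^{\alpha}\bigr)^{-(c+1)}$, matching the form of Corollary~\ref{cor}(ii).
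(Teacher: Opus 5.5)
Your argument is essentially the paper's. Your $\alpha h(s)$ is exactly the paper's $\xi_i$. Both tails come from rescaling $\omega_i$ at scale $1/\xi_i$: the paper obtains (i) and (ii) together from $\int_0^\infty \omega e^{-\xi\omega}f(\omega,\alpha)\pi^{\rm Ga}(\omega)\,d\omega\sim\Gamma(c+1)\pi^{\rm Ga}(1/\xi)/\xi^{2}$ as $\xi\to\infty$, with $\xi\sim\alpha/s$ resp.\ $\xi\sim\alpha\log s$. Part (iii) is monotone/dominated convergence with the bound $\Gamma(x)\ge\sqrt{2\pi}\,x^{x-1/2}e^{-x}$. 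That bound holds for every $x>0$, so it gives $k^{k}e^{-k}/\Gamma(k)\le\sqrt{k/(2\pi)}$ on all of $(0,\infty)$ and removes the small-$\omega$ obstacle you anticipated.

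Your ``scales as'' steps in (i)--(ii) still need a dominating function. It is available: $\sup_{u>0}u^{u}e^{-u}/\Gamma(u+1)=M<\infty$, and $k\,h(s)\ge t/2$ for $s$ near $0$ or $\infty$. After your substitutions and normalisation, these bound the integrand by $M t^{c}e^{-t/2}$.

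Two slips should be fixed.

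\begin{itemize}
\item In (i), $k=st$ contributes a factor $s$ that cancels the $s^{-1}$ from $s^{-k-1}$. The integral is therefore $\Gamma(c+1)(s/\alpha)^{c}$, with no extra $s^{-1}$. As written, your display combined with the prefactor $1/\sigma_i^2$ would give $(\sigma_i^2)^{c-2}$, not the $(\sigma_i^2)^{c-1}$ you state.
\item In (iii), you cannot insert the large-$k$ Stirling form into an integral over all $k=\alpha\omega>0$. The exact limit is $\mu^{-1}\int_0^\infty(\alpha\omega)^{\alpha\omega}e^{-\alpha\omega}\Gamma(\alpha\omega)^{-1}\pi^{\rm Ga}(\omega)\,d\omega$. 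This is finite and at most $\sqrt{\alpha}\,\Gamma(c+1/2)/(\sqrt{2\pi d}\,\Gamma(c)\,\mu)$, but for fixed $\alpha$ it is not exactly $C'(c,d)\sqrt{\alpha}/\mu$. The paper likewise proves only this upper bound, which is all the ``no spike'' conclusion needs.
\end{itemize}
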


\begin{proof}
See Appendix \ref{proof_Proposition_2_1}.
\end{proof}

Figure \ref{fig:marginal_prior_density_sigma2_i} displays the marginal prior distributions of $\sigma_i^2$ under both choices of local priors where we set $\alpha=1$ and $\exp(\boldsymbol{z}_i^\top\boldsymbol{\eta})=1$. The plots in Figure \ref{fig:marginal_prior_density_sigma2_i} for the BP and Gamma priors illustrate the analytical results in Corollary \ref{cor} and Proposition \ref{proposition}, respectively. According to Figure \ref{fig:marginal_prior_density_sigma2_i} (left), the marginal prior of $\sigma_i^2$ decays to zero as $\sigma_i^2$ approaches zero for BP and Gamma local priors 
when $a$ and $c$ are larger than one, respectively. Also, the marginal prior of $\sigma_i^2$ has a spike at $\sigma_i^2=1$ (i.e., the assumed Exp-GVF value) under the BP priors with hyperparameter $b=1/2$.
In contrast, when the Gamma prior is utilized as a local prior, the marginal prior distribution of $\sigma_i^2$ remains finite and does not exhibit a spike at one even with $d = 1/2$.
Figure \ref{fig:marginal_prior_density_sigma2_i} (right) shows that when 
$a$ or $c$ increases, the right tails under both local priors become lighter, consistent with the formulation in Part (ii). 
The Horseshoe prior of \cite{2010_Carvalho}, corresponding to the BP prior with the hyperparameters $(a,b)=(1/2,1/2)$, yields a sufficiently heavy right tail and a spike at the Exp-GVF.
However, the Horseshoe prior produces excessive shrinkage around the origin, which is contrary to the property (P1) in Properties \ref{propertiesH}.

\begin{figure}[ht!]
     \centering
     \begin{subfigure}[b]{0.45\textwidth}
         \centering
         \includegraphics[width=\textwidth]{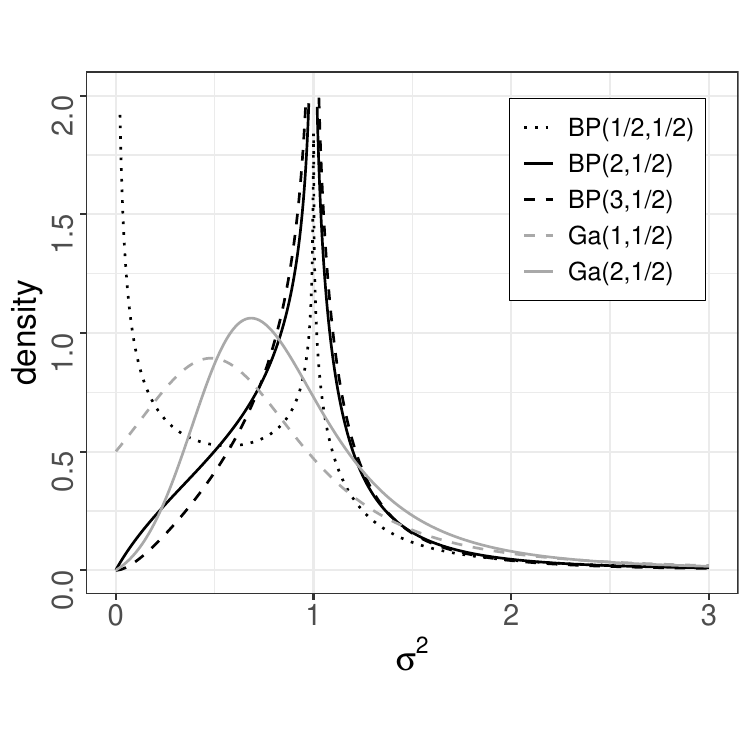}
     \end{subfigure}~
     \begin{subfigure}[b]{0.45\textwidth}
         \centering
         \includegraphics[width=\textwidth]{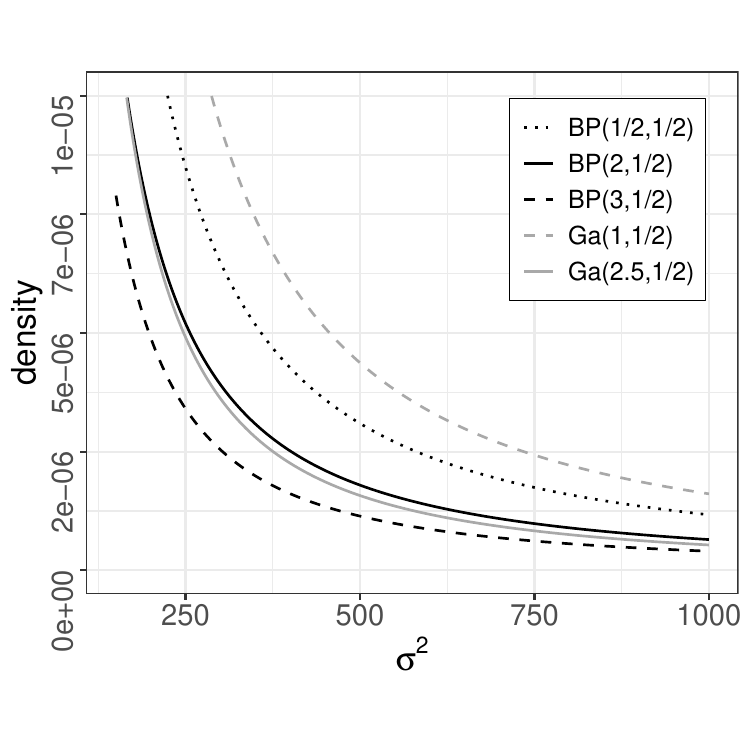}         
     \end{subfigure}
     \vspace{-0.6cm}
        \caption{Marginal prior distributions of $\sigma_i^2$ under the Beta Prime (BP) and Gamma local priors. The right figure focuses on the tail region. In Figure \ref{fig:marginal_prior_density_sigma2_i}, $\alpha=1$ and $\exp(\boldsymbol{z}_i^\top\boldsymbol{\eta})=1$.}
        \label{fig:marginal_prior_density_sigma2_i}
\end{figure}

We now illustrate the importance of including the local parameter $\omega_i$ in the proposed model. As described in Section \ref{sec:introduction}, our proposed model shares some similarities with the frequentist model of \cite{2009_Maples}. Specifically, both models incorporate the Exp-GVF to model the sampling variances $\sigma_i^2$. 
However, our model includes both global and local parameters 
in the prior of $\sigma_i^2$. In addition, in contrast to the model in \cite{2009_Maples},
our proposed model is formulated within a fully Bayesian framework. To display the differences  we compare the behavior of the marginal distribution of $D_i$ obtained under each model. 
The plots in Figure \ref{fig_marginal_D} show that including the parameter $\omega_i$  induces heavy-tailed robustness for large $D_i$ and strong shrinkage of $D_i$ toward the Exp-GVF.
Specifically, according to the plots in Figure \ref{fig_marginal_D} (top) the marginal distributions of $D_i$ under the proposed GL priors are more concentrated around the Exp-GVF 
than those under the model proposed by \cite{2009_Maples}
as the sample size $n_i$ increases.
The plots in Figure \ref{fig_marginal_D} (bottom)  also show that the differences in heavy-tailed behavior increase as the parameter $\alpha$ increases.


\begin{figure}[ht!]
     \centering
     \begin{subfigure}[b]{0.24\textwidth}
         \centering
         \includegraphics[width=\textwidth]{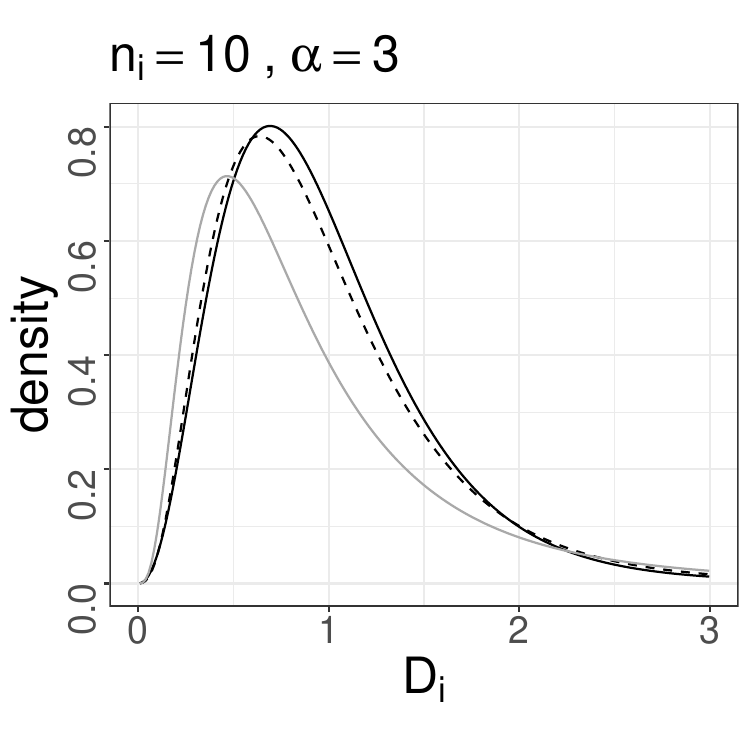}
     \end{subfigure}
     \begin{subfigure}[b]{0.24\textwidth}
         \centering
         \includegraphics[width=\textwidth]{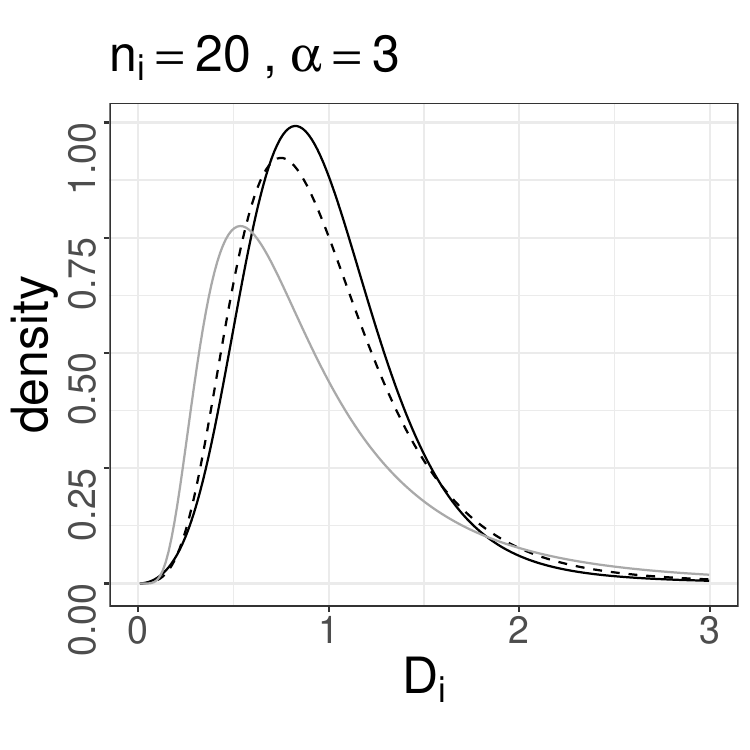}        
     \end{subfigure}
     \begin{subfigure}[b]{0.24\textwidth}
         \centering
         \includegraphics[width=\textwidth]{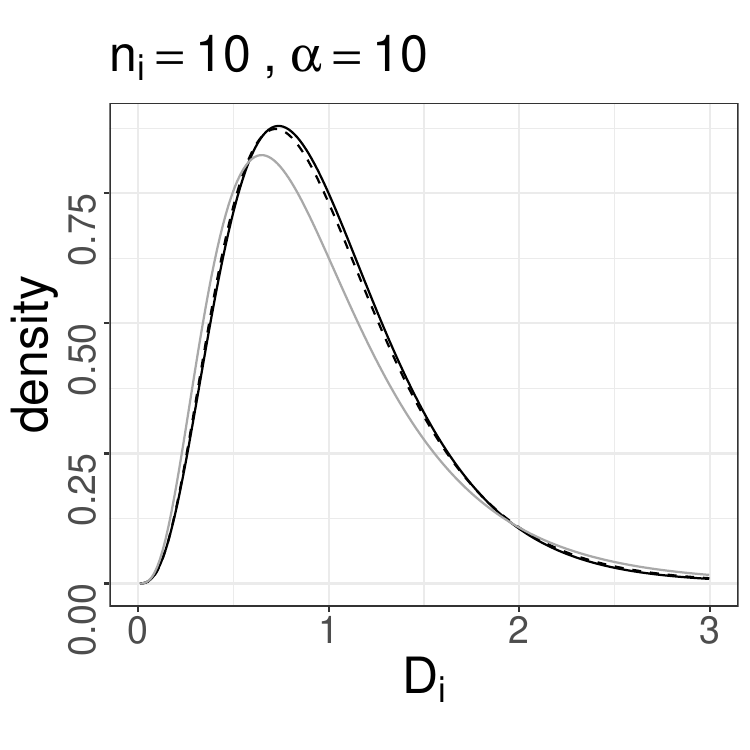}         
     \end{subfigure}
     \begin{subfigure}[b]{0.24\textwidth}
         \centering
         \includegraphics[width=\textwidth]{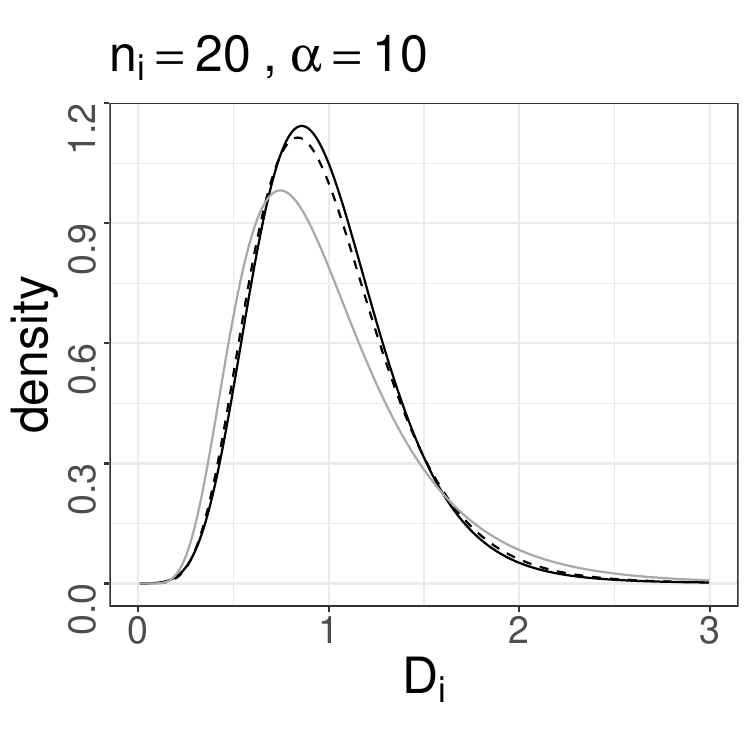}     
     \end{subfigure}
     \begin{subfigure}[b]{0.24\textwidth}
         \centering
         \includegraphics[width=\textwidth]{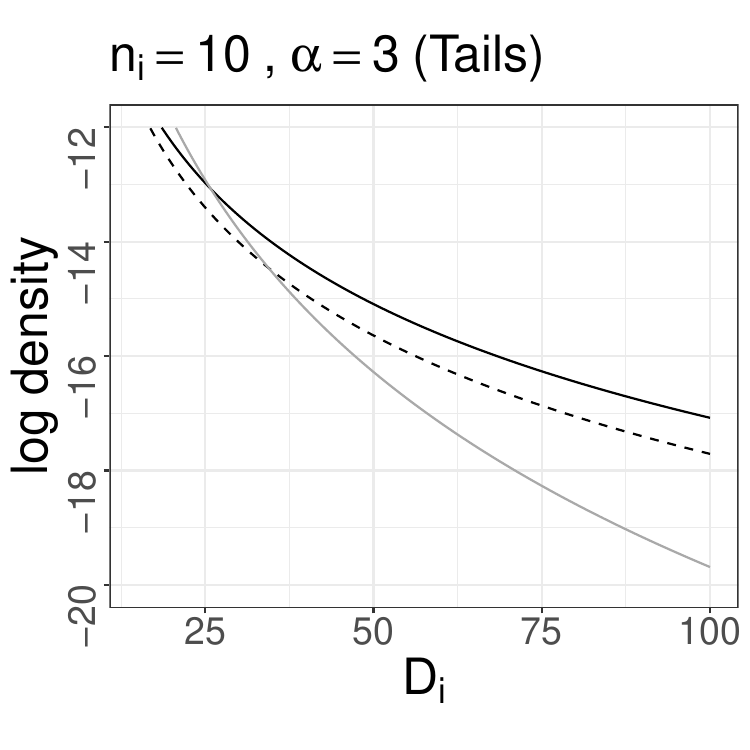}
     \end{subfigure}
     \begin{subfigure}[b]{0.24\textwidth}
         \centering
         \includegraphics[width=\textwidth]{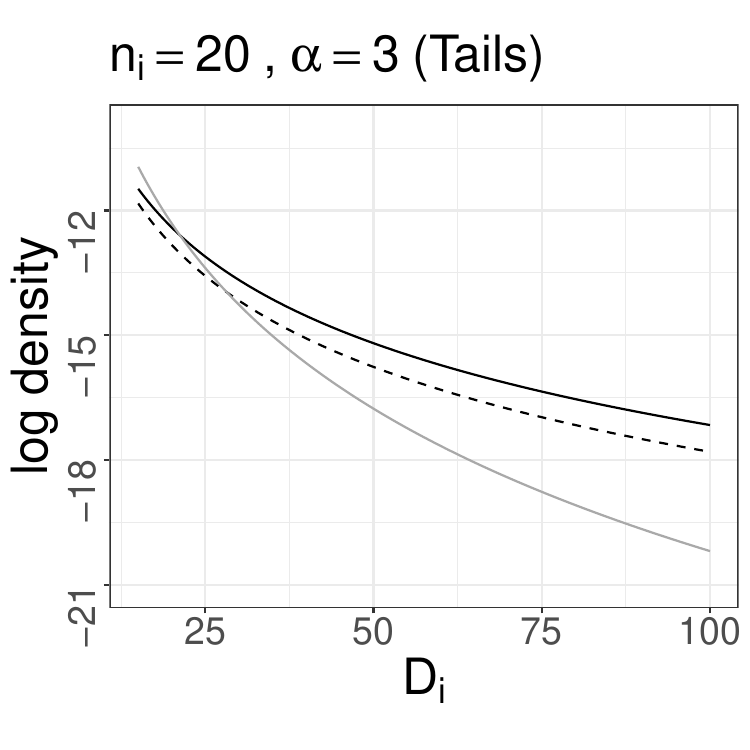}  
     \end{subfigure}
     \begin{subfigure}[b]{0.24\textwidth}
         \centering
         \includegraphics[width=\textwidth]{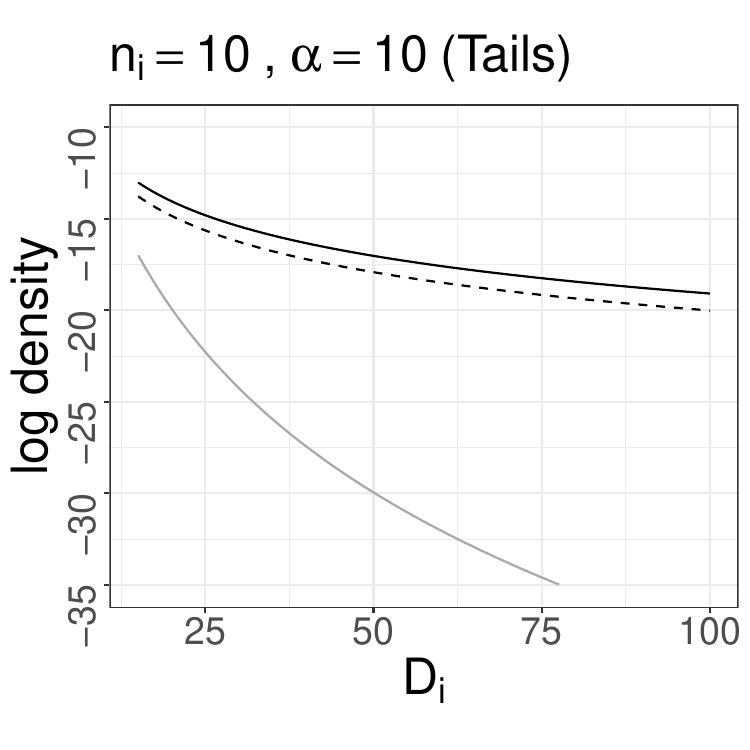}  
     \end{subfigure}
     \begin{subfigure}[b]{0.24\textwidth}
         \centering
         \includegraphics[width=\textwidth]{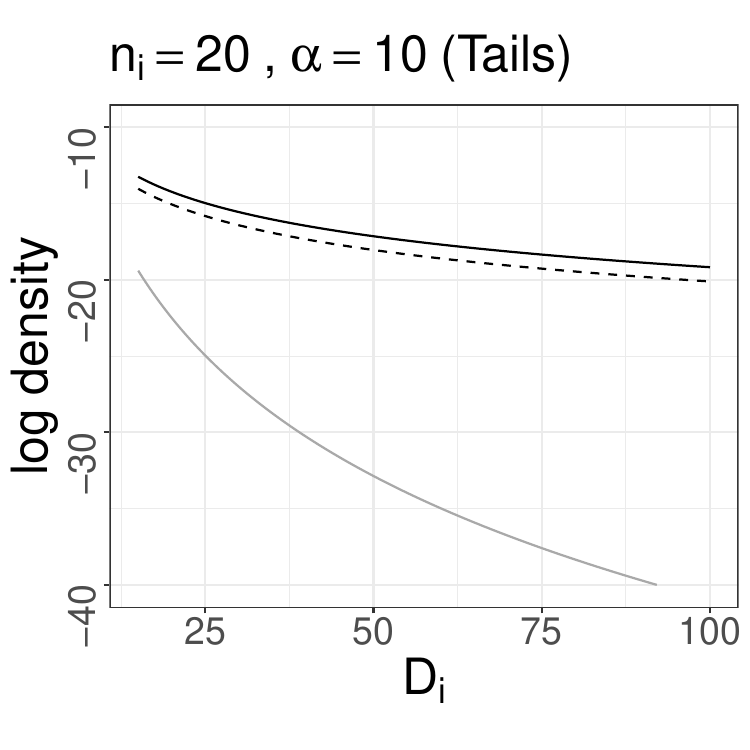}
     \end{subfigure}\\
     \vspace{-.5cm}
     \begin{subfigure}[b]{.5\textwidth}
         \centering
         \includegraphics[width=\textwidth]{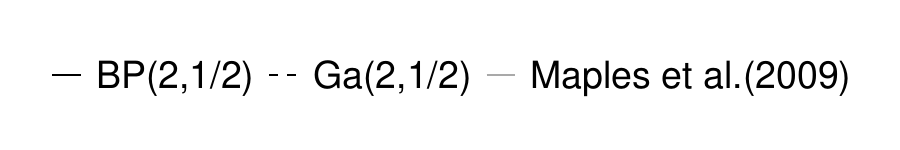}
     \end{subfigure}
     \vspace{-.5cm}
        \caption{Marginal distributions of $D_i$ obtained under the proposed models using the Beta Prime (BP) and Gamma local priors and the model proposed by \cite{2009_Maples}. The plots on the bottom focus on the tail region on a logarithmic scale. 
        In Figure \ref{fig_marginal_D}, $\exp(\boldsymbol{z}_i^\top\boldsymbol{\eta})=1$ and the local parameter in the proposed models is integrated out. }
        \label{fig_marginal_D}
\end{figure}

\subsubsection{Prior and posterior shrinkage factors}\label{subsec:shrinkage_factor}

\label{shrinkage_factors}

We now investigate the posterior concentration of the posterior shrinkage factor under our proposed Bayesian model.  Using the Gamma model (\ref{gamma_model_Di}) and the GL prior (\ref{proposed_GL_shrinkage_IG_prior}) the conditional distribution of $\sigma_i^2$ 
is an Inverse-Gamma, ${\rm IG}((n_i-1)/2+\alpha\omega_i+1,(n_i-1)D_i/2+\alpha\omega_i\exp(\boldsymbol{z}_i^\top\boldsymbol{\eta}))$.  
Therefore the conditional expectation of $\sigma_i^2$ can be written as follows:
\begin{align}\label{prior_shrinkage}
    \E(\sigma_i^2\mid \Delta_i^{\rm prior})&=\E_{\omega_i}[\E(\sigma_i^2\mid\omega_i,\Delta_i^{\rm prior})\mid \Delta_i^{\rm prior}]\nonumber\\
    &=\E_{\omega_i}\left[\frac{(n_i-1)D_i/2+\alpha\omega_i\exp(\boldsymbol{z}_i^\top\boldsymbol{\eta})}{(n_i-1)/2+\alpha\omega_i}\mid \Delta_i^{\rm prior}\right]\nonumber \\
    &=[1-\E_{\omega_i}(\kappa^{\rm prior}_i\mid \Delta_i^{\rm prior})]~D_i+\E_{\omega_i}(\kappa^{\rm prior}_i\mid \Delta_i^{\rm prior})~\exp(\boldsymbol{z}_i^\top\boldsymbol{\eta}),
\end{align}

where $\Delta_i^{\rm prior}=(D_i,n_i,\alpha,\boldsymbol{z}_i^\top\boldsymbol{\eta})$ is a compact notation of the model parameters and $\kappa^{\rm prior}_i=\alpha \omega_i/((n_i-1)/2+\alpha \omega_i)\in(0,1)$ denotes a prior shrinkage factor. The prior mean of the prior shrinkage factor, $\E_{\omega_i}(\kappa^{\rm prior}_i\mid \Delta_i^{\rm prior})$, produces the amount of shrinkage of $D_i$ toward the Exp-GVF.  When the amount of shrinkage approaches 0 or 1, the conditional expectation of $\sigma_i^2$ approaches $D_i$ or $\exp(\boldsymbol{z}_i^\top\boldsymbol{\eta})$, respectively.
Proposition \ref{proposition_post} shows that, under the modified FH model,
the posterior mean of $\sigma_i^2$ exhibits a weighted-average structure similar to that in (\ref{prior_shrinkage}).

\begin{prop}\label{proposition_post}
    Denote $\Delta_i^{\rm post}=\Delta_i^{\rm prior}\times(y_i,\boldsymbol{x}_i^\top\boldsymbol{\beta},\sigma_u^2)=(y_i,D_i,n_i,\alpha,\boldsymbol{z}_i^\top\boldsymbol{\eta},\boldsymbol{x}_i^\top\boldsymbol{\beta},\sigma_u^2)$. Using the modified FH model, defined by (\ref{FH_model}) and (\ref{gamma_model_Di}), and the GL prior in (\ref{proposed_GL_shrinkage_IG_prior}), the full conditional distribution of $\sigma_i^2$ 
    is ${\rm IG}(n_i/2+\alpha\omega_i+1$, $(y_i-\theta_i)^2/2+(n_i-1)D_i/2+\alpha\omega_i\exp(\boldsymbol{z}_i^\top\boldsymbol{\eta}))$. Then the posterior mean of $\sigma_i^2$ is given by

\begin{align}\label{post_shrinkage}
    \E&(\sigma_i^2\mid \Delta_i^{\rm post})\nonumber\\
    &=\E_{\theta_i}(\E_{\omega_i\mid\theta_i}(\E(\sigma_i^2\mid\theta_i,\omega_i,y_i,\Delta_i^{\rm prior}))\mid\Delta_i^{\rm post})\nonumber\\
    &=\E_{\theta_i}\left(\E_{\omega_i\mid\theta_i}\left(\frac{n_i/2}{n_i/2+\alpha\omega_i}\left[\frac{(y_i-\theta_i)^2}{n_i}+\frac{(n_i-1)D_i}{n_i}\right]+\frac{\alpha\omega_i}{n_i/2+\alpha\omega_i}\exp(\boldsymbol{z}_i^\top\boldsymbol{\eta})\right)\mid\Delta_i^{\rm post}\right)\nonumber\\
    &=[1-\E_{\theta_i}(\E_{\omega_i\mid\theta_i}(\kappa_i^{\rm post})\mid\Delta_i^{\rm post})]~\tilde{\sigma}_i^2(y_i,D_i)+\E_{\theta_i}(\E_{\omega_i\mid\theta_i}(\kappa^{\rm post}_i)\mid\Delta_i^{\rm post})~\exp(\boldsymbol{z}_i^\top\boldsymbol{\eta}),
\end{align}

where $\kappa^{\rm post}_i=\alpha\omega_i/(n_i/2+\alpha\omega_i)\in(0,1)$ denotes a posterior shrinkage factor and

\begin{align}\label{observed_var}
    \tilde{\sigma}_i^2(y_i,D_i)=\frac{1}{n_i}~\E_{\theta_i}\left[(y_i-\theta_i)^2\mid \Delta_i^{\rm post}\right]
    +\frac{(n_i-1)D_i}{n_i}.
\end{align}

The term $\tilde{\sigma}_i^2(y_i,D_i)$ is referred to as an observed variance.

\end{prop}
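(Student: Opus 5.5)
The proof has three steps. First we derive the full conditional of $\sigma_i^2$ by conjugacy. Then we obtain the mean of that Inverse-Gamma and rewrite it as a convex combination. Finally we integrate out $\omega_i$ and $\theta_i$ with the tower property.

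\textbf{Full conditional.} Under the modified FH model, $\sigma_i^2$ enters only three factors of the joint density:
\begin{itemize}[leftmargin=*]
\item the likelihood $y_i\mid\theta_i,\sigma_i^2\sim N(\theta_i,\sigma_i^2)$, contributing $(\sigma_i^2)^{-1/2}\exp\{-(y_i-\theta_i)^2/(2\sigma_i^2)\}$;
\item the Gamma model (\ref{gamma_model_Di}) for $D_i$, whose density in $\sigma_i^2$ is proportional to $(\sigma_i^2)^{-(n_i-1)/2}\exp\{-(n_i-1)D_i/(2\sigma_i^2)\}$, since the rate $\nu_i/(2\sigma_i^2)$ raised to the shape $\nu_i/2$ gives the power term;
\item the GL prior (\ref{proposed_GL_shrinkage_IG_prior}), contributing $(\sigma_i^2)^{-(\alpha\omega_i+1)-1}\exp\{-\alpha\omega_i\exp(\boldsymbol{z}_i^\top\boldsymbol{\eta})/\sigma_i^2\}$.
\end{itemize}
All other factors, including the prior on $\theta_i$ given $\boldsymbol\beta,\sigma_u^2$, are free of $\sigma_i^2$. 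Multiplying the three kernels and collecting exponents gives shape $1/2+(n_i-1)/2+\alpha\omega_i+1=n_i/2+\alpha\omega_i+1$. Collecting the terms in $1/\sigma_i^2$ gives scale $(y_i-\theta_i)^2/2+(n_i-1)D_i/2+\alpha\omega_i\exp(\boldsymbol z_i^\top\boldsymbol\eta)$. This is the stated Inverse-Gamma full conditional.

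\textbf{Conditional mean.} The mean of ${\rm IG}(A,B)$ is $B/(A-1)$, so
\[
\E(\sigma_i^2\mid\theta_i,\omega_i,y_i,\Delta_i^{\rm prior})=\frac{(y_i-\theta_i)^2/2+(n_i-1)D_i/2+\alpha\omega_i\exp(\boldsymbol z_i^\top\boldsymbol\eta)}{n_i/2+\alpha\omega_i}.
\]
Factor $n_i/2$ out of the first two numerator terms, writing $(y_i-\theta_i)^2/2+(n_i-1)D_i/2=(n_i/2)\,[(y_i-\theta_i)^2/n_i+(n_i-1)D_i/n_i]$. The mean then equals $(1-\kappa_i^{\rm post})\,[(y_i-\theta_i)^2/n_i+(n_i-1)D_i/n_i]+\kappa_i^{\rm post}\exp(\boldsymbol z_i^\top\boldsymbol\eta)$, with $\kappa_i^{\rm post}=\alpha\omega_i/(n_i/2+\alpha\omega_i)\in(0,1)$ because $\alpha,\omega_i>0$. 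This is the integrand in the second line of (\ref{post_shrinkage}).

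\textbf{Integrating out $\omega_i$ and $\theta_i$.} Apply the tower property in the order $\theta_i$, then $\omega_i\mid\theta_i$, as in the first line of (\ref{post_shrinkage}). Taking $\E_{\omega_i\mid\theta_i}$ is linear, and the bracket $(y_i-\theta_i)^2/n_i+(n_i-1)D_i/n_i$ does not depend on $\omega_i$. Hence the inner expectation is $(1-\E_{\omega_i\mid\theta_i}\kappa_i^{\rm post})\,[\cdot]+\E_{\omega_i\mid\theta_i}(\kappa_i^{\rm post})\exp(\boldsymbol z_i^\top\boldsymbol\eta)$.

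Applying the outer $\E_{\theta_i}$ to the coefficient of $\exp(\boldsymbol z_i^\top\boldsymbol\eta)$ directly gives the stated weight $\E_{\theta_i}(\E_{\omega_i\mid\theta_i}(\kappa_i^{\rm post})\mid\Delta_i^{\rm post})$.

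\textbf{Main obstacle.} The first term involves $\E_{\theta_i}$ of a product of the weight $1-\E_{\omega_i\mid\theta_i}\kappa_i^{\rm post}$ and $(y_i-\theta_i)^2$, and both depend on $\theta_i$. To separate it into $[1-\E\E\kappa]\cdot\tilde\sigma_i^2(y_i,D_i)$ as in (\ref{observed_var}), I would first try one of two justifications:
\begin{itemize}[leftmargin=*]
\item argue that the conditional law of $\omega_i$ given $\theta_i$ (with $\sigma_i^2$ integrated out) is treated as not depending on $\theta_i$ in the notation $\Delta_i^{\rm post}$; or
\item read the factorization as a definition, so that $\tilde\sigma_i^2$ is the weighted posterior average of $(y_i-\theta_i)^2/n_i+(n_i-1)D_i/n_i$.
\end{itemize}
In the second reading, the display (\ref{observed_var}) holds exactly once the weights are absorbed, and it coincides with the plain expectation whenever the covariance between $\kappa_i^{\rm post}$ and $(y_i-\theta_i)^2$ vanishes.

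I would state explicitly which interpretation is used. The $(n_i-1)D_i/n_i$ part factors out cleanly in either case, because it is constant in $\theta_i$.
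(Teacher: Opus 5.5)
\textbf{What is correct.} Your conjugacy computation, the Inverse-Gamma mean $B/(A-1)$, the convex-combination rewrite and the tower step are all correct. Together they establish the first two lines of (\ref{post_shrinkage}), which the paper takes for granted.

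\textbf{What the paper's proof actually does.} Its appendix argument addresses something else: an exact formula for the expectation term in (\ref{observed_var}),
$\E_{\theta_i}[(y_i-\theta_i)^2\mid\Delta_i^{\rm post}]=\sigma_u^2\,\E(\gamma_{i,\sigma_i^2}\mid\Delta_i^{\rm post})+(y_i-\boldsymbol{x}_i^\top\boldsymbol{\beta})^2\,\E(\gamma_{i,\sigma_i^2}^2\mid\Delta_i^{\rm post})$, where $\gamma_{i,\sigma_i^2}=\sigma_i^2/(\sigma_i^2+\sigma_u^2)$. It obtains this from a bias--variance split and the normal full conditional of $\theta_i$ given $\sigma_i^2$. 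You do not need that formula for the proposition.

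\textbf{The gap.} The step you flag as the main obstacle is a genuine gap, but it lies in the statement itself, and the paper does not close it either: it passes from the second to the third line without comment. Write $\bar\kappa_i=\E(\kappa_i^{\rm post}\mid\Delta_i^{\rm post})$, which equals $\E_{\theta_i}(\E_{\omega_i\mid\theta_i}(\kappa_i^{\rm post})\mid\Delta_i^{\rm post})$ by the tower property. With $\tilde\sigma_i^2$ the plain posterior expectation of (\ref{observed_var}), the exact identity is
\[
\E(\sigma_i^2\mid\Delta_i^{\rm post})=(1-\bar\kappa_i)\,\tilde\sigma_i^2(y_i,D_i)+\bar\kappa_i\exp(\boldsymbol{z}_i^\top\boldsymbol{\eta})-\frac{1}{n_i}\,\mathrm{Cov}\bigl(\kappa_i^{\rm post},(y_i-\theta_i)^2\mid\Delta_i^{\rm post}\bigr).
\]

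\textbf{Why option (i) fails.} It is false for this model, so discard it. Set $\tilde\eta_i=\exp(\boldsymbol{z}_i^\top\boldsymbol{\eta})$ and $S_i=(y_i-\theta_i)^2/2+(n_i-1)D_i/2$. Integrating $\sigma_i^2$ out of the joint posterior of $(\sigma_i^2,\theta_i,\omega_i)$ leaves the factor $(S_i+\alpha\omega_i\tilde\eta_i)^{-(n_i/2+\alpha\omega_i+1)}$. This factor couples $\theta_i$ and $\omega_i$, so $\E(\kappa_i^{\rm post}\mid\theta_i,\Delta_i^{\rm post})$ genuinely varies with $S_i$.

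Concretely, the mixed partial derivative of its logarithm in $(\omega_i,S_i)$ is $\alpha[(n_i/2+1)\tilde\eta_i-S_i]/(S_i+\alpha\omega_i\tilde\eta_i)^2$, and its sign does not depend on $\omega_i$. Suppose $(n_i-1)D_i/2>(n_i/2+1)\tilde\eta_i$, so that $S_i$ always exceeds $(n_i/2+1)\tilde\eta_i$. Then:
\begin{itemize}
\item the conditional law of $\omega_i$ given $\theta_i$ decreases in the likelihood-ratio order as $S_i$ grows;
\item $\E(\kappa_i^{\rm post}\mid\theta_i,\Delta_i^{\rm post})$ is therefore strictly decreasing in $(y_i-\theta_i)^2$;
\item since $(y_i-\theta_i)^2$ is nondegenerate a posteriori, the covariance above is strictly negative;
\item so the true posterior mean strictly exceeds the right-hand side of (\ref{post_shrinkage}).
\end{itemize}

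\textbf{What to do instead.} Commit to option (ii). Define $\tilde\sigma_i^2$ as the tilted average $\E[(1-\kappa_i^{\rm post})W_i\mid\Delta_i^{\rm post}]/(1-\bar\kappa_i)$ with $W_i=(y_i-\theta_i)^2/n_i+(n_i-1)D_i/n_i$; under this definition the weighted-average form is exact. Alternatively, keep (\ref{observed_var}) and carry the covariance term explicitly. Either way, the convex-combination structure with weight $\bar\kappa_i\in(0,1)$ survives.
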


\begin{proof}
See Appendix \ref{proof_Proposition_2_2_theorem_2_1}.
\end{proof}

We note that the observed variance $\tilde{\sigma}_i^2(y_i,D_i)$ in Proposition \ref{proposition_post} is close to $D_i$ when the sample size $n_i$ is large. The exact representation of the expectation term in (\ref{observed_var}) can be found in Appendix \ref{proof_Proposition_2_2_theorem_2_1} of the Supplementary Material. Similar to $\E_{\omega_i}(\kappa^{\rm prior}_i\mid \Delta_i^{\rm prior})$ in (\ref{prior_shrinkage}), the posterior mean of the posterior shrinkage factor, $\E_{\theta_i}(\E_{\omega_i\mid\theta_i}(\kappa^{\rm post}_i)\mid\Delta_i^{\rm post})$, can be interpreted as the posterior amount of shrinkage of $\tilde{\sigma}_i^2(y_i,D_i)$ toward the Exp-GVF. 
Furthermore, the posterior amount of shrinkage depends on $n_i$ and how well the posterior estimate of the Exp-GVF explains the sampling variance $\sigma_i^2$.
Specifically, we expect that (i) $\E_{\theta_i}(\E_{\omega_i\mid\theta_i}(\kappa^{\rm post}_i)\mid\Delta_i^{\rm post})$ be close to 1 when $n_i$ is small and the posterior estimate of the Exp-GVF explains $\sigma_i^2$ well, 
and (ii) $\E_{\theta_i}(\E_{\omega_i\mid\theta_i}(\kappa^{\rm post}_i)\mid\Delta_i^{\rm post})$ be close to 0 when $n_i$ is large and the Exp-GVF does not provide a good fit for $\sigma_i^2$.

Theorem \ref{thm:prob_kappa} shows the concentration of the posterior distribution of the
posterior shrinkage factor $\kappa^{\rm post}_i$.
We note that both the global parameter $\alpha$ and the hyperparameter $a$ of the BP local prior control
 the posterior concentration of  $\kappa^{\rm post}_i$.
Thus, when the global parameter $\alpha$ is large the posterior distribution of the posterior shrinkage factor concentrates near one.
Therefore, when the covariates $\boldsymbol{z}_i$ in the Exp-GVF can explain the sampling variance $\sigma_i^2$,  we expect to have a large $\alpha$  so that the posterior mean of $\sigma_i^2$ is close to the posterior estimate of the Exp-GVF. Importantly, since the shrinkage factor $\kappa^{\rm post}_i$ also depends on the sample size, $n_i$, this shrinkage effect can be mitigated as  $n_i$ increases.   Furthermore, according to Theorem \ref{thm:prob_kappa}, the value of $a$ also controls the concentration of the posterior distribution of $\kappa^{\rm post}_i$. 
To avoid over-shrinkage for small areas for which the covariates in the Exp-GVF cannot explain $\sigma_i^2$  we should set a small value for $a$. 
Therefore, the choice $a=2$ for the BP local prior discussed previously in Section  \ref{prior_choices} is also supported by Theorem \ref{thm:prob_kappa}.



\begin{theorem}\label{thm:prob_kappa} (Posterior concentration of the posterior shrinkage factor)
Using the modified FH model, defined by (\ref{FH_model}) and (\ref{gamma_model_Di}), and the GL prior in (\ref{proposed_GL_shrinkage_IG_prior}), we suppose that the BP prior with hyperparameters $a$ and $b$ in (\ref{BP_prior}) is assigned to the local parameter $\omega_i$. Let $\kappa^{\rm post}_i=\alpha \omega_i/(n_i/2+\alpha \omega_i)$ denote the posterior shrinkage factor. For any $\varepsilon_i\in(0,1)$, as $1<\alpha\to\infty$, we have

\begin{align*}
\P(\kappa^{\rm post}_i  <\varepsilon_i\mid \Delta_i^{\rm post})\leq C''\alpha^{-a}\to0,
\end{align*}

where $\Delta_i^{\rm post}=(y_i,D_i,n_i,\alpha,\boldsymbol{z}_i^\top\boldsymbol{\eta},\boldsymbol{x}_i^\top\boldsymbol{\beta},\sigma_u^2)$ and $C''$ is some finite positive constant.

\end{theorem}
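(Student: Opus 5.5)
The event $\{\kappa^{\rm post}_i<\varepsilon_i\}$ is the same as $\{\omega_i<\tau_i\}$ with
$$\tau_i=\frac{n_i\varepsilon_i}{2\alpha(1-\varepsilon_i)},$$
which shrinks like $\alpha^{-1}$. The plan is to write $\P(\kappa^{\rm post}_i<\varepsilon_i\mid\Delta_i^{\rm post})$ as a ratio of integrals of the (marginal in $\sigma_i^2$) joint posterior kernel of $(\theta_i,\omega_i)$. In the numerator $\omega_i$ ranges over $(0,\tau_i)$; in the denominator it ranges over $(0,\infty)$. The goal is to bound the numerator by $O(\alpha^{-a})$ and the denominator below by a constant free of $\alpha$.

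First integrate $\sigma_i^2$ out using the Inverse-Gamma full conditional of Proposition \ref{proposition_post}. With $S_i(\theta_i)=(y_i-\theta_i)^2/2+(n_i-1)D_i/2$ and $g_i=\exp(\boldsymbol{z}_i^\top\boldsymbol{\eta})$, the kernel in $(\theta_i,\omega_i)$ is proportional to
$$\phi(\theta_i;\boldsymbol{x}_i^\top\boldsymbol{\beta},\sigma_u^2)\,\pi^{\rm BP}(\omega_i)\,\frac{(\alpha\omega_i g_i)^{\alpha\omega_i+1}}{\Gamma(\alpha\omega_i+1)}\,\frac{\Gamma(n_i/2+\alpha\omega_i+1)}{\{S_i(\theta_i)+\alpha\omega_i g_i\}^{n_i/2+\alpha\omega_i+1}}.$$
Set $t=\alpha\omega_i$ and rewrite the $\omega_i$-dependent factor as
$$h(t,\theta_i)=\frac{\Gamma(n_i/2+t+1)}{\Gamma(t+1)}\,\frac{(tg_i)^{t+1}}{(S_i+tg_i)^{n_i/2+t+1}}.$$
This factor depends on $\alpha$ only through $t$.

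\textbf{Numerator.} For $t\in(0,\alpha\tau_i]=(0,n_i\varepsilon_i/\{2(1-\varepsilon_i)\}]$, a bounded interval independent of $\alpha$, bound $h(t,\theta_i)\le \Gamma(n_i/2+t+1)/\{\Gamma(t+1)\,S_i^{n_i/2}\}\le K$. This uses $(tg_i/(S_i+tg_i))^{t+1}\le 1$ and $S_i\ge (n_i-1)D_i/2>0$, so $K$ is uniform in $\theta_i$. For the BP prior, $\pi^{\rm BP}(\omega_i)\le \omega_i^{a-1}/B(a,b)$. Hence the numerator is at most
$$K\int_0^{\tau_i}\omega_i^{a-1}\,d\omega_i\cdot\int\phi\,d\theta_i\;\propto\;\tau_i^a=O(\alpha^{-a}).$$

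\textbf{Denominator.} This is the main obstacle: the denominator depends on $\alpha$, and a naive bound could decay. I would restrict to $\omega_i\in[1,2]$, where $t\in[\alpha,2\alpha]$ and $\alpha>1$. By Stirling's formula, $\Gamma(n_i/2+t+1)/\Gamma(t+1)\ge c\,t^{n_i/2}$ for $t\ge1$. Also
$$\left(\frac{tg_i}{S_i+tg_i}\right)^{t+1}=\left(1+\frac{S_i}{tg_i}\right)^{-(t+1)}\ge \exp\!\left(-\frac{S_i(t+1)}{tg_i}\right)\ge \exp(-2S_i/g_i).$$
Together these give $h\ge c\,t^{n_i/2}(S_i+tg_i)^{-n_i/2}e^{-2S_i/g_i}$, and for $t\ge1$ the power terms are bounded below by $(g_i/(S_i+g_i))^{n_i/2}$. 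Integrating over $\theta_i$ against the normal density and over $\omega_i\in[1,2]$ against $\pi^{\rm BP}$ then gives a strictly positive constant free of $\alpha$. This also takes care of the normalizing constants in $\pi^{\rm BP}$, which cancel between numerator and denominator.

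Combining the two bounds gives $\P(\kappa^{\rm post}_i<\varepsilon_i\mid\Delta_i^{\rm post})\le C''\alpha^{-a}$, with $C''$ depending on $(n_i,D_i,y_i,g_i,\varepsilon_i,a,b,\sigma_u^2,\boldsymbol{x}_i^\top\boldsymbol{\beta})$. The care needed is in keeping the Stirling lower bound and the $\theta_i$-uniform upper bound explicit; beyond that, the argument should go through.
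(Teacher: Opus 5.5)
Your proposal is correct and follows essentially the same route as the paper's proof. You write the probability as a ratio of $\omega_i$-integrals after integrating out $\sigma_i^2$, bound the numerator by $O(\alpha^{-a})$ using $\pi^{\rm BP}(\omega_i)\le\omega_i^{a-1}/B(a,b)$ on an interval of length $O(\alpha^{-1})$ with a uniform Gamma-ratio bound, and bound the denominator below on a fixed compact window via Stirling-type bounds and $\log(1+x)\le x$; the paper's window is $u_i\in[1,2]$ with $\omega_i=\tilde n_iD_iu_i$, $\tilde n_i=(n_i-1)/2$. The differences are minor and favour your version: your denominator bound holds uniformly for all $\alpha>1$ rather than coming from the paper's dominated-convergence limit, and because you integrate only over the bounded range of $t=\alpha\omega_i$ you avoid the extra condition $n_i/2>a$ that the paper introduces when it extends the numerator integral to $(0,\infty)$.
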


\begin{proof}
See Appendix \ref{proof_Proposition_2_2_theorem_2_1}.
\end{proof}

\section{Computation}\label{sec:computation}

To evaluate the performance of our proposed model, we consider three Bayesian SAE models in the literature. 
In Section \ref{subsec:existing_models}, we describe the MCMC algorithms utilized to obtain posterior estimators for those existing models. In Section  \ref{subsec:computation_proposed_model}, we present the proposed MCMC algorithms developed for posterior inference in the proposed model. In particular, we detail the adaptive sampling scheme used to efficiently draw samples from the posterior distribution. Furthermore, we establish a theoretical guarantee in Theorem \ref{thm:proper_posterior} showing that the posterior distribution of the proposed model is proper under a set of mild regularity conditions. These results ensure that the posterior is well-defined and suitable for Bayesian inference. The details of the existing and proposed Bayesian SAE models are given in Table \ref{tab:model_structures}.

\begin{table}[ht!]
    \centering
    \resizebox{!}{5cm}{
    \begin{tabular}{|c|l|c|c|c|c|}
    \hline
    &&\multicolumn{3}{c|}{Model for}  &\\
    \cline{3-5}
     &\multicolumn{1}{c|}{} &\multicolumn{1}{c|}{$y_i$ and $\theta_i$} & $D_i$ & $\sigma_i^2$ & MCMC\\
     \hline
     &&&& & \\
     \multirow{9}{*}{\rotatebox{90}{Existing Bayesian models}} & YC & Fay-Herriot &
        Gamma  &$\sigma_i^2\sim\text{IG}(a_i,b_i)$ &  \\
     &&Eq (\ref{FH_model}) & model&$a_i=b_i=0.0001$ & Gibbs\\
     &&$\pi(\boldsymbol{\beta},\sigma_u^2)\propto1$ &Eq (\ref{gamma_model_Di})&& sampling\\
     \cline{2-2}\cline{5-5}
     &&&&& \\
     &STK1 && &$\sigma_i^2\sim\text{IG}(a_i,b_i\gamma)$ & \\
     && &  &$a_i=2,~b_i=1/n_i$ & \\
     &&&& $\pi(\gamma)\propto1$ & \\
     &&&&& \\
          \cline{2-2}\cline{5-6}
          &&&&& \\
          &STK2 && &$\sigma_i^2\sim\text{IG}(a_i,b_i\gamma\exp(\boldsymbol{z}_i^\top\boldsymbol{\eta}))$ & Gibbs and adaptive \\
     &&  &  &$a_i=2,~b_i=1/n_i$ & Metropolis- \\
     &&&& $\pi(\gamma,\boldsymbol{\eta})\propto1$ & Hastings \\
     &&&&& \\
          \cline{1-2}\cline{5-6}
          &&&&& \\
     \multirow{4}{*}{\rotatebox{90}{Proposed model}}&GL-BP&&&$\sigma_i^2\sim\text{IG}(\alpha\omega_i+1,\alpha\omega_i\exp(\boldsymbol{z}_i^\top\boldsymbol{\eta}))$& Gibbs and adaptive \\
     &&&&$\omega_i\sim{\rm BP}(2,1/2)$& Metropolis-    \\
     &&&& $\alpha\sim{\rm Ga}(a_\alpha,b_\alpha)$ & Hastings  \\
     &&&& $\pi(\boldsymbol{\eta})\propto1$ &    \\
     &&&&&    \\
     &&&& &  \\
     \hline
    \end{tabular}
    }
    \caption{Settings and Markov chain Monte Carlo (MCMC) alternatives for the existing and proposed Bayesian SAE models.}
    \label{tab:model_structures}
\end{table}



\subsection{MCMC algorithms for the  existing Bayesian SAE models}\label{subsec:existing_models}

The three existing Bayesian models presented in Table  \ref{tab:model_structures} are based on the Gamma sampling model (\ref{gamma_model_Di}) for $D_i$, with different specifications of Inverse-Gamma priors assigned to $\sigma_i^2$.
The \cite{2006_You&Chapman} (YC) model considers a non-informative prior for $\sigma_i^2$ by setting small values of the shape ($a_i$) and scale ($b_i$) parameters such as $a_i=b_i=0.0001$. 
Uniform priors are assumed on $\boldsymbol{\beta}$ and $\sigma_u^2$, i.e., $\pi(\boldsymbol{\beta},\sigma_u^2)\propto1$.
The full conditional distributions of the unknown model parameters $(\theta_i,\sigma_i^2,\boldsymbol{\beta},\sigma_u^2)$ are obtained in closed form and the Gibbs sampling algorithm for YC is given in Algorithm \ref{alg:YC}, Appendix \ref{MCMC_schemes}. 
As noted, the posterior mean of $\sigma_i^2$  cannot be written as a weighted average and therefore does not admit a posterior shrinkage interpretation.

In \cite{2017_Sukasawa} Bayesian SAE models shrinking both means and variances were developed. Specifically, in order to obtain the posterior mean of $\sigma_i^2$ as a shrinkage estimator, \cite{2017_Sukasawa} proposed to use an Inverse-Gamma prior for $\sigma_i^2$ with shape $a_i$ and scale $b_i\gamma$, i.e., $\sigma_i^2\sim{\rm IG}(a_i,b_i\gamma)$, where a Uniform prior is assumed on $\gamma$, i.e., $\pi(\gamma)\propto1$. 
We refer to this model proposed by \cite{2017_Sukasawa} as STK1. Since under the STK1 model the full conditional distributions   for all model parameters $(\theta_i,\sigma_i^2,\boldsymbol{\beta},\sigma_u^2,\gamma)$ are available in closed form, posterior sampling using a Gibbs sampler is straightforward.
Specifically, the full conditional distributions of $(\theta_i,\sigma_i^2,\boldsymbol{\beta},\sigma_u^2)$ under STK1 are similar to those under YC, and the full conditional of $\gamma$ 
is a Gamma distribution. Details of the Gibbs sampling procedure can be found in Algorithm \ref{alg:STK1}, Appendix \ref{MCMC_schemes}.

To illustrate the difference between the STK1 model  formulation and our proposed model we consider the posterior mean of $\sigma_i^2$.
The posterior mean of $\sigma_i^2$ for STK1 with the suggested settings of $a_i=2$ and $b_i=1/n_i$ in \cite{2017_Sukasawa} can be written as a weighted average as follows:
\begin{align}\label{cond_mean_STK}
\E^{\rm STK1}(\sigma_i^2\mid y_i,D_i,n_i,\theta_i, \boldsymbol{x}_i^\top\boldsymbol{\beta},\sigma_u^2)=(1-\tilde\kappa_i^{\rm post})\tilde{\sigma}^2_i(y_i,\theta_i,D_i)+\tilde\kappa_i^{\rm post}~\E(\sigma_i^2),
\end{align}

where the observed variance is $\tilde{\sigma}^2_i(y_i,\theta_i,D_i)=[(y_i-\theta_i)^2+(n_i-1)D_i]/n_i$ and the prior mean is $\E(\sigma_i^2)=\gamma/n_i$. 
We note that  in (\ref{cond_mean_STK})
the posterior mean of $\sigma_i^2$ is conditional on $\theta_i$ therefore the posterior means in our proposed model (\ref{post_shrinkage}) and (\ref{cond_mean_STK}) are not comparable. 
In addition, the posterior shrinkage factor of STK1 in (\ref{cond_mean_STK}) is given by $\tilde\kappa_i^{\rm post}=1/(n_i/2+1)\in(0,1)$. 
In particular, the posterior mean in (\ref{cond_mean_STK}) approaches $\tilde{\sigma}^2_i(y_i,\theta_i,D_i)$ when $n_i$ is large and shrinks toward the prior mean of $\sigma_i^2$ when $n_i$ is small.
Therefore, under the STK1 model, the degree of shrinkage depends only on the sample size $n_i$.
In contrast, the shrinkage factor $\kappa_i^{\rm post}$ of our proposed model in (\ref{post_shrinkage}) depends not only on $n_i$ but also on the global and local parameters. 
This feature allows the proposed model to adaptively shrink the posterior mean of $\sigma^2_i$ toward (\ref{observed_var}) and the Exp-GVF, according to the sample size $n_i$
and how well the posterior estimate of the Exp-GVF fits the sampling variance.
 
In \cite{2017_Sukasawa} the STK1 model was extended by including the Exp-GVF in the scale of the Inverse-Gamma prior for $\sigma_i^2$, i.e., $\sigma_i^2\sim{\rm IG}(a_i,b_i\gamma\exp(\boldsymbol{z}_i^\top\boldsymbol{\eta}))$. This model is referred to as STK2. Importantly, the posterior mean of $\sigma_i^2$ for STK2 has the same expression as that for STK1 in (\ref{cond_mean_STK}) except that the prior mean is $\E(\sigma_i^2)=\gamma\exp(\boldsymbol{z}_i^\top\boldsymbol{\eta})/n_i$ for STK2. 
Therefore, STK2 suffers from the same issues as STK1, namely a lack of adaptivity to heterogeneity and insufficient shrinkage toward the Exp-GVF.
As noted in \cite{2017_Sukasawa}, the covariate $z_{i1}=1$ is not considered in the STK2 model  since $\gamma$ and the regression parameter $\eta_0$ corresponding to $z_{i1}=1$ are not identifiable. This issue occurs because the likelihood function depends on both parameters $\gamma$ and $\eta_0$ only through the term $\gamma \exp(\eta_0)$. In other words, for any constant $c>0$, replacing $(\gamma, \eta_0)$ with $(\gamma',\eta_0') = (c \gamma, \eta_0 - \log c)$ leaves the product $\gamma \exp(\eta_0) = \gamma' \exp(\eta_0')$ unchanged. These two pairs produce the same likelihood value, meaning that the parameters $\gamma$ and $\eta_0$ cannot be separately identified from the observed data. Therefore, when applying the STK2 model in Sections \ref{sec:simulation} and \ref{sec:application}, we omit the intercept $\eta_0$ in the Exp-GVF.

Since the full conditional distribution  of $\boldsymbol{\eta}$ is not available in closed form we  use a Metropolis–Hastings (MH) algorithm to obtain posterior samples
of $\boldsymbol{\eta}$.
The random-walk MH in \cite{2017_Sukasawa} considered a Normal dependent proposal with a fixed variance.
However, we found some convergence issues when the proposal variance in the MH algorithm was held fixed.
To address this computational issue, we recommend adjusting the covariance matrix of the proposal distribution for $\boldsymbol{\eta}$. 
The MCMC algorithm for STK2 with the Gibbs steps for $(\theta_i,\sigma_i^2,\boldsymbol{\beta},\sigma_u^2,\gamma)$ and the adaptive MH step for $\boldsymbol{\eta}$ is provided in Algorithm \ref{alg:STK2}, Appendix \ref{MCMC_schemes}.





\subsection{MCMC algorithm for the proposed model}\label{subsec:computation_proposed_model}
We refer to our proposed Bayesian SAE model under the BP local prior as GL-BP. 
For notational simplicity, we denote $\boldsymbol{\mathcal{D}}=\{y_i,D_i,n_i,\boldsymbol{z}_i,\boldsymbol{x}_i\}_{i=1}^m$ as the set of all observed data, and define the vectors $\boldsymbol{\theta}=(\theta_1,...,\theta_m)^\top$, $\boldsymbol{\sigma}^2=(\sigma_1^2,...,\sigma_m^2)^\top$, and $\boldsymbol{\omega}=(\omega_1,...,\omega_m)^\top$. For the proposed GL-BP model, the joint posterior distribution of $(\boldsymbol{\theta},\boldsymbol{\sigma}^2,\boldsymbol{\beta},\sigma_u^2,\alpha,\boldsymbol{\omega},\boldsymbol{\eta})$ is given by

\begin{align}\label{HB_posterior}
\pi(\boldsymbol{\theta},\boldsymbol{\sigma}^2,\boldsymbol{\beta}&,\sigma_u^2,\alpha,\boldsymbol{\omega},\boldsymbol{\eta}\mid \boldsymbol{\mathcal{D}}) \propto~(\sigma_u^2)^{-m/2}\times\pi(\alpha)\nonumber\\
&\times\prod_{i=1}^m\left[\left(\frac{1}{\sigma_i^2}\right)^{\frac{n_i}{2}+(\alpha\omega_i+1)+1}\times\frac{(\alpha\omega_i\exp(\boldsymbol{z}_i^\top\boldsymbol{\eta}))^{\alpha\omega_i+1}}{\Gamma(\alpha\omega_i+1)}\times\pi^{\rm BP}(\omega_i)\right]\nonumber\\
&\times\prod_{i=1}^m\exp\left\{-\frac{(y_i-\theta_i)^2/2+(n_i-1)D_i/2+\alpha\omega_i\exp(\boldsymbol{z}_i^\top\boldsymbol{\eta})}{\sigma_i^2}-\frac{(\theta_i-\boldsymbol{x}_i^\top\boldsymbol{\beta})^2}{2\sigma_u^2}\right\}.
\end{align}

In Theorem \ref{thm:proper_posterior}, we have the conditions to ensure that the joint posterior distribution (\ref{HB_posterior}) is proper. Similar to the existing models described in Section \ref{subsec:existing_models} we assume $\pi(\beta,\sigma_u^2,\boldsymbol{\eta})\propto1$ in our proposed GL-BP model.
According to Theorem \ref{thm:proper_posterior} the posterior distribution in (\ref{HB_posterior}) is proper if the prior for the global parameter $\alpha$,  $\pi(\alpha)$, is proper.
Therefore similar to \cite{2024_Hamura} we consider a  Gamma prior distribution for  $\alpha$ with $\alpha\sim{\rm Ga}(a_{\alpha},b_{\alpha})$. 


\begin{theorem}\label{thm:proper_posterior}
    The joint posterior distribution under $\omega_i\sim{\rm BP}(a,b)$ in (\ref{HB_posterior}) is proper if the prior $\pi(\alpha)$ for the global parameter $\alpha$ is proper, $m-p-2>0$, $n_i>1$, and $z_{ik},i=1,...,m$ have the same signs for $k=1,...,q$.
\end{theorem}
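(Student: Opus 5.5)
We integrate the unnormalized joint density (\ref{HB_posterior}) over all parameters in a convenient order and show the result is finite. The order is: first $\boldsymbol{\sigma}^2$, then $\boldsymbol{\theta}$, $\boldsymbol{\beta}$ and $\sigma_u^2$, then $\boldsymbol{\eta}$, and finally $(\boldsymbol{\omega},\alpha)$.

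\textbf{Integrating out $\boldsymbol{\sigma}^2$.} Each $\sigma_i^2$ enters through an Inverse-Gamma kernel with shape $n_i/2+\alpha\omega_i+1$ and scale
\[
B_i=(y_i-\theta_i)^2/2+(n_i-1)D_i/2+\alpha\omega_i e^{\boldsymbol{z}_i^\top\boldsymbol{\eta}}.
\]
The integral over $\sigma_i^2$ equals
\[
\Gamma(n_i/2+\alpha\omega_i+1)\,B_i^{-(n_i/2+\alpha\omega_i+1)} .
\]
Since $n_i>1$ and $D_i>0$, we have $B_i\ge (n_i-1)D_i/2>0$. We also have $B_i\ge \alpha\omega_i e^{\boldsymbol{z}_i^\top\boldsymbol{\eta}}$. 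Combining with the prior factor $(\alpha\omega_i e^{\boldsymbol{z}_i^\top\boldsymbol{\eta}})^{\alpha\omega_i+1}/\Gamma(\alpha\omega_i+1)$, the $i$th term is bounded by
\[
\frac{\Gamma(n_i/2+\alpha\omega_i+1)}{\Gamma(\alpha\omega_i+1)}\,
\Bigl(\frac{\alpha\omega_i e^{\boldsymbol{z}_i^\top\boldsymbol{\eta}}}{B_i}\Bigr)^{\alpha\omega_i+1} B_i^{-n_i/2}.
\]
The ratio in parentheses is at most one. By the standard Gamma-ratio bound, the $\Gamma$ ratio is at most a constant times $(\alpha\omega_i+1)^{n_i/2}$.

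The factor $B_i^{-n_i/2}$ must be kept, because it is what controls the $\boldsymbol{\eta}$ integral. On the set where $\alpha\omega_i e^{\boldsymbol{z}_i^\top\boldsymbol{\eta}}$ is large it decays like $(\alpha\omega_i e^{\boldsymbol{z}_i^\top\boldsymbol{\eta}})^{-n_i/2}$. Elsewhere it is bounded by $((n_i-1)D_i/2)^{-n_i/2}$, a bound that no longer involves $\theta_i$.

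\textbf{Integrating out $\boldsymbol{\theta}$, $\boldsymbol{\beta}$, $\sigma_u^2$.} Once the $\sigma_i^2$ are gone and $B_i$ is bounded below independently of $\theta_i$, the remaining $\boldsymbol{\theta}$-dependence is the normal factor $\prod_i\exp\{-(\theta_i-\boldsymbol{x}_i^\top\boldsymbol{\beta})^2/(2\sigma_u^2)\}$. One caveat: $B_i^{-n_i/2}$ also carries the $(y_i-\theta_i)^2$ term, which gives the decay in $\theta_i$ needed here, so we use it rather than discard it.

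The cleanest route is to split $B_i^{-n_i/2}$ via $B_i\ge$ a convex combination of its three pieces, so that one part decays in $(y_i-\theta_i)^2$ and one part decays in $\alpha\omega_i e^{\boldsymbol{z}_i^\top\boldsymbol{\eta}}$. The $\theta$-part then behaves like a Student-$t$ likelihood for $\theta_i$ centered at $y_i$. Integrating $\boldsymbol{\theta}$ against the normal random-effect density leaves a function of $(\boldsymbol{\beta},\sigma_u^2)$. Under the flat prior, integrating $\boldsymbol{\beta}$ out gives a factor $(\sigma_u^2)^{p/2}|X^\top X|^{-1/2}$. The remaining $\sigma_u^2$ integrand behaves like $(\sigma_u^2)^{-(m-p)/2}$ as $\sigma_u^2\to\infty$, which is integrable exactly when $m-p-2>0$. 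Near $\sigma_u^2=0$ it stays bounded because the $t$-type factors are bounded.

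\textbf{Integrating out $\boldsymbol{\eta}$.} This is the main obstacle. Under the flat prior we need
\[
\int_{\mathbb{R}^q}\prod_i \min\bigl\{1,(\alpha\omega_i e^{\boldsymbol{z}_i^\top\boldsymbol{\eta}})^{-n_i/2}\bigr\}\,d\boldsymbol{\eta}<\infty .
\]
As written this diverges in directions where $\boldsymbol{z}_i^\top\boldsymbol{\eta}\to-\infty$, since the integrand stays bounded there without decaying. Hence we must also exploit the prior factor. Before dropping it, note that $(\alpha\omega_i e^{\boldsymbol{z}_i^\top\boldsymbol{\eta}}/B_i)^{\alpha\omega_i+1}$ decays like $e^{(\alpha\omega_i+1)\boldsymbol{z}_i^\top\boldsymbol{\eta}}$ when $\boldsymbol{z}_i^\top\boldsymbol{\eta}\to-\infty$. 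This gives exponential decay in that direction, while $B_i^{-n_i/2}$ gives exponential decay in the other.

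The same-sign assumption on each coordinate $z_{ik}$ is used here. With it, a direction $\boldsymbol{\eta}$ makes the $\boldsymbol{z}_i^\top\boldsymbol{\eta}$ move coherently across $i$, so each orthant of $\boldsymbol{\eta}$ can be handled by a product of one-dimensional exponential bounds $e^{-c|\eta_k|}$. We therefore expect the resulting bound to be integrable in $\boldsymbol{\eta}$. The bound will carry polynomial factors in $\alpha\omega_i$ and $1/(\alpha\omega_i)$, which have to be absorbed next.

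\textbf{Integrating out $(\boldsymbol{\omega},\alpha)$.} The leftover function of $(\alpha,\boldsymbol{\omega})$ has polynomial growth in $\alpha\omega_i$ from the $\Gamma$ ratio and possible inverse powers from the $\boldsymbol{\eta}$ integral. We will bound it by $g(\alpha)\prod_i h(\omega_i)$ and check integrability against the BP prior using $a>0$ and $b>0$, and then against the proper $\pi(\alpha)$. If polynomial growth in $\omega_i$ conflicts with the BP tail $\omega_i^{-b-1}$, we will instead retain more of the ratio $(\alpha\omega_i e^{\boldsymbol{z}_i^\top\boldsymbol{\eta}}/B_i)^{\alpha\omega_i+1}$. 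For large $\omega_i$ that ratio equals $1-O(1/\omega_i)$ raised to the power $\omega_i$, which supplies the compensating factor needed to cancel the $\Gamma$-ratio growth.
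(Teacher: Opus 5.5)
Your $\boldsymbol{\theta}$-step has a genuine gap, and it fails before the $\boldsymbol{\eta}$-step is reached. Write $\tilde n_i=(n_i-1)/2$. You split $B_i^{-n_i/2}\le\bigl((y_i-\theta_i)^2/2+\tilde n_iD_i\bigr)^{-s}\bigl(\tilde n_iD_i+\alpha\omega_ie^{\boldsymbol{z}_i^\top\boldsymbol{\eta}}\bigr)^{-(n_i/2-s)}$. This needs $s>1/2$, because for at least $p+3$ areas the $\theta_i$-factor must be integrable to absorb the flat priors on $\boldsymbol{\beta}$ and $\sigma_u^2$. That power of $\alpha\omega_i$ is then lost for good. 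For large $\alpha\omega_i$ the $i$th $\boldsymbol{\eta}$-factor is of order $(\alpha\omega_i)^{-(n_i/2-s)}$ times a fixed integrable profile in $\boldsymbol{z}_i^\top\boldsymbol{\eta}$. Against the Gamma ratio $(\alpha\omega_i+1)^{n_i/2}$ this leaves a net $(\alpha\omega_i)^{s}$, and $\int^\infty\omega^{s}\omega^{a-1}(1+\omega)^{-(a+b)}\,d\omega=\infty$ whenever $b\le s$. That covers every $b\le 1/2$, including the recommended ${\rm BP}(2,1/2)$; for $b>1/2$ you would still need moments of $\pi(\alpha)$, which propriety does not give.

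Your fallback cannot repair this. As $\omega_i\to\infty$, $(\alpha\omega_ie^{\boldsymbol{z}_i^\top\boldsymbol{\eta}}/B_i)^{\alpha\omega_i+1}\to\exp\{-((y_i-\theta_i)^2/2+\tilde n_iD_i)e^{-\boldsymbol{z}_i^\top\boldsymbol{\eta}}\}$, which is bounded away from zero and so cannot cancel polynomial growth. The missing idea, which is what the paper does, is to take the $\theta_i$-decay from the whole kernel $B_i^{-(n_i/2+\alpha\omega_i+1)}$, whose exponent grows with $\alpha\omega_i$. After removing $\boldsymbol{\beta}$ and $\sigma_u^2$ (leaving $(\boldsymbol{\theta}^\top\mathbf{A}\boldsymbol{\theta})^{-(m-p-2)/2}$), the $\theta_i$-integral is at most a constant times $K_i^{-(\tilde n_i+\alpha\omega_i+1)}(n_i/2+\alpha\omega_i+1)^{-1/2}$, with $K_i=\tilde n_iD_i+\alpha\omega_ie^{\boldsymbol{z}_i^\top\boldsymbol{\eta}}$. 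The factor $(n_i/2+\alpha\omega_i+1)^{-1/2}$ reduces the Gamma ratio to $(n_i/2+\alpha\omega_i+1)^{\tilde n_i}$. This is exactly balanced by the $(\alpha\omega_i)^{-\tilde n_i}$ from the $\boldsymbol{\eta}$-integral, giving the bounded factor $(1+(n_i/2+1)/\alpha\omega_i)^{\tilde n_i}$ in the paper's bound on $\{\alpha\omega_i\ge n_i/2+1\}$.

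Second, the $\boldsymbol{\eta}$-step cannot rest on the same-sign hypothesis. In a mixed orthant the sign of $\boldsymbol{z}_i^\top\boldsymbol{\eta}$ is not determined, so there is no coordinatewise product of exponentials. Worse, if $\mathbf{Z}=(\boldsymbol{z}_1,\dots,\boldsymbol{z}_m)^\top$ is rank-deficient, the integrand is constant along $\ker\mathbf{Z}$ and the flat-prior integral over $\boldsymbol{\eta}$ diverges. An example is $\boldsymbol{z}_i=(1,\log n_i)^\top$ with all $n_i$ equal, which satisfies every stated hypothesis. So full column rank of $\mathbf{Z}$ has to be added to the statement. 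The paper's orthant argument does not close this either: on $\{\phi_k\le 1\}$ its bound leaves $\phi_k^{-\lambda B_{2k}-1}$, which is not integrable at $0$.

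With full rank the step goes through. Each factor is a bounded function of $t_i=\boldsymbol{z}_i^\top\boldsymbol{\eta}$ that decays exponentially as $t_i\to\pm\infty$, and $\sum_i|\boldsymbol{z}_i^\top\boldsymbol{\eta}|\ge c\|\boldsymbol{\eta}\|$ for some $c>0$. You must then record explicitly how the result depends on $\alpha\omega_i$:
\begin{itemize}
\item for large $\alpha\omega_i$, the amplitude is of order $(\alpha\omega_i)^{-\tilde n_i}$;
\item for small $\alpha\omega_i$, the profile is shifted by $-\log(\alpha\omega_i)$.
\end{itemize}
That dependence, not a generic ``polynomial factors'' remark, is what the final integration against ${\rm BP}(a,b)$ and a merely proper $\pi(\alpha)$ turns on.
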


\begin{proof}
See Appendix \ref{proof_Theorem_3_1}.
\end{proof}


For posterior inference, similar to \cite{2024_Hamura} we consider the change of variable  $\xi_i=\alpha\omega_i$, for $i=1,...,m$, in the joint posterior given by (\ref{HB_posterior}). Then the joint posterior distribution (\ref{HB_posterior}) can be written as follows:
\begin{align}\label{HB_posterior_change_var}
\pi(\boldsymbol{\theta},\boldsymbol{\sigma}^2,\boldsymbol{\beta}&,\sigma_u^2,\alpha,\boldsymbol{\xi},\boldsymbol{\eta}\mid \boldsymbol{\mathcal{D}})\propto~(\sigma_u^2)^{-m/2}\times\alpha^{-m}\times\pi(\alpha)\nonumber\\
&\times\prod_{i=1}^m\left[\left(\frac{1}{\sigma_i^2}\right)^{\frac{n_i}{2}+(\xi_i+1)+1}\times\frac{\xi_i^{\xi_i} (\exp(\boldsymbol{z}_i^\top\boldsymbol{\eta}))^{\xi_i+1}}{\Gamma(\xi_i)}\times\pi^{\rm BP}(\xi_i/\alpha)\right]\nonumber\\
&\times\prod_{i=1}^m\exp\left\{-\frac{(y_i-\theta_i)^2/2+(n_i-1)D_i/2+\xi_i\exp(\boldsymbol{z}_i^\top\boldsymbol{\eta})}{\sigma_i^2}-\frac{(\theta_i-\boldsymbol{x}_i^\top\boldsymbol{\beta})^2}{2\sigma_u^2}\right\},
\end{align}

where $\boldsymbol{\xi}=(\xi_1,...,\xi_m)^\top$. As discussed in Section \ref{subsec:existing_models}, 
 $\gamma$ and the regression parameter $\eta_0$ for $z_{i1}=1$ are not identifiable in the STK2 model. In contrast to STK2, 
 the prior in the GL-BP model in (\ref{HB_posterior_change_var}) depends on $\xi_i$ and $\eta_0$ through both quantities $q_1=\xi_i$ and $q_2=\xi_i\exp(\eta_0)$. Thus, given $(q_1,q_2)$, we can obtain a unique pair of solutions $\xi_i=q_1$ and $\eta_0=\log(q_2/q_1)$. This implies that the parameters $\xi_i$ and $\eta_0$ in the GL-BP model are identifiable even when the intercept is included in the Exp-GVF.

From (\ref{HB_posterior_change_var}), the full conditional distributions of $(\theta_i,\sigma_i^2,\boldsymbol{\beta},\sigma_u^2)$ in the GL-BP model are in closed form similar to those in the existing models in Section \ref{subsec:existing_models}. 
The Gibbs sampling steps for drawing posterior samples of $(\theta_i,\sigma_i^2,\boldsymbol{\beta},\sigma_u^2)$  under the GL-BP  model are provided  in Algorithm \ref{alg:GL_BP}. To obtain the full conditional distribution of $\alpha$ in closed form we consider the BP prior as a convolution of two Gamma densities \citep{2017_Perez}. 
To this end, we introduce the new parameter $\boldsymbol{\rho}=(\rho_1,...,\rho_m)^\top$ in the joint posterior distribution in  (\ref{HB_posterior_change_var}) as follows:
\begin{align}\label{HB_posterior_change_var2}
\pi(\boldsymbol{\theta}&,\boldsymbol{\sigma}^2,\boldsymbol{\beta},\sigma_u^2,\alpha,\boldsymbol{\xi},\boldsymbol{\eta}, \boldsymbol{\rho}\mid \boldsymbol{\mathcal{D}})\propto~(\sigma_u^2)^{-m/2}\times\alpha^{-m}\times\pi(\alpha)\nonumber\\
&\times\prod_{i=1}^m\left[\left(\frac{1}{\sigma_i^2}\right)^{\frac{n_i}{2}+(\xi_i+1)+1}\times\frac{\xi_i^{\xi_i}(\exp(\boldsymbol{z}_i^\top\boldsymbol{\eta}))^{\xi_i+1}}{\Gamma(\xi_i)}\times  {\rm Ga}(\xi_i/\alpha\mid a,\rho_i)\times{\rm Ga}(\rho_i\mid b,1)\right]\nonumber\\
&\times\prod_{i=1}^m\exp\left\{-\frac{(y_i-\theta_i)^2/2+(n_i-1)D_i/2+\xi_i\exp(\boldsymbol{z}_i^\top\boldsymbol{\eta})}{\sigma_i^2}-\frac{(\theta_i-\boldsymbol{x}_i^\top\boldsymbol{\beta})^2}{2\sigma_u^2}\right\}.
\end{align}

By using (\ref{HB_posterior_change_var2}) we obtain that the full conditionals  of $\alpha$ and $\rho_i$ are a Generalized Inverse Gaussian (GIG) distribution and a Gamma distribution, respectively. Therefore Gibbs sampling steps for $\alpha$ and $\rho_i$  are included in Algorithm \ref{alg:GL_BP}. 
For the parameters $\xi_i$ and $\boldsymbol{\eta}$ the full conditional distributions are not in closed form. As proposed by   \cite{2024_Hamura}, posterior samples of
$\xi_i$  can be obtained using  the MH algorithm  proposed in \cite{2019_Miller}. 
However, because our  proposed model adopts a hierarchical structure that differs from that proposed model in \cite{2024_Hamura}, we found that the MH algorithm in \cite{2019_Miller} exhibited mixing issues for the parameter $\xi_i$. To improve the mixing behavior of the MCMC algorithm, we employ an adaptive MH scheme in which the variance of the Gaussian random-walk proposal is updated according to the acceptance rate. Specifically, the proposal variance is adjusted whenever the acceptance rate falls below 40\% or exceeds 60\%, thereby promoting more efficient exploration of the posterior distribution.
More specifically, proposed values of $\xi_i$  on the logarithmic scale are drawn from a Normal proposal distribution centered at the logarithm of the current value.
If the acceptance rate exceeds 0.6, the current proposal variance is multiplied by 1.1 to encourage broader exploration of the parameter space. On the other hand, if the acceptance rate falls below 0.4, the current proposal variance is divided by 1.1 to increase the acceptance probability. 
In our numerical studies, the  variance of the  proposal distribution is updated every $\ell=200$ iteration. 
Similarly, we implement an adaptive MH scheme for $\boldsymbol{\eta}$.
The details of the adaptive MH steps for $\xi_i$ and $\boldsymbol{\eta}$ are given in Algorithm \ref{alg:GL_BP}.




\clearpage

\begin{algorithm}[hp]
    \caption{MCMC algorithm for the proposed GL-BP model} \label{MCMC_HS}
    \begin{algorithmic}[1]
    \small
    \State \textbf{Parameters}: $\boldsymbol{\mathcal{V}}=(\boldsymbol{\theta},\boldsymbol{\sigma}^2,\boldsymbol{\beta},\sigma_u^2,\alpha,\boldsymbol{\rho},\boldsymbol{\xi},\boldsymbol{\eta})$. 
    \State \textbf{Input}: initial values $\boldsymbol{\mathcal{V}}^{(0)}$, observed data $\{n_i,y_i,D_i,\boldsymbol{x}_i,\boldsymbol{z}_i\}_{i=1}^m$,  $(\sigma_{\xi_1}^{2(1)},...,\sigma_{\xi_m}^{2(1)})$, and $\boldsymbol{\Sigma}_{\boldsymbol{\eta}}^{(1)}$.
    \State \textbf{Output}: posterior samples $\boldsymbol{\mathcal{V}}^{(s)}$ for $s=1,...,S$.
    \For {iteration $s=1,2,\ldots,S$}
	\State Sample $\theta_i^{(s)}\sim N\left((1-\gamma_i)y_i+\gamma_i\boldsymbol{x}_i^\top\boldsymbol\beta,\gamma_i\sigma_u^{2}\right)$, $\gamma_i=\frac{\sigma_i^{2}}{\sigma_i^{2}+\sigma_u^{2}}$ independently for $i=1,...,m$.
    \State Sample $\sigma_i^{2(s)}\sim {\rm IG}\left(\frac{n_i}{2}+\xi_i+1,~\frac{1}{2}[(y_i-\theta_i)^2+(n_i-1)D_i]+\xi_i\exp(\boldsymbol{z}_i^\top\boldsymbol{\eta})\right)$ independently for $i=1,...,m$.
    \State Sample $\boldsymbol{\beta}^{(s)}\sim N_p\left(\left(\sum_{i=1}^m\boldsymbol{x}_i\boldsymbol{x}_i^\top\right)^{-1}\left(\sum_{i=1}^m\boldsymbol{x}_i\theta_i\right),~\sigma_u^{2}\left(\sum_{i=1}^m\boldsymbol{x}_i\boldsymbol{x}_i^\top\right)^{-1}\right).$ 
    \State Sample $\sigma_u^{2(s)}\sim {\rm IG}\left(\frac{m}{2}-1,~\frac{1}{2}\sum_{i=1}^m(\theta_i-\boldsymbol{x}_i^\top\boldsymbol{\beta})^2\right)$.
    \State Sample $\alpha^{(s)}\sim{\rm GIG}(a_\alpha-am,~2b_\alpha,~2\sum_{i=1}^m\xi_i\rho_i)$ where ${\rm GIG}(a,b,\gamma)$ has density proportional to $x^{a-1}\exp(-bx/2-\gamma/2x)$.
    \State Sample $\rho_i^{(s)}\sim{\rm Ga}(a+b,1+\xi_i/\alpha)$  independently for $i=1,...,m$.
    \State Sample $\xi_i^{(s)},i=1,...,m$ independently using an adaptive Metropolis-Hasings (MH): 
        \begin{enumerate}
        \item Draw the proposal $\xi_i^*=\exp(\log(\xi_i^*))$ where $\log(\xi_i^*)\sim N(\log(\xi_i^{(s-1)}),\sigma_{\xi_i}^{2(s)})$ and 
        $$\sigma_{\xi_i}^{2(s)}=\begin{cases}
            \sigma_{\xi_i}^{2(s-1)}*1.1 & \text{if the acceptance rate is $>0.6$}\\
            \sigma_{\xi_i}^{2(s-1)}/1.1 & \text{if the acceptance rate is $<0.4$}
        \end{cases}$$
        The proposal variance $\sigma_{\xi_i}^{2(s)}$ is updated every $\ell$ iteration. 
        \item Compute the acceptance probability
        $$
        {\rm AP}(\xi_i)={\rm min}\left\{1,~\frac{\pi(\xi_i^*\mid\sigma_i^2,\rho_i,\alpha,\boldsymbol{\eta})\times1/\xi^{(s-1)}_i}{\pi(\xi_i^{(s-1)}\mid\sigma_i^2,\rho_i,\alpha,\boldsymbol{\eta})\times1/\xi_i^*}\right\},
        $$
        where the full conditional distribution of $\xi_i$ is $$\pi(\xi_i\mid\sigma_i^2,\rho_i,\alpha,\boldsymbol{\eta})\propto\frac{(\xi_i\exp(\boldsymbol{z}_i^\top\boldsymbol{\eta})/\sigma_i^2)^{\xi_i}}{\Gamma(\xi_i)}~\xi_i^{a-1}\exp\left\{-\frac{\xi_i\rho_i}{\alpha}-\frac{\xi_i\exp( \boldsymbol{z}_i^\top\boldsymbol{\eta})}{\sigma_i^2}\right\}.$$
        \item Sample $u_i\sim{\rm Uniform}(0,1)$.
        If $u_i<{\rm AP}(\xi_i)$,  $\xi_i^{(s)}=\xi_i^*$. Otherwise, $\xi_i^{(s)}=\xi_i^{(s-1)}$.
        \end{enumerate}
    \State Sample $\boldsymbol{\eta}^{(s)}$ using an adaptive MH algorithm:
        \begin{enumerate}
        \item Sample the proposal $\boldsymbol{\eta}^*\sim N_q(\boldsymbol{\eta}^{(s-1)},\boldsymbol{\Sigma}_{\boldsymbol{\eta}}^{(s)})$ where $\boldsymbol{\Sigma}_{\boldsymbol{\eta}}^{(s)}$ is updated every $\ell$ iteration based on the acceptance rate in the same manner as $\sigma_{\xi_i}^{2(s)}$.
        \item Compute the acceptance probability 
        \begin{align*}
        {\rm AP}(\boldsymbol{\eta})={\rm min}\left\{1,~\frac{\pi(\boldsymbol{\eta}^*\mid\boldsymbol{\sigma}^2,\boldsymbol{\xi})}{\pi(\boldsymbol{\eta}^{(s-1)}\mid\boldsymbol{\sigma}^2,\boldsymbol{\xi})}
        \right\},
        \end{align*}
        where $\pi(\boldsymbol{\eta}\mid\boldsymbol{\sigma}^2,\boldsymbol{\xi})\propto \prod_{i=1}^m\exp\left((\xi_i+1)~\boldsymbol{z}_i^\top\boldsymbol{\eta}\right)\exp\left(-\xi_i\exp(\boldsymbol{z}_i^\top\boldsymbol{\eta})/\sigma_i^2\right)$ is the full conditional distribution of $\boldsymbol{\eta}$.
        \item Sample $u\sim{\rm Uniform}(0,1)$. If $u<{\rm AP}(\boldsymbol{\eta})$, $\boldsymbol{\eta}^{(s)}=\boldsymbol{\eta}^*$. Otherwise,  $\boldsymbol{\eta}^{(s)}=\boldsymbol{\eta}^{(s-1)}$.
        \end{enumerate}
    \EndFor
	\end{algorithmic} 
    \label{alg:GL_BP}
\end{algorithm}

\clearpage

\section{Simulation studies}\label{sec:simulation}


We evaluate the performance of the proposed GL-BP model and the existing Bayesian SAE models  in Table \ref{tab:model_structures} given in Section \ref{sec:computation} through simulation studies. We follow similar simulation settings to those given in \cite{2017_Sukasawa} which were also previously considered by \cite{2003_Wang&Fuller} and \cite{2014_Maiti}. For the simulated data, 80\% of the $n_i$ are generated from the interval (7,20), and the remaining 20\% are generated from the interval (21,100).
 We generate observations for each small area by using 
\begin{align*}
y_{ij}=\beta_0+\beta_1x_i+u_i+e_{ij},\quad j=1,...,n_i,\quad i=1,...,m,    
\end{align*}
where $u_i\sim N(0,\sigma_u^2)$ and $e_{ij}\sim N(0,n_i\sigma_i^2)$. Then the model for the small area mean is 
\begin{align*}
    y_i=\beta_0+\beta_1x_i+u_i+e_i,\quad i=1,...,m,  
\end{align*}
 where $y_i=\bar{y}_{i\cdot}=n_i^{-1}\sum_{j=1}^{n_i} y_{ij}$ and $e_i=n_i^{-1}\sum_{j=1}^{n_i} e_{ij}$. Therefore, $y_i\mid\theta_i \sim N(\theta_i,\sigma_i^2)$ where $\theta_i=\beta_0+\beta_1x_i+u_i$, that is, $\theta_i\sim N(\beta_0+\beta_1x_i,\sigma_u^2)$, and $e_i\sim N(0,\sigma_i^2)$. The parameter of interest is the mean $\theta_i$ in the $i$-th small area. The direct estimator of $\sigma_i^2$ is given by
\begin{align*}
    D_i=\frac{1}{n_i}\frac{1}{n_i-1}\sum_{j=1}^{n_i}(y_{ij}-y_i)^2,
\end{align*}
noting that $D_i\mid\sigma_i^2\sim{\rm Ga}((n_i-1)/2,(n_i-1)/(2\sigma_i^2))$. We generate the covariate $x_i$ using a Uniform distribution on the interval (1, 2), consider the regression parameters as $\beta_0 = 0.5$ and $\beta_1 = 0.8$ and the variance of the random effects is set equal to $\sigma_u^2 = 3$. We consider $m = 30$  as in \cite{2014_Maiti} and \cite{2017_Sukasawa}. To evaluate the performance of the models  we consider four cases to generate the sampling variances $\sigma_i^2$ as follows:

\begin{enumerate}
    \item Case 1: $\sigma_i^2 = 5\exp(-0.3\log(n_i))$,
    \item Case 2: $\sigma_i^2 = 5\lambda_i\exp(-0.3\log(n_i))$ where $\lambda_i \sim 0.2\text{Log-Normal}(0, 0.01^2) + 0.8\delta_{\mu_{i}}$, 
    \item Case 3: $\sigma_i^2 = 5\lambda_i\exp(-0.3\log(n_i))$ where $\lambda_i \sim 0.5\text{Log-Normal}(0, 0.01^2) + 0.5\delta_{\mu_{i}}$, 
    \item Case 4: $\sigma_i^2 = 5\lambda_i\exp(-0.3\log(n_i))$ where $\lambda_i \sim 0.8\text{Log-Normal}(0, 0.01^2) + 0.2\delta_{\mu_{i}}$,     
\end{enumerate}

where $\mu_i=1$ and $\delta_{\mu_{i}}$ denotes a point mass at $\mu_i$ and $\text{Log-Normal}(\mu_{\text{L-N}}, \sigma^2_{\text{L-N}})$ denotes a Log-Normal distribution
with mean $\mu_{\text{L-N}}$ and variance $\sigma^2_{\text{L-N}}$. In Case 1 we consider that the variance parameters are equal to the Exp-GVFs. Cases 2-4 incorporate random effects in  20\%, 50\% and 80\% of the small areas where $n_i<20$.


\subsection{Results of the simulation studies}\label{subsec:simulation_results}

We apply the following four models to the simulated data where the prior distribution of $\sigma_i^2$ is as follows:
\begin{align*}
    {\rm YC}&:\sigma_i^2\sim{\rm IG}(0.0001,0.0001), && \hspace{.21cm}{\rm STK1}:\sigma_i^2\sim{\rm IG}(a_i,b_i\gamma),\\
    {\rm STK2}&:\sigma_i^2\sim{\rm IG}(a_i,b_i\gamma\exp(\eta\log(n_i))), &&\text{GL-BP}:\sigma_i^2\sim{\rm IG}(\alpha\omega_i+1,\alpha\omega_i\exp(\eta_0+\eta_1\log(n_i))).
\end{align*}

We set  $a_i=2$ and $b_i=1/n_i$ for STK1 and STK2, as recommended by \cite{2017_Sukasawa}.
For GL-BP, we use $\omega_i\sim{\rm BP}(2,1/2)$, as discussed in Section \ref{subsec:GL_shrinkage_prior}, and $\alpha\sim{\rm Ga}(1,1)$, as discussed in Section \ref{subsec:computation_proposed_model}. 
For each model, the corresponding MCMC algorithm is run for 80,000 iterations with the first 40,000 discarded as burn-in. The remaining samples are thinned by keeping one out of every four samples, resulting in 10,000 posterior samples. The posterior estimates of the model parameters are calculated as the means of these remaining 10,000 MCMC samples. We compute the Average Squared Deviations (ASDs) and the Absolute Bias based on $R=1000$ replications as in \cite{2017_Sukasawa} as follows:

\begin{align*}
    {\rm ASD}=\frac{1}{mR}\sum_{i=1}^m\sum_{r=1}^R\left(\hat{\theta}_i^{(r)}-\theta_i^{(r)}\right)^2,\quad 
    {\rm Bias}=\frac{1}{mR}\sum_{i=1}^m\left|\sum_{r=1}^R\left(\hat{\theta}_i^{(r)}-\theta_i^{(r)}\right)\right|,
\end{align*}

where $\theta_i^{(r)}$ and $\hat\theta_i^{(r)}$ are the true value and the posterior mean estimates in the $r$-th replication, respectively. Moreover, to evaluate the uncertainty produced by the posterior estimators, we compute the coverage probabilities (CPs) of the 95\% credible intervals as follows:

\begin{align*}
    {\rm CP}=\frac{1}{mR}\sum_{i=1}^m\sum_{r=1}^R\text{1}\left\{\theta_i\in[\hat{L}^{(r)}_{\theta_i},\hat{U}^{(r)}_{\theta_i}]\right\},
\end{align*}


where $\hat{L}^{(r)}_{\theta_i}$ and $\hat{U}^{(r)}_{\theta_i}$ are the 2.5\% and 97.5\% quantiles of the posterior samples of $\theta_i$ in the $r$-th iteration, respectively.

 
\begin{table}[ht]
\centering
\scriptsize
\renewcommand{\arraystretch}{1.3}


\begin{tabular}{@{\hspace{-0.7cm}}  c @{\hspace{0.5cm}} c}

\begin{tabular}{l|c|c|c|c|c}
\toprule

\multicolumn{6}{c}{\bfseries Case 1} \\

\midrule


& \multicolumn{2}{c}{\bfseries $\theta_i$}
& \multicolumn{2}{|c|}{\bfseries $\sigma_i^2$} \\

\cmidrule(lr){2-3} \cmidrule(lr){4-6}

\textbf{Model}
& \textbf{ASD } & \textbf{Bias} 
& \textbf{ASD } & \textbf{Bias} & \textbf{CP (\%)} \\

\midrule

    YC & 1.358 & 0.085 & 0.777 & 0.069 & 94.0 \\ 
  STK1 & 1.376 & 0.085 & 0.671 & 0.295 & 93.0 \\ 
  STK2 & 1.344 & 0.089 & 0.672 & 0.268 & 94.7 \\ 
  GL-BP & 1.339 & 0.088 & 0.134 & 0.046 & 94.9\\ 
   
\midrule

\multicolumn{6}{c}{\bfseries Case 3} \\

\midrule

& \multicolumn{2}{c}{\bfseries $\theta_i$}
& \multicolumn{2}{|c|}{\bfseries $\sigma_i^2$} \\

\cmidrule(lr){2-3} \cmidrule(lr){4-6}

\textbf{Model}
& \textbf{ASD } & \textbf{Bias} 
& \textbf{ASD } & \textbf{Bias} & \textbf{CP (\%)} \\

\midrule

  YC & 1.393 & 0.067 & 0.867 & 0.050 & 93.9 \\ 
  STK1 & 1.416 & 0.067 & 0.713 & 0.286 & 92.9 \\ 
  STK2 & 1.365 & 0.066 & 0.743 & 0.270 & 94.9 \\ 
   GL-BP & 1.347 & 0.068 & 0.167 & 0.054 & 95.1 \\

\bottomrule
\end{tabular}

&

\renewcommand{\arraystretch}{1.3}


\begin{tabular}{l|c|c|c|c|c}
\toprule


\multicolumn{6}{c}{\bfseries Case 2} \\
\midrule


& \multicolumn{2}{c}{\bfseries $\theta_i$}
& \multicolumn{2}{|c|}{\bfseries $\sigma_i^2$} \\

\cmidrule(lr){2-3} \cmidrule(lr){4-6}

\textbf{Model}
& \textbf{ASD } & \textbf{Bias} 
& \textbf{ASD } & \textbf{Bias} & \textbf{CP (\%)} \\

\midrule

     YC & 1.449 & 0.060 & 0.819 & 0.052 & 93.4 \\ 
  STK1 & 1.477 & 0.061 & 0.679 & 0.287 & 92.2 \\ 
  STK2 & 1.417 & 0.058 & 0.714 & 0.267 & 94.7 \\ 
   GL-BP & 1.404 & 0.056 & 0.145 & 0.050 & 94.9 \\

\midrule
 
\multicolumn{6}{c}{\bfseries Case 4} \\

\midrule

& \multicolumn{2}{c}{\bfseries $\theta_i$}
& \multicolumn{2}{|c|}{\bfseries $\sigma_i^2$} \\

\cmidrule(lr){2-3} \cmidrule(lr){4-6}

\textbf{Model}
& \textbf{ASD } & \textbf{Bias} 
& \textbf{ASD } & \textbf{Bias} & \textbf{CP (\%)} \\

\midrule

  YC & 1.417 & 0.045 & 0.834 & 0.047 & 93.4 \\ 
  STK1 & 1.440 & 0.046 & 0.681 & 0.274 & 92.4 \\ 
  STK2 & 1.388 & 0.045 & 0.768 & 0.288 & 94.7 \\ 
   GL-BP & 1.371 & 0.046 & 0.168 & 0.052 & 94.8 \\

\bottomrule
\end{tabular}

\end{tabular}

\caption{Simulation Cases 1-4. Simulated Average Squared Deviations (ASDs), absolute biases, and coverage probabilities (CPs) of 95\% nominal levels.}
\label{Table_S}
\end{table}
Table \ref{Table_S} shows the ASDs,  absolute biases, and  CPs obtained in the different simulation cases. 
According to the results, the proposed  GL-BP model  yields smaller  ASDs and absolute biases  for both $\theta_i$ and $\sigma_i^2$ than those obtained under  YC, STK1 and STK2. 
We note that the improvement achieved by the proposed GL-BP model is not affected by the proportion of areas with random effects. Importantly, we also note that by using the Bayesian credible 
intervals  obtained under the GL-BP model we attain the 95\% nominal levels in the different simulation cases.

\section{Applications}\label{sec:application}
We apply the Bayesian SAE models in Table \ref{tab:model_structures} to the Corn data in Section \ref{subsec:corndata} and the Educational Attainment Index--At Least High School (PEAI--AHS) data in Section \ref{subsec:PEAI-AHS}. We use $\omega_i\sim{\rm BP}(2,1/2)$ and $\alpha\sim{\rm Ga}(1,1)$ for the proposed GL-BP models. For the models in \cite{2017_Sukasawa}, we use $a_i=2$ and $b_i=1/n_i$. 
For each model, the corresponding MCMC algorithm is run for 80,000 iterations. The posterior estimates of the model parameters are calculated as the means of 10,000 MCMC posterior samples obtained after a burn-in of 40,000 iterations with every  fourth sample retained.

\subsection{Corn data}\label{subsec:corndata}

The data set on Corn production from the U.S. Department of Agriculture was previously studied by many researchers, e.g., \cite{1988_Battese}, \cite{2006_You&Chapman}, \cite{2012_Dass}, and \cite{2017_Sukasawa}, to evaluate their proposed models. The original data analyzed by \cite{1988_Battese} consists of 12 Iowa counties with sample sizes in each county ranging from 1 to 5. However, according to Theorem \ref{thm:proper_posterior}, our proposed model requires $n_i>1$ to ensure that the posterior distribution is proper. Hence, 
we use the modified data consisting of $m=8$ counties with $n_i\geq3$ as given in Table 6 of \cite{2012_Dass} and studied previously in \cite{2017_Sukasawa}.  In this application the mean  crop hectares for corn $(y_i)$ are the direct survey estimates. The covariates  obtained from the Land Observatory Satellites (LANDSAT) are
the mean numbers  of pixels of corn $(x_{1i})$ and the mean numbers  of pixels of soybean $(x_{2i})$. The values of $y_i$, $x_{1i}$, $x_{2i}$, and $\sqrt{D_i}$ are scaled by 100. Under those settings, the modified FH model is given by
\begin{align*}
    y_i\mid\theta_i,\sigma_i^2\sim N(\theta_i,&\sigma_i^2),\quad\theta_i\mid\beta_0,\beta_1,\beta_2,\sigma_u^2\sim N(\beta_0+\beta_1x_{1i}+\beta_2x_{2i},\sigma_u^2),\\
    D_i&\mid\sigma_i^2\sim{\rm Ga}((n_i-1)/2,(n_i-1)/(2\sigma_i^2)),
\end{align*}

where $i=1,...,8$. We consider $\log(n_i)$ as a covariate in the Exp-GVF for the STK2 and GL-BP models. The posterior distribution under the proposed GL-BP model is  proper because the covariate  $\log(n_i)$ is positive for all counties (see Theorem \ref{thm:proper_posterior}). We apply the four Bayesian SAE models in Table \ref{tab:model_structures} to the Corn data set. 
We consider a  Uniform  prior for $\sigma_i^2$, $\pi(\sigma_i^2)\propto1/\sigma_i^2$, under the YC model. 
For the remaining models, we use Inverse-Gamma priors: $\sigma_i^2\sim{\rm IG}(a_i,b_i\gamma)$ for STK1, $\sigma_i^2\sim{\rm IG}(a_i,b_i\gamma\exp(\eta_1 \log(n_i)))$ for STK2, and $\sigma_i^2\sim{\rm IG}(\alpha\omega_i+1,\alpha\omega_i\exp(\eta_0+\eta_1\log(n_i)))$ for GL-BP. We assume $\pi(\boldsymbol{\beta},\sigma_u^2)\propto1$ for all four models.

\begin{table}[ht!]
    \centering
    \resizebox{.9\textwidth}{!}{
    \begin{tabular}{l|ccccccccc}
    \hline
     & $\beta_0$ & $\beta_1$ & $\beta_2$ & $\sigma_u^2$ & $\gamma$ & $\eta_0$ & $\eta_1$ & $\alpha$ & DIC\\
    \hline
      YC & -1.721 & 0.741 & 0.362 & 0.297 & - & - &  - & - &  -10.34 \\
      STK1 & -1.537 & 0.666  & 0.369  & 0.262 & 0.566 & - & - & - & -16.58 \\
      STK2 & -1.396 & 0.633 & 0.347 & 0.298 & 1.629 & - & -0.631 & - & -16.96  \\
      GL-BP & -1.783 & 0.723 & 0.409 & 0.332 & - & -0.509 & -1.346 & 1.058 & \textbf{-17.63} \\
      \hline
    \end{tabular}}
    \caption{Posterior mean estimates of model parameters and Deviance Information Criterion (DIC) in the Corn data set. 
    }
    \label{tab:corndata}
\end{table}

For model comparison, we calculate the Deviance Information Criterion (DIC) of \cite{2002_Spiegelhalter} given by ${\rm DIC}=2\overline{D(\boldsymbol{\phi})}-D(\bar{\boldsymbol{\phi}})$ where $\boldsymbol{\phi}=(\boldsymbol{\beta},\sigma_u^2,\sigma_1^2,...,\sigma_m^2)$, $D(\boldsymbol{\phi})$ is (-2) times log-marginal likelihood function, and $\overline{D(\boldsymbol{\phi})}$ and $\bar{\boldsymbol{\phi}}$ are the posterior means of $D(\boldsymbol{\phi})$ and $\boldsymbol{\phi}$, respectively. 
For the Bayesian SAE models with unknown sampling variances in Table \ref{tab:model_structures}, the marginal likelihood is calculated as $\prod_{i=1}^m[ N(y_i\mid\boldsymbol{x}_i^\top\boldsymbol{\beta},\sigma_i^2+\sigma_u^2)\times{\rm Ga}(D_i\mid (n_i-1)/2,(n_i-1)/(2\sigma_i^2))]$.

Table \ref{tab:corndata} presents the posterior estimates of the model parameters and DIC values. According to the DIC values in Table \ref{tab:corndata}, the proposed GL-BP model 
is the most suitable for the Corn data set. 

In Figure $\ref{fig:corndata}$ (left), we compare the DVEs with the posterior estimates of $\sigma_i^2$ produced by YC, STK1, STK2, and GL-BP. 
Figure $\ref{fig:corndata}$ (left) illustrates that the posterior estimator of $\sigma_i^2$ of GL-BP (blue) is, in almost all counties, a weighted average of the DVE (black) and the Exp-GVF (red). The GL-BP model imposes shrinkage toward the posterior estimates of the Exp-GVF possibly due to the unreliability of the DVEs  arising from the very small sample sizes, $3\leq n_i\leq5$.
Consistent with the findings of \cite{2017_Sukasawa}, the posterior estimate of $\sigma_i^2$ under STK1 and STK2 shrinks the DVE toward a prior mean 
whereas the posterior estimate obtained under YC remains close to $D_i$. Figure \ref{fig:corndata} (right) displays  the 95\% credible intervals of $\theta_i$ for each model. The result shows that GL-BP produces similar but slightly narrower credible intervals of $\theta_i$ than those obtained with STK1 and STK2. As originally noted by \cite{2017_Sukasawa}, the credible interval of YC in County 1 is much narrower than that obtained with
STK1, STK2 and GL-BP. However,  the estimate of $\theta_1$ obtained with YC may not be reliable in the Corn data set because of the instability of the variance estimate when $n_i=3$.
Figure \ref{fig:corndata2} illustrates the posterior distributions of the small area posterior estimates for  $\sqrt{\sigma_i^2}$ and $\theta_i$ under the GL-BP model. 
As illustrated in Figure \ref{fig:corndata2},  except for County 1 which exhibits a very small DVE,
the posterior distributions under the proposed model closely align with the survey estimates.

\begin{figure}[ht!]
\centering
    \begin{subfigure}[t]{0.5\textwidth}
         \centering
         \includegraphics[width=\textwidth]{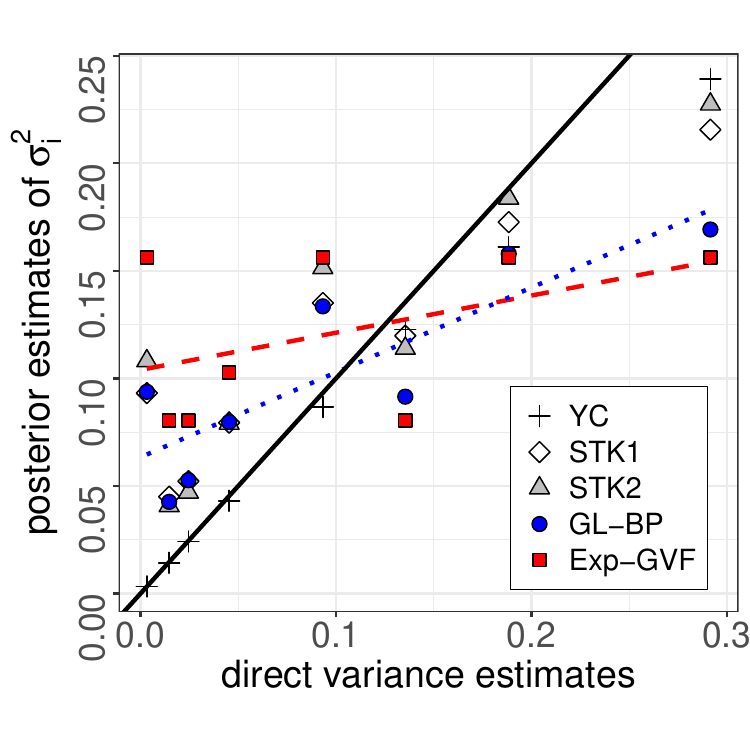}
     \end{subfigure}~
     \begin{subfigure}[t]{0.5\textwidth}
         \centering
         \includegraphics[width=\textwidth]{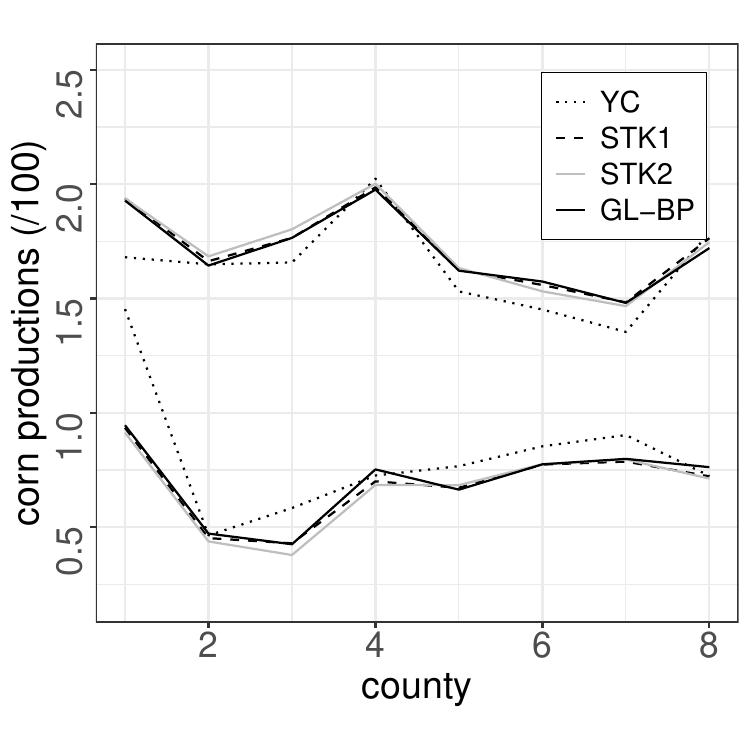}
     \end{subfigure}
     \vspace{-1cm}
     \caption{ Direct Variance Estimates (DVEs) versus posterior estimates of $\sigma_i^2$ in the Corn data (left). The black solid line represents the DVEs. The red dashed line represents a linear trend of the posterior estimates of the Exp-GVF, $\exp(\eta_0+\eta_1\log(n_i))$, under the proposed GL-BP model. The blue dotted line represents a linear trend of the posterior estimates of $\sigma_i^2$ produced by the GL-BP model.
   95\% credible intervals of $\theta_i$ in the Corn data (right).}
        \label{fig:corndata}
\end{figure}

\begin{figure}[ht!]
\centering
    \begin{subfigure}[t]{0.5\textwidth}
         \centering
         \includegraphics[width=\textwidth]{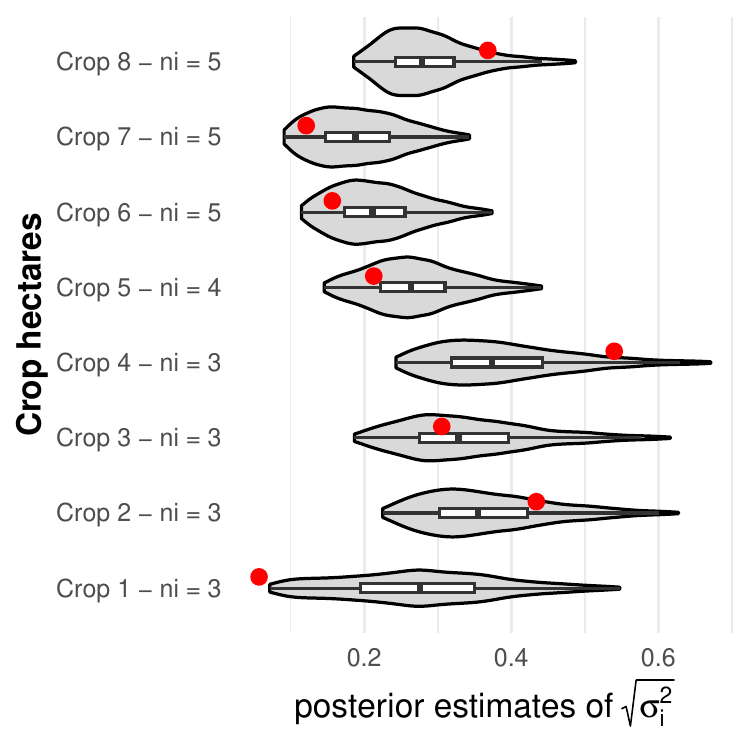}
     \end{subfigure}~
     \begin{subfigure}[t]{0.5\textwidth}
         \centering
         \includegraphics[width=\textwidth]{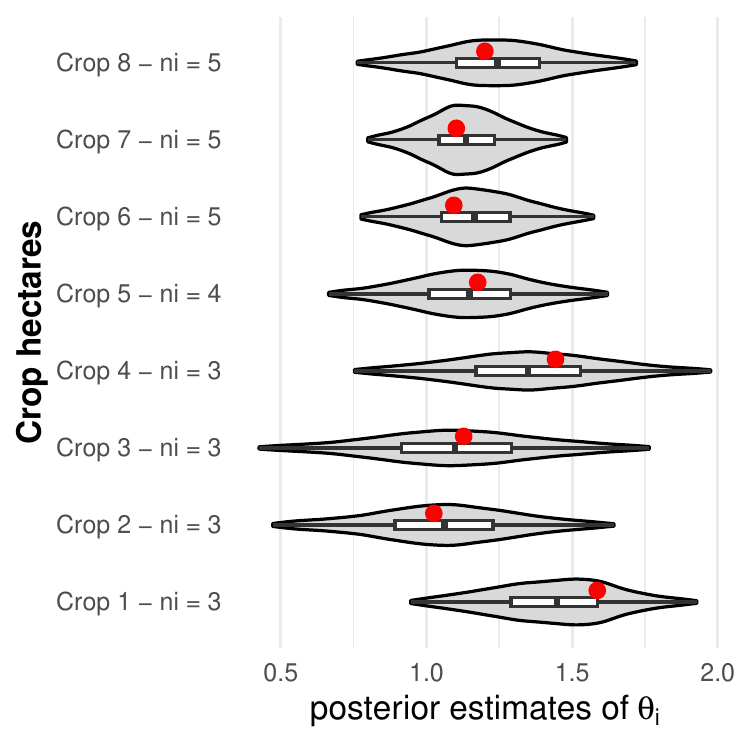}
     \end{subfigure}
     \vspace{-1cm}
     \caption{Posterior estimates of $\sqrt{\sigma_i^2}$ (left) and $\theta_i$ (right) in the Corn data. The red dots on the left and right plots are the squared roots of the Direct Variance Estimates (DVEs) and the direct survey estimates ($y_i$), respectively.  }
        \label{fig:corndata2}
\end{figure}


\subsection{Prevalence of the Educational Attainment Index--At Least High School (PEAI--AHS) at subnational levels} \label{subsec:PEAI-AHS}

Educational Attainment Index (EAI) refers to the highest level of education successfully completed by an individual (United Nations Educational, Scientific and Cultural Organization \citep{UNESCO}). The EAIs are important for policy making in low- and middle-income countries to track the progress of education and assess disparities, helping governments design policies that align with the Sustainable Development Goals (SDGs) proposed by the United Nations \citep{UN2015Agenda2030}. Particularly, the EAI is closely related to SDG indicators on education and human development, which aim to ensure \textit{inclusive and equitable quality education and promote lifelong learning opportunities for all} \citep{UN_SDG_ExtendedReport_2025}. Although national-level estimates of the EAI are available from the UNESCO Institute for Statistics (UIS) \citep{UNESCO} and the World Bank \citep{World-Bank}, subnational-level estimates  are  more important for informing public policy in developing countries.
\begin{figure}[ht!]
    \vspace{-0.8cm}
\centering    
    \begin{subfigure}[t]{0.67\textwidth}
 \hspace{-3.5cm}
    \vspace{-0.4cm}
         \centering
         \includegraphics[width=\textwidth]{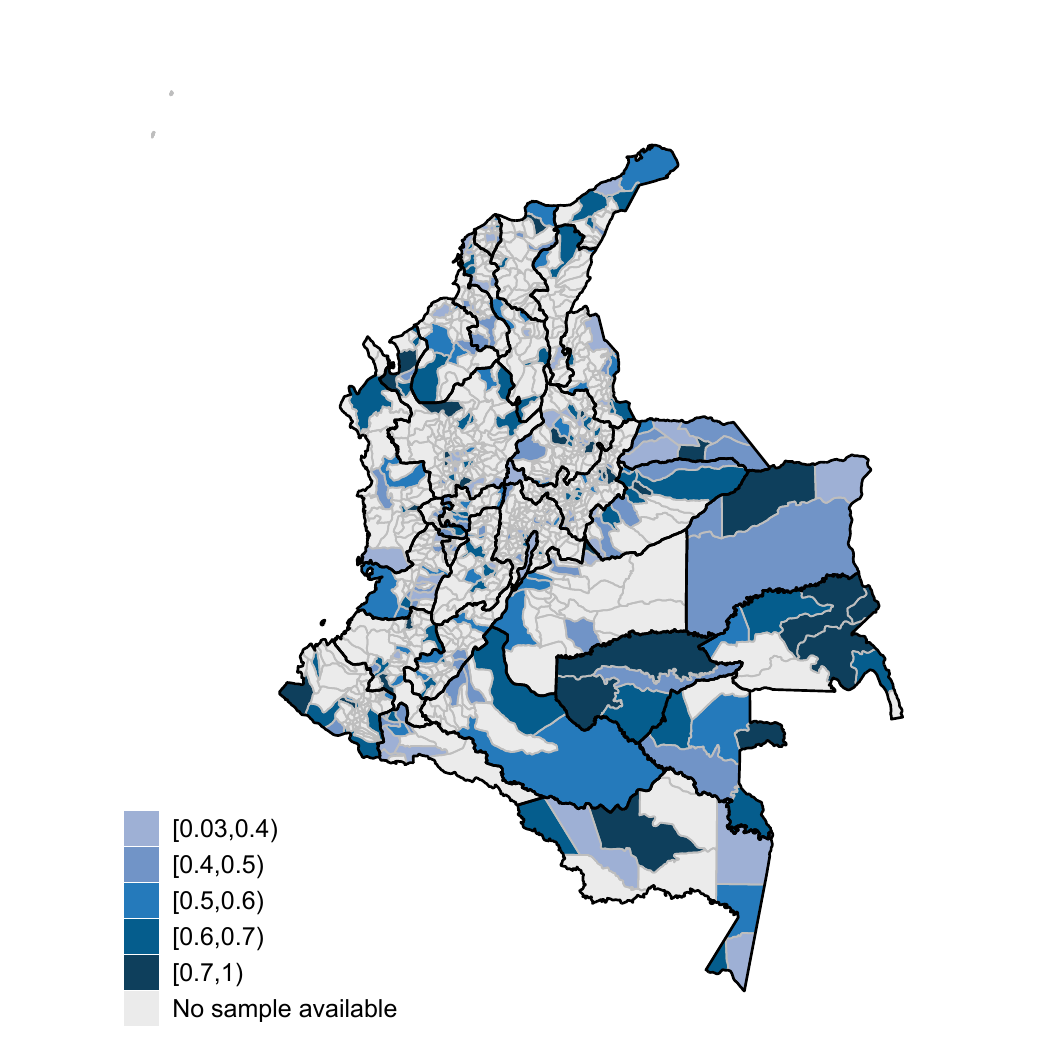}
     \end{subfigure}~
     \begin{subfigure}[t]{0.57\textwidth}
    \hspace{-6.5cm}
         \centering
         \includegraphics[width=\textwidth]{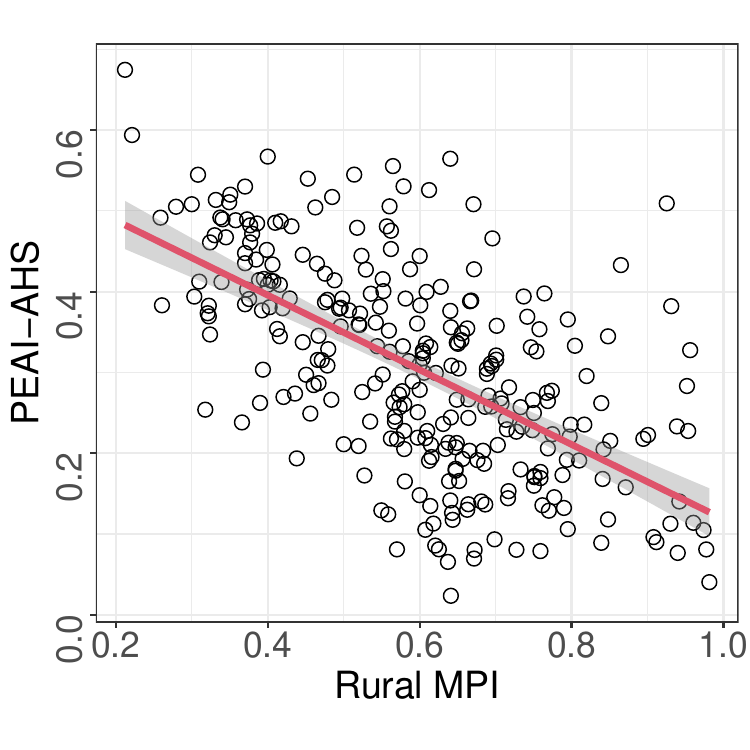}
     \end{subfigure}
     \vspace{-0.5cm}
     \caption{Direct estimates of the Prevalence of the Educational Attainment Index--At Least High School (PEAI-AHS; $y_i$) in municipalities in Colombia in 2015 (left).  $y_i$ versus the Rural Multidimensional Poverty Indexes (Rural MPI; $x_i$) in 2015 (right).}
        \label{fig:data}
\end{figure} 
In this section, we estimate the Prevalence of the Educational Attainment Index--At Least High School (PEAI--AHS) in $m=289$ municipalities in Colombia in 2015 for women aged 29–47. 
To produce the direct estimate of the PEAI–AHS ($y_i$) and the DVEs of the PEAI–AHS ($D_i$), we use the microdata provided by the Demographic and Health Survey (DHS) in Colombia in 2015  \citep{Profamilia_COL_DHS_2005} and the direct estimators proposed by \cite{Hajek1971}. 

\subsubsection{Exploratory data analysis of the PEAI--AHS application}\label{subsec:exploratory_PEAI-AHS}

According to \cite{cox2010educational}, Latin America has some of the highest levels of educational inequality, which are closely associated with income inequality and poverty. In the specific case of Colombia, the link between low educational attainment and poverty is especially strong due to socio-economic disadvantages and insufficient school resources, both of which significantly hinder student achievement as pointed out by \cite{RangelLleras2010} and \cite{BorchersCunha2025}. Therefore, we use the Rural Multidimensional Poverty Index (Rural MPI; $x_i$) in 2015 as a covariate in the FH model (\ref{FH_model}) calculated by the Colombia NSO with the methodology developed by the Food and Agriculture Organization of the United Nations (FAO) together with Oxford Poverty and Human Development Initiative (OPHI) \citep{FAO_OPHI_2022_RMPI}. In this PEAI--AHS application, the correlation between $y_i$ and $x_i$ is -0.57. Figure \ref{fig:data} (left) displays that lower direct estimates $y_i$ are observed in most municipalities with approximately 92\% of them having an education index below 0.5. Figure \ref{fig:data} (right) illustrates the relationship between $y_i$ and $x_i$.

We consider two covariates for the Exp-GVF. The first covariate is the logarithm of the sample size $n_i$. 
Because the number of households recorded in the census is closely related to population growth and often serves as a key auxiliary variable in population-based surveys such as the DHS, the second covariate  is the logarithm of the projected number of households for 2015  ($z_i$).
The plots in Figures \ref{fig:PEAI-AHS} illustrate the  relationships between $\log(D_i)$ and $\log(n_i)$ with a correlation of -0.88, and between $\log(D_i)$ and $\log(z_i)$ with a correlation of -0.43. 
Specifically, in this SAE application we consider the following two GVF models:
\begin{align*}
    \text{GVF1}:~\log(D_i)&=\eta_0+\eta_1\log(n_i)+\varepsilon_i,\\   \text{GVF2}:~\log(D_i)&=\eta_0+\eta_1\log(n_i)+\eta_2\log(z_i)+\varepsilon_i.
\end{align*}

Figure \ref{fig:PEAI-AHS-GVFplots} (left) compares the square roots of the DVEs and the GVF-smoothed variances obtained with  the exponential of the fitted values produced with the GVF model in (\ref{GVF_model}). 
Figure \ref{fig:PEAI-AHS-GVFplots} (left) displays that the DVEs are highly variable when $n_i$ is small. The smoothed variance estimators obtained with both GVF1 and GVF2 methods help smooth out the DVEs. When $n_i$ is large, the DVEs and smoothed variance estimates are similar. This behavior has been widely observed in SAE applications and was recently highlighted by \cite{2023_You&Hidiroglou}. As pointed out by   \cite{2003_Wang&Fuller}  and  \cite{2009_Maples}, the use of the degrees of freedom $\nu_i=n_i - 1$ in (\ref{gamma_model_Di})
where  $D_i\mid\sigma_i^2\sim{\rm Ga}(\nu_i/2,\nu_i/(2\sigma_{i}^{2}))$ is theoretically justified only under simple random sampling. Since the DHS survey under consideration employs a complex sampling design and the parameter of interest is a prevalence, this assumption does not hold exactly, and an approximation to the degrees of freedom is therefore required. 
 In this paper  to estimate $\nu_i$  we consider the Delta method. The Delta-method approximation is obtained
by considering the variance estimator $D_i=y_i(1-y_i)/n_i$ as a smooth function of the approximately Normal direct estimator $y_i$. By matching
the resulting approximate variance of $D_i$ with the variance of the Gamma distribution in (\ref{gamma_model_Di}) we obtain  $\hat{\nu}_i=2n_iy_i(1-y_i)/(1-2y_i)^2$. The details
of the implementation of the Delta method are given in Appendix \ref{application_sup} of the Supplementary Material. Figure \ref{fig:PEAI-AHS-GVFplots} (right) illustrates  the linear relationship between
$n_i$ and the estimated degrees of freedom $\hat{\nu}_i$.

\begin{figure}[ht!]
\centering
     \begin{subfigure}[t]{0.47\textwidth}
         \centering
         \includegraphics[width=\textwidth]{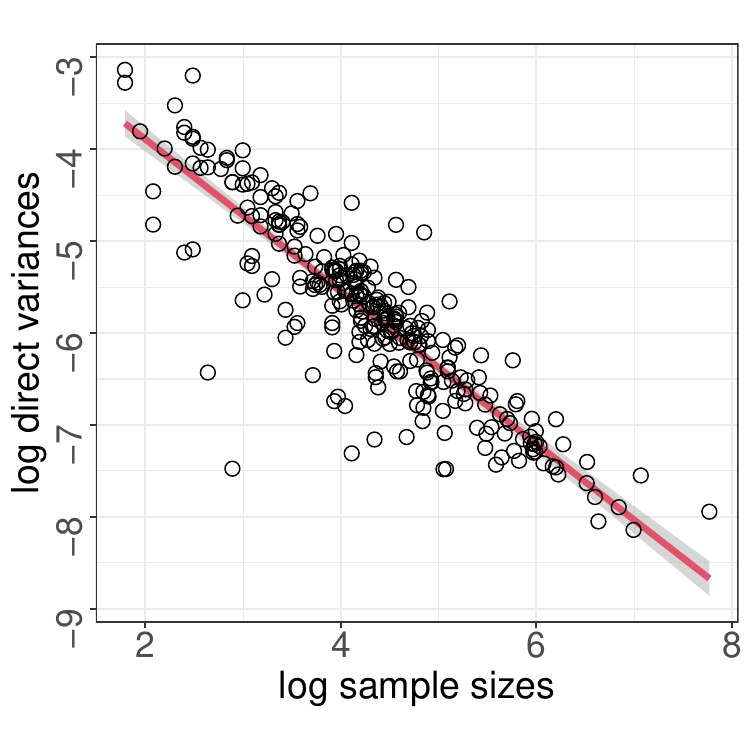}
     \end{subfigure}~
     \begin{subfigure}[t]{0.47\textwidth}
         \centering
         \includegraphics[width=\textwidth]{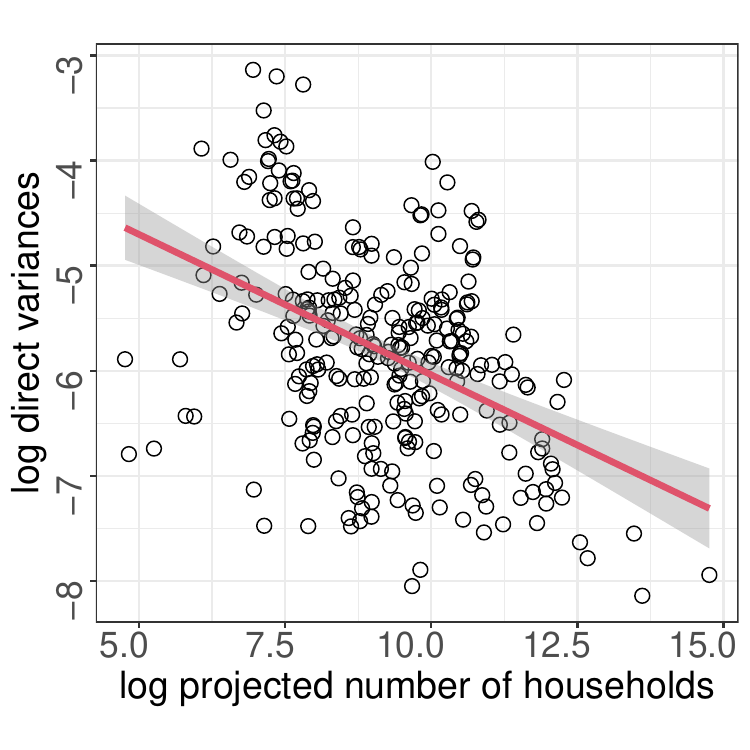}
     \end{subfigure}
     \vspace{-0.3cm}
     \caption{Log Direct Variance Estimates (log DVEs) versus covariates considered in the GVF method: log sample size (left) and log projected number of households in 2015 (right) in the PEAI–AHS data.}
        \label{fig:PEAI-AHS}
\end{figure}

\begin{figure}[ht!]
\centering    
     \begin{subfigure}[t]{0.60\textwidth}
         \centering   \hspace{-2.3cm}
         \includegraphics[width=\textwidth]{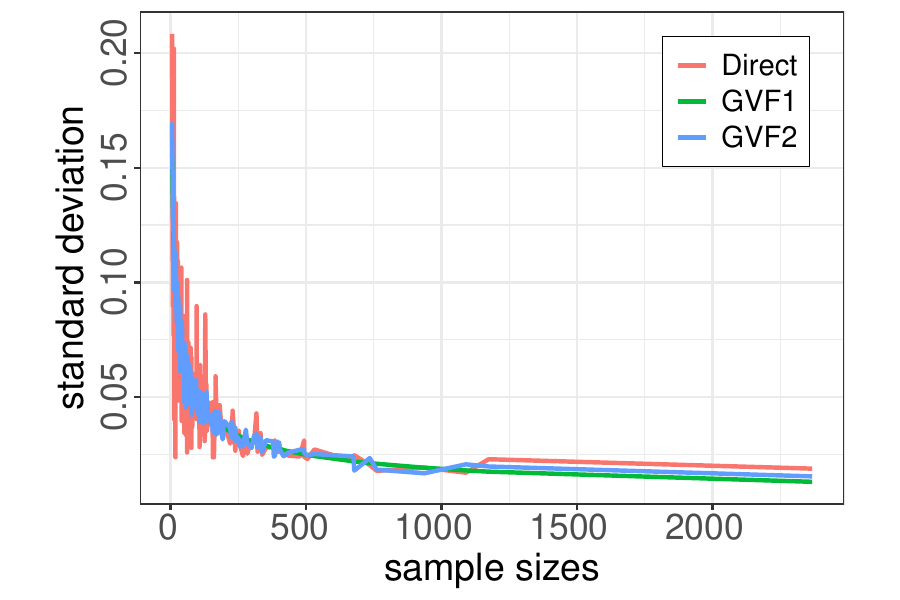}
     \end{subfigure}~
     \begin{subfigure}[t]{0.47\textwidth}
         \centering \hspace{-3.5cm}
         \includegraphics[width=\textwidth]{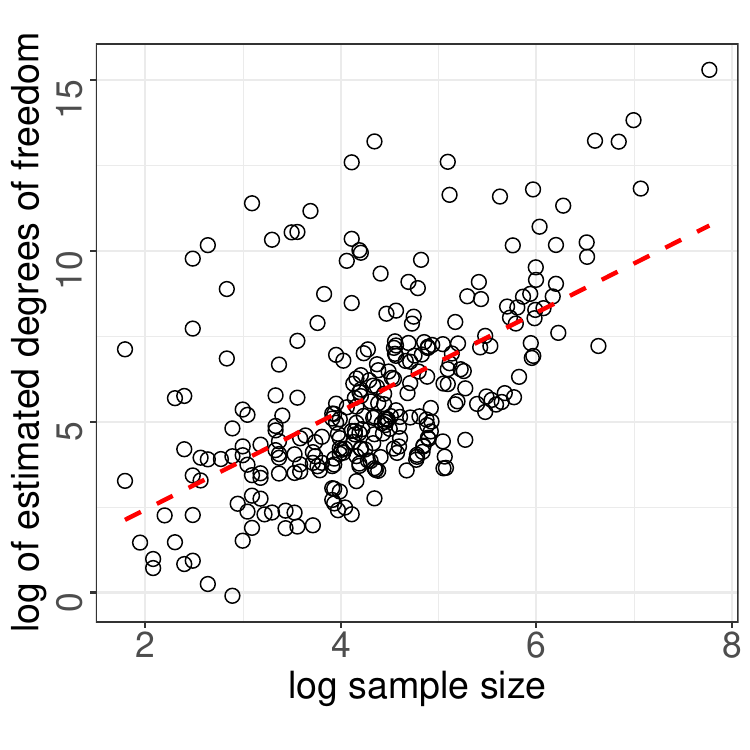}
     \end{subfigure}
     \vspace{-0.3cm}
     \caption{Comparison of square roots of  Direct Variance Estimates (DVEs) and GVF-smoothed variance estimates in the PEAI--AHS data (left). Estimated degrees of freedom versus sample sizes on the logarithmic scale (right). }
    \label{fig:PEAI-AHS-GVFplots}
\end{figure}

\subsubsection{Results in the PEAI--AHS application}\label{subsec:results_PEAI-AHS}
Under the settings described in Section \ref{subsec:exploratory_PEAI-AHS}, the modified FH model is given by
\begin{align*}
    y_i\mid\theta_i,\sigma_i^2\sim N(\theta_i&,\sigma_i^2),\quad \theta_i\mid\beta_0,\beta_1,\sigma_u^2\sim N(\beta_0+\beta_1x_i,\sigma_u^2),\\
    D_i&\mid\sigma_i^2\sim{\rm Ga}(\hat{\nu}_i/2,\hat{\nu}_i/(2\sigma_i^2)),
\end{align*}

for $i=1,...,289$ where $\hat{\nu}_i=2n_iy_i(1-y_i)/(1-2y_i)^2$. 
We implement the following six models to the PEAI--AHS  application:
\begin{enumerate}
    \item YC where $\pi(\sigma_i^2)\propto1/\sigma_i^2$,
    \item STK1 where $\sigma_i^2\sim{\rm IG}(a_i,b_i\gamma)$,
    \item STK2-GVF1 where $\sigma_i^2\sim{\rm IG}(a_i,b_i\gamma\exp(\eta_1\log(n_i))$, 
    \item STK2-GVF2 where $\sigma_i^2\sim{\rm IG}(a_i,b_i\gamma\exp(\eta_1\log(n_i)+\eta_2\log(z_i))$,
    \item GL-BP-GVF1 where $\sigma_i^2\sim{\rm IG}(\alpha\omega_i+1,\alpha\omega_i\exp(\eta_0+\eta_1\log(n_i)))$,
    \item GL-BP-GVF2 where $\sigma_i^2\sim{\rm IG}(\alpha\omega_i+1,\alpha\omega_i\exp(\eta_0+\eta_1\log(n_i)+\eta_2\log(z_i)))$.
\end{enumerate}

Since $n_i>1$ and both covariates, $\log(n_i)$ and $\log(z_i)$, are positive for all municipalities, the conditions for posterior propriety of the GL-BP models in Theorem \ref{thm:proper_posterior} are satisfied. 
We assume $\pi(\boldsymbol{\beta})\propto1$ and $\sigma_u^2\sim{\rm IG}(0.1,0.1)$ for all six models. Table \ref{tab:PEAI-AHS} presents the posterior estimates of the model parameters and DIC values. 
According to the DICs, GL-BP-GVF1 is the most suitable model for the PEAI-AHS data. 

\begin{table}[ht!]
    \centering
    \resizebox{\textwidth}{!}{
    \begin{tabular}{l|ccccccccc}
    \hline
     & $\beta_0$ & $\beta_1$ & $\sigma_u^2$ & $\gamma$ & $\eta_0$ & $\eta_1$ & $\eta_2$ & $\alpha$ & DIC\\
    \hline 
      YC & 0.587 & -0.481 & 0.008 & - & - & - &  - & - & -4211.40 \\
      & (0.023) & (0.037) & (0.001) & - & - & - & - &  - &  \\
      STK1 & 0.586 & -0.474 & 0.007 & 0.451 & - & - & - & - & -4259.12 \\
      & (0.023) & (0.036) & (0.001) & (0.02) & - & - & - & - & \\
      STK2-GVF1 & 0.586 & -0.478 & 0.008 & 0.278 & - & 0.116 & - & - &  -4259.39 \\
      & (0.023) & (0.037) & (0.001) & (0.06) & - & (0.047) & - & - & \\
      STK2-GVF2 & 0.586 & -0.475 & 0.007 & 0.215 & - & -0.022 & 0.154 & - &  -4258.75 \\
      & (0.023) & (0.037) & (0.001) & (0.263) & - & (0.070) & (0.103) & - & \\
      GL-BP-GVF1 & 0.586 & -0.472 & 0.007 & - & -1.460 & -0.966 & - & 3.197 & \textbf{-4265.64} \\
      & (0.023) & (0.036) & (0.001)& - & (0.018) & (0.004) &- & (0.584) & \\
      GL-BP-GVF2 & 0.587 & -0.473 & 0.007 & - & -0.664 & -1.023 & -0.052 & 1.535 & -4226.69 \\
      & (0.022) & (0.036) & (0.001)& - & (0.011) & (0.012) & (0.007) & (0.277) & \\
      \hline
    \end{tabular}}
    \caption{Posterior mean estimates and standard errors (parenthesis) of model parameters and Deviance Information Criterion (DIC) in the PEAI--AHS application.}
    \label{tab:PEAI-AHS}
\end{table}

We further analyze the results from GL-BP-GVF1. 
In Figure \ref{fig:results_PEAI-AHS_gvf1} (left), we compare the DVEs with the posterior estimates of $\sigma_i^2$ obtained with YC, STK1, STK2-GVF1, and GL-BP-GVF1. 
Figure \ref{fig:results_PEAI-AHS_gvf1} (left) illustrates that the proposed GL-BP model imposes considerable shrinkage of the posterior mean of $\sigma_i^2$ toward  the posterior estimates of the Exp-GVF1 across most municipalities.
As in the Corn data set,  YC does not produce shrinkage estimates of $\sigma_i^2$ whereas the STK1 and STK2-GVF1 models yield similar posterior estimates of $\sigma_i^2$.



To illustrate the behavior of the posterior shrinkage factor $\kappa_i^{\rm post}$ in (\ref{post_shrinkage}), we consider the small area absolute scaled discrepancies 
 given by 
\begin{align}\label{eq:abs_scaled_disc}
    \frac{|\log D_i-\widehat{\boldsymbol{z}_i^\top\boldsymbol{\eta}}|}{\sqrt{\frac{1}{m-2}\sum_{i=1}^m(\log(D_i)-\widehat{\boldsymbol{z}_i^\top\boldsymbol{\eta}})^2}},
\end{align} 

where $\widehat{\boldsymbol{z}_i^\top\boldsymbol{\eta}}$ is the logarithm of the posterior estimate of the Exp-GVF1, $\eta_0+\eta_1\log(n_i)$. 
As discussed in Section \ref{subsec:shrinkage_factor}, the shrinkage factors $\kappa_i^{\rm post}$ are adaptive according to the quality of the posterior estimates of the Exp-GVF and the sample sizes. Specifically, we expect the posterior mean of the posterior shrinkage factor to be closer to 1 (or 0) when the corresponding absolute scaled discrepancy becomes smaller (or larger).
Figure \ref{fig:results_PEAI-AHS_gvf1} (right) illustrates that, for municipalities in the same sample size interval, the posterior mean estimate of the shrinkage factor approaches 0 when the absolute scaled discrepancy increases.  Figure \ref{fig:results_PEAI-AHS_gvf1} (right) also shows that  when $n_i<30$ the posterior mean estimate of the shrinkage factor is close to 1 making 
the posterior estimate of $\sigma_i^2$ of GL-BP-GVF1  close to the posterior estimate of the Exp-GVF1. As expected, when the sample size increases, the posterior mean estimate of $\kappa_i^{\rm post}$ becomes closer to 0. 

\begin{figure}[ht!]
\centering
    \begin{subfigure}[t]{0.5\textwidth}
         \centering
         \includegraphics[width=\textwidth]{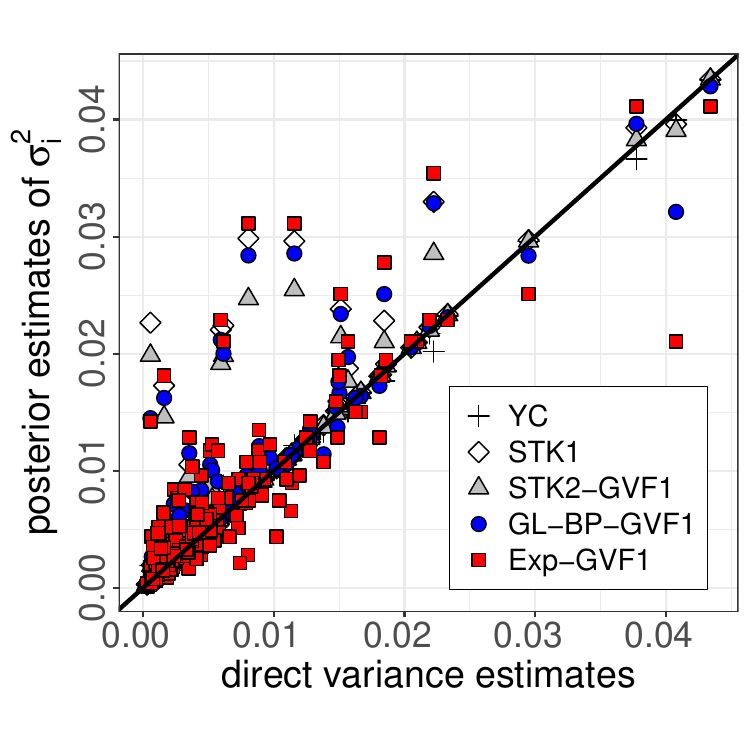}
     \end{subfigure}~
     \begin{subfigure}[t]{0.5\textwidth}
         \centering
         \includegraphics[width=\textwidth]{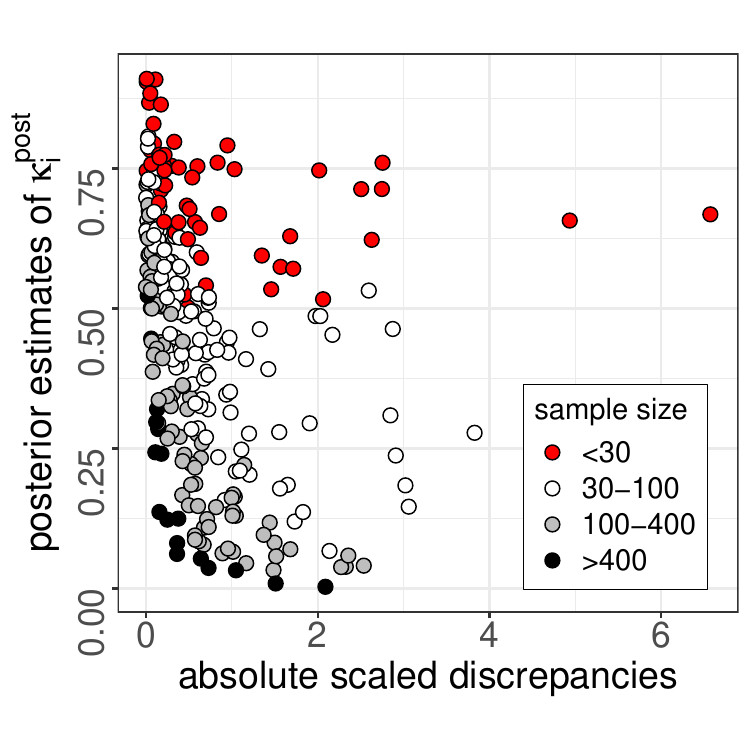}
     \end{subfigure}
     \vspace{-1cm}
     \caption{Direct Variance Estimates (DVEs) versus posterior estimates of $\sigma_i^2$ in the PEAI--AHS application (left). The black solid line represents the DVEs. The red dots (labeled as Exp-GVF1) represent the posterior estimates of the Exp-GVF1, $\exp(\eta_0+\eta_1 \log (n_i))$, under the proposed GL-BP-GVF1 model. The blue dots represent the posterior estimates of $\sigma_i^2$ with the GL-BP-GVF1 model. Posterior mean estimates of the shrinkage factors in (\ref{post_shrinkage}) obtained with the GL-BP-GVF1 model versus absolute scaled discrepancies given by (\ref{eq:abs_scaled_disc}), according to the sample sizes (right). 
}
        \label{fig:results_PEAI-AHS_gvf1}
\end{figure}


Figure \ref{fig:results2_PEAI-AHS_gvf1} (left) displays the direct estimates of the PEAI–AHS ($y_i$) and the posterior estimates of $\theta_i$ obtained  with YC, STK1, STK2-GVF1, and GL-BP-GVF1. 
According to Figure \ref{fig:results2_PEAI-AHS_gvf1} (left) the four Bayesian models produce small area posterior estimates of $\theta_i$ closely related to the direct estimates of the PEAI–AHS.  
The ratios of the coefficients of variations (CVs) obtained  with the GL-BP-GVF1 model and the direct estimates are displayed in Figure \ref{fig:results2_PEAI-AHS_gvf1} (right).  The result illustrates that the proposed GL-BP-GVF1 model substantially improves the precision of the small area estimates by producing smaller CVs in around 95.5\% of the municipalities.

\begin{figure}[ht!]
\centering
    \begin{subfigure}[t]{0.5\textwidth}
    \vspace{-0.4cm}
         \centering
         \includegraphics[width=\textwidth]{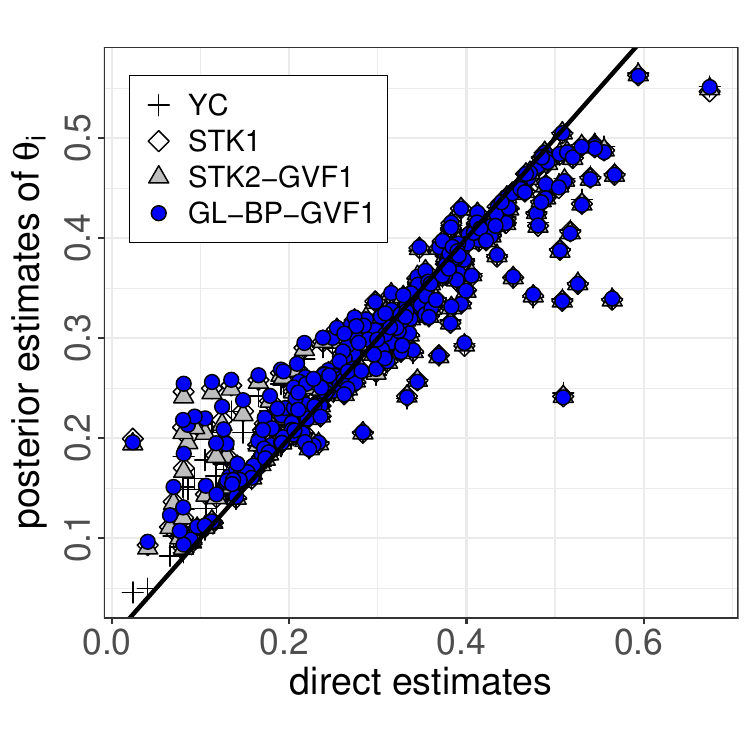}
     \end{subfigure}~
     \begin{subfigure}[t]{0.5\textwidth}
    \vspace{-0.4cm}
         \centering
         \includegraphics[width=\textwidth]{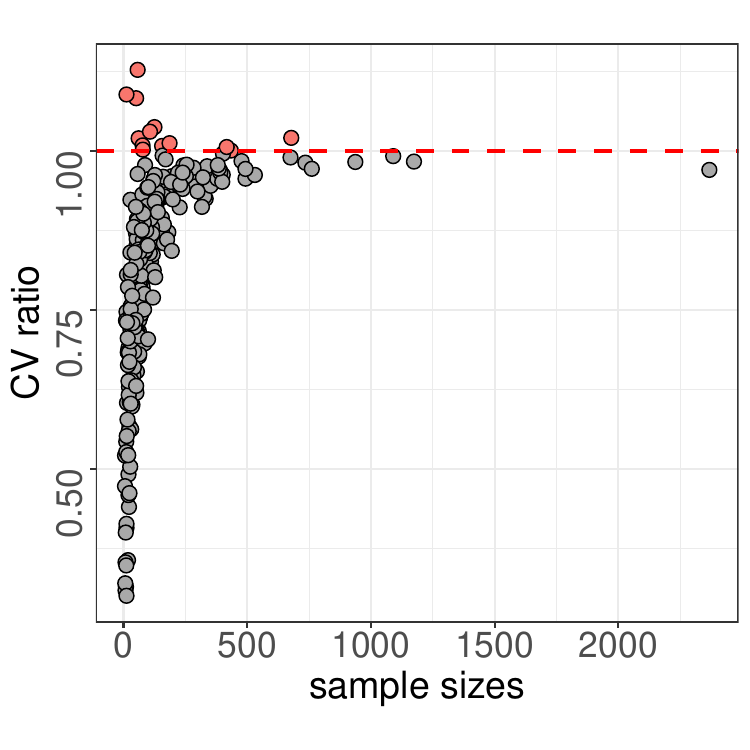}
     \end{subfigure}
     \vspace{-0.5cm}
     \caption{Direct estimates of the PEAI–AHS ($y_i$) versus posterior estimates of $\theta_i$ in the PEAI--AHS application (left). The  solid black line represents $y_i$.
     Ratios of Coefficients of Variations (CVs) obtained with the GL-BP-GVF1 model and the direct estimator (right). }
        \label{fig:results2_PEAI-AHS_gvf1}
\end{figure}


\section{Conclusions}\label{sec:conclusions}

In this work, we have proposed a new Bayesian model that combines the use of GL priors and the GVF method to improve the posterior estimation of the sampling variances in SAE. Our proposal has methodological, computational, and applied contributions.

From a methodological perspective, we proposed a new Bayesian SAE model that incorporates GL priors and the GVF method when the sampling variances in the FH model (\ref{FH_model}) are assumed to be unknown. We established important theoretical properties of the local priors and the resulting posterior shrinkage factors which lead to improved estimation of the sampling variances $\sigma_i^2$.

In computational terms, due to the formulation of the proposed models, we developed suitable computational strategies to obtain posterior samples when the full conditional distributions involve Gamma functions. Our simulation studies illustrate that the proposed model improves the estimation of both small area means and variances.

From the applied side, we considered a real-world application  that involves the estimation of the Prevalence of the Educational Attainment Index at the municipality levels in Colombia.  The results show that our proposed model improves the precision of the small area estimates and allows the level of shrinkage to be adjusted according to the sample size and the quality of the posterior estimate of the Exp-GVF. Our approach can be particularly useful for estimating relevant indicators in official statistics, especially when sample sizes are small or the DVEs are not reliable.

\clearpage

\bibliographystyle{apalike}
\bibliography{reference}

\begin{thebibliography}{}

\bibitem[Battese et~al., 1988]{1988_Battese}
Battese, G.~E., Harter, R.~M., and Fuller, W.~A. (1988).
\newblock An {E}rror-{C}omponents {M}odel for {P}rediction of {C}ounty {C}rop
  {A}reas {U}sing {S}urvey and {S}atellite {D}ata.
\newblock {\em Journal of the American Statistical Association},
  83(401):28--36.

\bibitem[Borchers and da~Cunha, 2025]{BorchersCunha2025}
Borchers, J. and da~Cunha, M.~S. (2025).
\newblock School performance and inequality of opportunities in {L}atin
  {A}merica.
\newblock {\em Studies in Educational Evaluation}, 86.

\bibitem[Boss et~al., 2024]{Bayesian_group}
Boss, J., Datta, J., Wang, X., Park, S.~K., Kang, J., and Mukherjee, B. (2024).
\newblock {Group Inverse-Gamma Gamma Shrinkage for Sparse Linear Models with
  Block-Correlated Regressors}.
\newblock {\em Bayesian Analysis}, 19(3):785 -- 814.

\bibitem[Brown and Griffin, 2010]{2010_Griffin&Brown}
Brown, P.~J. and Griffin, J.~E. (2010).
\newblock {Inference with normal-gamma prior distributions in regression
  problems}.
\newblock {\em Bayesian Analysis}, 5(1):171 -- 188.

\bibitem[Carvalho et~al., 2010]{2010_Carvalho}
Carvalho, C.~M., Polson, N.~G., and Scott, J.~G. (2010).
\newblock The horseshoe estimator for sparse signals.
\newblock {\em Biometrika}, 97(2):465--480.

\bibitem[Cox et~al., 2010]{cox2010educational}
Cox, C., Attewell, P., and Newman, K. (2010).
\newblock {E}ducational {I}nequality in {L}atin {A}merica.
\newblock {\em Growing gaps. Educational inequality around the world}, pages
  33--58.

\bibitem[Dass et~al., 2012]{2012_Dass}
Dass, S.~C., Maiti, T., Ren, H., and Sinha, S. (2012).
\newblock Confidence interval estimation of small area parameters shrinking
  both means and variances.
\newblock {\em Survey Methodology}, 38(2):173--187.

\bibitem[DHS, 2015]{Profamilia_COL_DHS_2005}
DHS (2015).
\newblock Encuesta nacional de demograf{\'i}a y salud 2015 (ends 2015).

\bibitem[FAO-OPHI, 2022]{FAO_OPHI_2022_RMPI}
FAO-OPHI (2022).
\newblock Measuring rural poverty with a multidimensional approach: {T}he
  {R}ural {M}ultidimensional {P}overty {I}ndex.
\newblock Technical Report Statistical Development Series No. 19, Food and
  Agriculture Organization of the United Nations (FAO), Rome.

\bibitem[Fay and Herriot, 1979]{1979_Fay&Herriot}
Fay, R. and Herriot, R. (1979).
\newblock {E}stimates of {I}ncome for {S}mall {P}laces: {A}n {A}pplication of
  {J}ames-{S}tein {P}rocedures to {C}ensus {D}ata.
\newblock {\em Journal of the American Statistical Association},
  74(366):269--277.

\bibitem[H{\'a}jek, 1971]{Hajek1971}
H{\'a}jek, J. (1971).
\newblock Comment on “{A}n {E}ssay on the {L}ogical {F}oundations of {S}urvey
  {S}ampling, {P}art {O}ne”.
\newblock In Godambe, V.~P. and Sprott, D.~A., editors, {\em The Foundations of
  Survey Sampling}, page 236. Holt, Rinehart, and Winston, New York.

\bibitem[Hamura et~al., 2024]{2024_Hamura}
Hamura, Y., Onizuka, T., Hashimoto, S., and Sugasawa, S. (2024).
\newblock {S}parse {B}ayesian {I}nference on {G}amma-{D}istributed
  {O}bservations {U}sing {S}hape-{S}cale {I}nverse-{G}amma {M}ixtures.
\newblock {\em Bayesian Analysis}, 19(1):77--97.

\bibitem[Kubacki and Jedrzejczak, 2012]{2012_Kubacki&Jedrzej}
Kubacki, J. and Jedrzejczak, A. (2012).
\newblock The {C}omparison of {G}eneralized {V}ariance {F}unction with {O}ther
  {M}ethods of {P}recision {E}stimation for {P}olish {H}ousehold {B}udget
  {S}urvey.
\newblock {\em Studia Ekonomiczne}, 120:58--69.

\bibitem[Maiti et~al., 2014]{2014_Maiti}
Maiti, T., Ren, H., and Sinha, S. (2014).
\newblock Prediction {E}rror of {S}mall {A}rea {P}redictors {S}hrinking {B}oth
  {M}eans and {V}ariances.
\newblock {\em Scandinavian Journal of Statistics}, 41(3):775--790.

\bibitem[Maples et~al., 2009]{2009_Maples}
Maples, J., Bell, W., and Huang, E.~T. (2009).
\newblock Small {A}rea {V}ariance {M}odeling with {A}pplication to {C}ounty
  {P}overty {E}stimates from the {A}merican {C}ommunity {S}urvey.
\newblock {\em Proceedings of the American Statistical Association Section on
  Survey Research Methods}, pages 5056--5067.

\bibitem[McIllece, 2018]{2018_McIllece}
McIllece, J.~J. (2018).
\newblock On {G}eneralized {V}ariance {F}unctions for {S}ample {M}eans and
  {M}edians.
\newblock {\em JSM2018-survey research methods section}, pages 584--594.

\bibitem[Miller, 2019]{2019_Miller}
Miller, J.~W. (2019).
\newblock Fast and {A}ccurate {A}pproximation of the {F}ull {C}onditional for
  {G}amma {S}hape {P}arameters.
\newblock {\em Journal of Computational and Graphical Statistics},
  28(2):476--480.

\bibitem[Morales et~al., 2021]{2021_Morales}
Morales, D., Esteban, M.~D., Pérez, A., and Hobza, T. (2021).
\newblock {\em A Course on Small Area Estimation and Mixed Models: Methods,
  Theory and Applications in R}.
\newblock Statistics for Social and Behavioral Sciences, Springer.

\bibitem[Park and Casella, 2008]{2008_Park&Casella}
Park, T. and Casella, G. (2008).
\newblock The {B}ayesian {L}asso.
\newblock {\em Journal of the american statistical association},
  103(482):681--686.

\bibitem[P{\'e}rez et~al., 2017]{2017_Perez}
P{\'e}rez, M.-E., Pericchi, L.~R., and Ram{\'i}rez, I.~C. (2017).
\newblock The {S}caled {B}eta2 {D}istribution as a {R}obust {P}rior for
  {S}cales.
\newblock {\em Bayesian Analysis}, 12(3):615--637.

\bibitem[Rangel and Lleras, 2010]{RangelLleras2010}
Rangel, C. and Lleras, C. (2010).
\newblock Educational inequality in {C}olombia: family background, school
  quality and student achievement in {C}artagena.
\newblock {\em International Studies in Sociology of Education},
  20(4):291--317.

\bibitem[Rivest and Vandal, 2002]{2002_Rivest&Vandal}
Rivest, L.~P. and Vandal, N. (2002).
\newblock Mean squared error estimation for small areas when the small area
  variances are estimated.
\newblock {\em Proceedings of the International Conference on Recent Advances
  in Survey Sampling}, pages 10--13.

\bibitem[Souza et~al., 2009]{2009_Souza}
Souza, D.~F., Moura, F. A.~S., and Migon, H.~S. (2009).
\newblock Small area population prediction via hierarchical models.
\newblock {\em Survey Methodology}, 35:203 –214.

\bibitem[Spiegelhalter et~al., 2002]{2002_Spiegelhalter}
Spiegelhalter, D.~J., Best, N.~G., Carlin, B.~P., and Van Der~Linde, A. (2002).
\newblock Bayesian {M}easures of {M}odel {C}omplexity and {F}it.
\newblock {\em Journal of the Royal Statistical Society Series B: Statistical
  Methodology}, 64(4):583--639.

\bibitem[Sugasawa et~al., 2017]{2017_Sukasawa}
Sugasawa, S., Tamae, H., and Kubokawa, T. (2017).
\newblock Bayesian {E}stimators for {S}mall {A}rea {M}odels {S}hrinking {B}oth
  {M}eans and {V}ariances.
\newblock {\em Scandinavian Journal of Statistics}, 44(1):150--167.

\bibitem[Tang et~al., 2018]{2018_TangGhosh}
Tang, X., Ghosh, M., Ha, N.~S., and Sedransk, J. (2018).
\newblock Modeling {R}andom {E}ffects {U}sing {G}lobal-{L}ocal {S}hrinkage
  {P}riors in {S}mall {A}rea {E}stimation.
\newblock {\em Journal of the American Statistical Association},
  113(524):1476--1489.

\bibitem[UN, 2025]{UN_SDG_ExtendedReport_2025}
UN (2025).
\newblock The {S}ustainable {D}evelopment {G}oals {E}xtended {R}eport 2025:
  {I}nputs and information provided as of 30 april 2025.
\newblock \url{https://unstats.un.org/sdgs/report/2025/extended-report/}.
\newblock Accessed: 2025‑12‑07.

\bibitem[UNESCO, 2005]{UNESCO}
UNESCO (2005).
\newblock {\em World Education Indicators 2005 Education Trends in Perspective:
  Education Trends in Perspective}.
\newblock OECD publishing.

\bibitem[{United Nations}, 2015]{UN2015Agenda2030}
{United Nations} (2015).
\newblock Transforming our world: the 2030 {A}genda for {S}ustainable
  {D}evelopment. {R}esolution adopted by the {G}eneral {A}ssembly on 25
  {S}eptember 2015.
\newblock
  \url{https://sustainabledevelopment.un.org/post2015/transformingourworld}.
\newblock Accessed: 2025‑12‑07.

\bibitem[Wang and Fuller, 2003]{2003_Wang&Fuller}
Wang, J. and Fuller, W.~A. (2003).
\newblock The {M}ean {S}quared {E}rror of {S}mall {A}rea {P}redictors
  {C}onstructed with {E}stimated {A}rea {V}ariances.
\newblock {\em Journal of the American Statistical Association},
  98(1):716--723.

\bibitem[Wolter, 2007]{2007_Wolter}
Wolter, K.~M. (2007).
\newblock {\em Introduction to Variance Estimation: Statistics for Social
  Science and Behavorial Sciences}.
\newblock Springer New York, New York, NY.

\bibitem[World-Bank, 2024]{World-Bank}
World-Bank (2024).
\newblock Education {S}tatistics ({E}d{S}tats): {UIS} {D}ata.
\newblock
  \url{https://databank.worldbank.org/source/education-statistics-(uis)}.
\newblock World Bank EdStats database containing education indicators sourced
  from the UNESCO Institute for Statistics (UIS).

\bibitem[You, 2021]{2021_You}
You, Y. (2021).
\newblock Small area estimation using {F}ay-{H}erriot area level model with
  sampling variance smoothing and modeling.
\newblock {\em Survey Methodology}, 47(2):361--370.

\bibitem[You and Chapman, 2006]{2006_You&Chapman}
You, Y. and Chapman, B. (2006).
\newblock Small {A}rea {E}stimation {U}sing {A}rea {L}evel {M}odels and
  {E}stimated {S}ampling {V}ariances.
\newblock {\em Survey Methodology}, 32:97--103.

\bibitem[You and Hidiroglou, 2023]{2023_You&Hidiroglou}
You, Y. and Hidiroglou, M.~A. (2023).
\newblock Application of {S}ampling {V}ariance {S}moothing {M}ethods for
  {S}mall {A}rea {P}roportion {E}stimation.
\newblock {\em Journal of Official Statistics}, 39(4):571--590.

\bibitem[Zhang et~al., 2019]{2019_Zhang}
Zhang, G., Cheng, Y., and Lu, Y. (2019).
\newblock Generalised variance functions for longitudinal survey data.
\newblock {\em Statistical Theory and Related Fields}, 3(2):150--157.

\end{thebibliography}

\clearpage

\appendix

\section{Supplementary Material}

This section contains the supplementary material of our manuscript. Section \ref{proof_Corollary_2_1}, Section 
\ref{proof_Proposition_2_1}, Section \ref{proof_Proposition_2_2_theorem_2_1}, and Section \ref{proof_Theorem_3_1}
contain
the proofs of Corollary \ref{cor}, Proposition \ref{proposition}, Proposition \ref{proposition_post} and Theorem \ref{thm:prob_kappa}, and
Theorem \ref{thm:proper_posterior}, respectively. In Section \ref{MCMC_schemes} the MCMC algorithms for the existing models
in Table \ref{tab:model_structures} are presented in detail. The procedure to estimate the degrees of freedom in the PEAI–AHS application is discussed in Section \ref{application_sup}.

\section{Proofs}
\label{sec:proofs}

For the proofs in Sections \ref{proof_Corollary_2_1} and 
\ref{proof_Proposition_2_1}, we use the notation $f(x)\sim g(x)$ as $x\to0$ to denote $\lim_{x\to0} f(x)/g(x)=1$ and use the following facts:

\begin{itemize}
    \item The bounds for a Gamma function 
    \begin{align}\label{bounds_gamma_func}
    \sqrt{2\pi}x^{x-1/2}e^{-x}\leq\Gamma(x)\leq\sqrt{2\pi}x^{x-1/2}e^{-x}e^{1/(12x)},\quad x>0.
    \end{align}
    
    \item $\pi^{\rm BP}(\omega_i)\propto\omega_i^{a-1}(1+\omega_i)^{-(a+b)}\sim \omega_i^{a-1}$ as $\omega_i\to0$.

    \item $\pi^{\rm Ga}(\omega_i)\propto\omega_i^{c-1}e^{-d\omega_i}\sim \omega_i^{c-1}$ as $\omega_i\to0$.
\end{itemize}

\subsection{Proof of Corollary \ref{cor}}
\label{proof_Corollary_2_1}

\begin{proof} 
    We denote $\tilde\eta_i=\exp(\boldsymbol{z}_i^\top\boldsymbol{\eta})$. Let $\xi_i=\alpha\tilde\eta_i/\sigma_i^2-\alpha-\alpha\log(\tilde\eta_i/\sigma_i^2)$. Importantly, we can see that

    \begin{align*}
        \xi_i\sim\begin{cases}
            \alpha\tilde\eta_i/\sigma_i^2\to\infty &{\rm as}~\sigma_i^2\to0,\\
            -\alpha\log(\tilde\eta_i/\sigma_i^2)\propto\alpha\log(\sigma_i^2)\to\infty&{\rm as}~\sigma_i^2\to\infty,\\
            \frac{\alpha}{2\tilde\eta_i^2}(\sigma_i^2-\tilde\eta_i)^2\to0&{\rm as}~\sigma_i^2\to\tilde\eta_i.
        \end{cases}
    \end{align*}

    Next, we define a function
    \begin{align}\label{define_func_f}
        f(\omega_i,\alpha)=\frac{(\alpha\omega_i)^{\alpha\omega_i-1}e^{-\alpha\omega_i}}{\Gamma(\alpha\omega_i)},\quad \omega_i,\alpha\in(0,\infty).
    \end{align}

    Then the marginal prior distribution of $\sigma_i^2$ under the BP prior can be expressed using the function $f(\cdot)$ in (\ref{define_func_f}) as follows:

    \begin{align}\label{16}
        p^{\rm BP}(\sigma_i^2\mid\alpha,\boldsymbol{z}_i^\top\boldsymbol{\eta})&=\int_0^\infty\pi(\sigma_i^2\mid\omega_i,\alpha,\boldsymbol{z}_i^\top\boldsymbol{\eta})\pi^{\rm BP}(\omega_i)~d\omega_i\nonumber\\
        &=\int_0^\infty \frac{(\alpha\omega_i\tilde\eta_i)^{\alpha\omega_i+1}}{\Gamma(\alpha\omega_i+1)}\left(\frac{1}{\sigma_i^2}\right)^{\alpha\omega_i+2}\exp\left\{-\frac{\alpha\omega_i\tilde\eta_i}{\sigma_i^2}\right\}~\pi^{\rm BP}(\omega_i)~d\omega_i\nonumber\\
        &=\tilde\eta_i\frac{1}{(\sigma_i^2)^2}\int_0^\infty\frac{(\alpha\omega_i)^{\alpha\omega_i}}{\Gamma(\alpha\omega_i)}\left(\frac{\tilde\eta_i}{\sigma_i^2}\right)^{\alpha\omega_i}\exp\left\{-\frac{\alpha\omega_i\tilde\eta_i}{\sigma_i^2}\right\}~\pi^{\rm BP}(\omega_i)~d\omega_i\nonumber\\
        &=\alpha\tilde\eta_i\frac{1}{(\sigma_i^2)^2}\int_0^\infty\omega_i e^{-\xi_i\omega_i} f(\omega_i,\alpha)\pi^{\rm BP}(\omega_i)~d\omega_i.
    \end{align}

    The proof of Part (i) of Theorem S1 in \cite{2024_Hamura} shows that, as $1<\xi_i\to\infty$,

    \begin{align}\label{pf_bound_hamura}
        \left|\frac{\int_0^\infty \omega_ie^{-\xi_i\omega_i}f(\omega_i,\alpha)\pi(\omega_i)~d\omega_i}{\int_0^1 \omega_ie^{-\xi_i\omega_i}f(\omega_i,\alpha)\pi(\omega_i)~d\omega_i}-1\right|\to0,
    \end{align}

    for any proper prior $\pi(\omega_i)$. Following (\ref{pf_bound_hamura}), we obtain that, as $\xi_i\to\infty$,
    
    \begin{align*}
        \frac{(\sigma_i^2)^2p^{\rm BP}(\sigma_i^2\mid\alpha,\boldsymbol{z}_i^\top\boldsymbol{\eta})}{\pi^{\rm BP}(1/\xi_i)/\xi_i^2}&=\frac{\alpha\tilde\eta_i}{\pi^{\rm BP}(1/\xi_i)/\xi_i^2}\int_0^\infty \omega_i e^{-\xi_i\omega_i} f(\omega_i,\alpha)~\pi^{\rm BP}(\omega_i)~d\omega_i\\
        &\sim\frac{\alpha\tilde\eta_i}{\pi^{\rm BP}(1/\xi_i)/\xi_i^2}\int_0^1 \omega_i e^{-\xi_i\omega_i} f(\omega_i,\alpha)~\pi^{\rm BP}(\omega_i)~d\omega_i\\
        &=\frac{\alpha\tilde\eta_i}{\pi^{\rm BP}(1/\xi_i)/\xi_i^2}\int_0^{\xi_i}\frac{1}{\xi_i} \frac{\omega_i}{\xi_i} e^{-\omega_i} f(\omega_i/\xi_i,\alpha)~\pi^{\rm BP}(\omega_i/\xi_i)~d\omega_i\\
        &=\alpha\tilde\eta_i\int_0^\infty\omega_ie^{-\omega_i}\times\frac{1(\omega_i<\xi_i)\pi^{\rm BP}(\omega_i/\xi_i)}{\pi^{\rm BP}(1/\xi_i)}\times 1(\omega_i<\xi_i)f(\omega_i/\xi_i,\alpha)~d\omega_i.
    \end{align*}

    For all $\omega_i>0$ and $\xi_i>1$, we have that
    
    \begin{align*}
        \frac{1(\omega_i<\xi_i)\pi^{\rm BP}(\omega_i/\xi_i)}{\pi^{\rm BP}(1/\xi_i)}=1(\omega_i<\xi_i)\frac{\omega_i^{a-1}(1+\omega_i/\xi_i)^{-(a+b)}}{(1+1/\xi_i)^{-(a+b)}}\leq\omega_i^{a-1}\frac{{\rm sup}_{u_i\in(0,1)}(1+u_i)^{-(a+b)}}{{\rm inf}_{u_i\in(0,1)}(1+u_i)^{-(a+b)}}=C\omega_i^{a-1}
    \end{align*}

    where $C=2^{a+b}$ and that

    \begin{align*}
         1(\omega_i<\xi_i)f(\omega_i/\xi_i,\alpha)=1(\omega_i<\xi_i)\frac{(\alpha\omega_i/\xi_i)^{\alpha\omega_i/\xi_i}e^{-\alpha\omega_i/\xi_i}}{\Gamma(\alpha\omega_i/\xi_i+1)}\leq\underset{u_i\in(0,\alpha)}{{\rm sup}}\frac{u_i^{u_i}}{\Gamma(u_i+1)}<\infty.
    \end{align*}

    Hence, by applying the Dominated Convergence Theorem (DCT), we obtain that, as $1<\xi_i\to\infty$,

    \begin{align}\label{17}
        \frac{(\sigma_i^2)^2 p^{\rm BP}(\sigma_i^2\mid\alpha,\boldsymbol{z}_i^\top\boldsymbol{\eta})}{\pi^{\rm BP}(1/\xi_i)/\xi_i^2}
        &\sim \alpha\tilde\eta_i\int_0^\infty\omega_ie^{-\omega_i}\times\omega_i^{a-1}\times1~d\omega_i=\alpha\tilde\eta_i\Gamma(a+1).
    \end{align}

    From (\ref{17}), we complete the proof for Part (i). As $\sigma_i^2\to0$, we have $\xi_i\sim\alpha\tilde\eta_i/\sigma_i^2\to\infty$ and 
    
        \begin{align*}
            p^{\rm BP}(\sigma_i^2\mid\alpha,\boldsymbol{z}_i^\top\boldsymbol{\eta})&\sim \alpha\tilde \eta_i\Gamma(a+1)\frac{1}{(\sigma_i^2)^2}\frac{1}{\xi_i^2}\pi^{\rm BP}(1/\xi_i)\\
            &=\alpha\tilde\eta_i\Gamma(a+1)\frac{1}{(\sigma_i^2)^2}\frac{1}{\xi_i^2}\times\frac{\Gamma(a+b)}{\Gamma(a)\Gamma(b)}\frac{(1/\xi_i)^{a-1}}{(1+1/\xi_i)^{a+b}}\\
            &\sim\alpha\tilde\eta_i\frac{a\Gamma(a+b)}{\Gamma(b)}\frac{1}{(\sigma_i^2)^2}\frac{1}{\xi_i^{a+1}} \\
            &\sim \alpha\tilde\eta_i\frac{a\Gamma(a+b)}{\Gamma(b)} \frac{1}{(\sigma_i^2)^2}\frac{1}{(\alpha\tilde\eta_i/\sigma_i^2)^{a+1}}\\
            &=\frac{a\Gamma(a+b)}{(\alpha\tilde\eta_i)^{a}\Gamma(b)}(\sigma_i^2)^{a-1}\to\begin{cases}
        0 & \text{if}~a>1,\\
        b/(\alpha\tilde\eta_i)<\infty & \text{if}~a=1,\\
        \infty&\text{if}~a<1.
    \end{cases}
        \end{align*}

        Next, we use the expression (\ref{17}) to prove Part (ii). As $\sigma_i^2\to\infty$, we have $\xi_i\sim-\alpha\log(\tilde\eta_i/\sigma_i^2)\propto\alpha\log(\sigma_i^2)
        \to\infty$ and    
        
        \begin{align*}
        p^{\rm BP}(\sigma_i^2\mid\alpha,\boldsymbol{z}_i^\top\boldsymbol{\eta})&\sim\alpha\tilde\eta_i\Gamma(a+1)\frac{1}{(\sigma_i^2)^2}\frac{1}{\xi_i^2}\pi^{\rm BP}(1/\xi_i)\\
        &\sim\alpha\tilde\eta_i\frac{a\Gamma(a+b)}{\Gamma(b)}\frac{1}{(\sigma_i^2)^2}\frac{1}{\xi_i^{a+1}}\sim \alpha\tilde\eta_i\frac{a\Gamma(a+b)}{\Gamma(b)}\frac{1}{(\sigma_i^2)^2}\left(\frac{1}{\log(\sigma_i^2/\tilde\eta_i)^\alpha}\right)^{a+1}\to 0.
        \end{align*}
        
    Finally, we prove Part (iii). We start with the case of $b<1/2$. From (\ref{16}), as $\sigma_i^2\to\tilde\eta_i$, we obtain that

    \begin{align}\label{18}
        p^{\rm BP}&(\sigma_i^2\mid\alpha,\boldsymbol{z}_i^\top\boldsymbol{\eta})\nonumber\\
        &\sim\alpha\tilde\eta_i\frac{1}{\tilde\eta_i^2}\int_0^\infty\omega_ie^{-\xi_i\omega_i}f(\omega_i,\alpha)\pi^{\rm BP}(\omega_i)~d\omega_i\nonumber\\
        &=\frac{\alpha}{\tilde\eta_i}\left[\int_0^1\omega_ie^{-\xi_i\omega_i}f(\omega_i,\alpha)\pi^{\rm BP}(\omega_i)~d\omega_i+\int_1^\infty\omega_ie^{-\xi_i\omega_i}f(\omega_i,\alpha)\pi^{\rm BP}(\omega_i)~d\omega_i\right]\nonumber\\
        &=\frac{\alpha}{\tilde\eta_i}\left[\int_0^1\omega_ie^{-\xi_i\omega_i}f(\omega_i,\alpha)\pi^{\rm BP}(\omega_i)~d\omega_i+\int_{\xi_i}^\infty\frac{1}{\xi_i}\frac{\omega_i}{\xi_i}e^{-\omega_i}f(\omega_i/\xi_i,\alpha)\pi^{\rm BP}(\omega_i/\xi_i)~d\omega_i\right].
    \end{align}

    We follow the argument used in the proof of Part (ii) of Theorem S1 in \cite{2024_Hamura} to evaluate the first and second terms in (\ref{18}). For the first term, using the Monotone Convergence Theorem (MCT), as $\xi_i\to0$, we have

    \begin{align*}
    \int_0^1\omega_i e^{-\xi_i\omega_i}f(\omega_i,\alpha)\pi^{\rm BP}(\omega_i)~d\omega_i \to  \int_0^1\omega_i  f(\omega_i,\alpha)\pi^{\rm BP}(\omega_i)~d\omega_i<\infty.
    \end{align*}

    For the second term, after rearranging the expression, we apply the DCT as $\xi_i\to0$ to obtain

    \begin{align*}
        \int_{\xi_i}^\infty&\frac{1}{\xi_i}\frac{\omega_i}{\xi_i}e^{-\omega_i}f(\omega_i/\xi_i,\alpha)\pi^{\rm BP}(\omega_i/\xi_i)~d\omega_i\\
        &\sim\Gamma(1/2-b)\left[\frac{1}{\xi_i^2}f(1/\xi_i,\alpha)\pi^{\rm BP}(1/\xi_i)\right]\\
        &\sim\Gamma(1/2-b)\frac{1}{\xi_i^2}f(1/\xi_i,\alpha)\xi_i^{b+1}\\
        &=\frac{\Gamma(1/2-b)}{\xi_i^{1-b}}\frac{(\alpha/\xi_i)^{\alpha/\xi_i-1}e^{-\alpha/\xi_i}}{\Gamma(\alpha/\xi_i)}=\frac{\Gamma(1/2-b)}{\xi_i^{1/2-b}}\frac{(\alpha/\xi_i)^{\alpha/\xi_i-1/2}e^{-\alpha/\xi_i}}{\sqrt{\alpha}\Gamma(\alpha/\xi_i)}\propto\frac{1}{\xi_i^{1/2-b}}\to\infty.
    \end{align*}

    Since the second term in (\ref{18}) is infinite as $\xi_i\to0$ when $b<1/2$, we complete the proof that, as $\sigma_i^2\to\tilde\eta_i$, we have $\xi_i\sim\frac{\alpha}{2\tilde\eta_i^2}(\sigma_i^2-\tilde\eta_i)^2\to0$ and

    \begin{align*}
    p^{\rm BP}(\sigma_i^2\mid\alpha,\boldsymbol{z}_i^\top\boldsymbol{\eta})\to\infty  \quad{\rm if}~ b<1/2. 
    \end{align*}

    Next, suppose that $b=1/2$. We use the expression (\ref{16}). By the MCT and the upper bound in (\ref{bounds_gamma_func}), as $\sigma_i^2\to\tilde\eta_i$, we have $\xi_i\to0$ and

    \begin{align}\label{pf_eq_BP}
        p^{\rm BP}(\sigma_i^2\mid\alpha,\boldsymbol{z}_i^\top\boldsymbol{\eta})&\sim\frac{\alpha}{\tilde\eta_i}\int_0^\infty\omega_ie^{-\xi_i\omega_i} f(\omega_i,\alpha)\pi^{\rm BP}(\omega_i)~d\omega_i\nonumber\\
        &\to\frac{\alpha}{\tilde\eta_i}\int_0^\infty\omega_if(\omega_i,\alpha)\pi^{\rm BP}(\omega_i)~d\omega_i\nonumber\\
        &=\frac{\alpha}{\tilde\eta_i}\int_0^\infty\omega_i\frac{(\alpha\omega_i)^{\alpha\omega_i-1}e^{-\alpha\omega_i}}{\Gamma(\alpha\omega_i)}\pi^{\rm BP}(\omega_i)~d\omega_i\nonumber\\
        &=\frac{\alpha}{\tilde\eta_i}\int_0^\infty\sqrt{\omega_i}\frac{(\alpha\omega_i)^{\alpha\omega_i-1/2}e^{-\alpha\omega_i}}{\sqrt{\alpha}\Gamma(\alpha\omega_i)}\pi^{\rm BP}(\omega_i)~d\omega_i\\
        &\geq\frac{\alpha}{\tilde\eta_i}\frac{1}{\sqrt{2\pi\alpha}}\frac{\Gamma(a+1/2)}{\Gamma(a)\Gamma(1/2)}\int_0^\infty \frac{\omega_i^{a-1/2}}{(1+\omega_i)^{a+1/2}}\exp\left(-\frac{1}{12\alpha\omega_i}\right)~d\omega_i=\infty.\nonumber
    \end{align}

    Finally, suppose that $b>1/2$. Using (\ref{pf_eq_BP}) and the lower bound in (\ref{bounds_gamma_func}), as $\sigma_i^2\to\tilde\eta_i$, we have $\xi_i\to0$ and

    \begin{align*}
        p^{\rm BP}(\sigma_i^2\mid\alpha,\boldsymbol{z}_i^\top\boldsymbol{\eta})&\sim
        \frac{\alpha}{\tilde\eta_i}\int_0^\infty\sqrt{\omega_i}\frac{(\alpha\omega_i)^{\alpha\omega_i-1/2}e^{-\alpha\omega_i}}{\sqrt{\alpha}\Gamma(\alpha\omega_i)}\pi^{\rm BP}(\omega_i)~d\omega_i\\
        &\leq\frac{\alpha}{\tilde\eta_i}\frac{1}{\sqrt{2\pi\alpha}}\frac{\Gamma(a+b)}{\Gamma(a)\Gamma(b)}\int_0^\infty\omega_i^{(1/2+a)-1}(1+\omega_i)^{-(a+b)}~d\omega_i\\
        &=\frac{\alpha}{\tilde\eta_i}\frac{1}{\sqrt{2\pi\alpha}}\frac{\Gamma(a+b)}{\Gamma(a)\Gamma(b)}\frac{\Gamma(a+1/2)\Gamma(b-1/2)}{\Gamma(a+b)}<\infty.
    \end{align*}

\end{proof}

\subsection{Proof of Proposition \ref{proposition}}
\label{proof_Proposition_2_1}

\begin{proof}    
The proof of Proposition \ref{proposition} follows the same structure as that of Corollary \ref{cor} in Section \ref{proof_Corollary_2_1}. The marginal prior distribution of $\sigma_i^2$ under the Gamma prior can be expressed using the function $f(\cdot)$ in (\ref{define_func_f}) as follows:

    \begin{align*}
        p^{\rm Ga}(\sigma_i^2\mid\alpha,\boldsymbol{z}_i^\top\boldsymbol{\eta})&=\int_0^\infty\pi(\sigma_i^2\mid\omega_i,\alpha,\boldsymbol{z}_i^\top\boldsymbol{\eta})\pi^{\rm Ga}(\omega_i)~d\omega_i\\
        &=\alpha\tilde\eta_i \frac{1}{(\sigma_i^2)^2}\int_0^\infty\omega_ie^{-\xi_i\omega_i}f(\omega_i,\alpha)~\pi^{\rm Ga}(\omega_i)~d\omega_i.
    \end{align*}

    By the DCT, as $\xi_i\to\infty$, we obtain a result under the Gamma prior, which is similar to that in (\ref{17}) under the BP prior, as follows:
    
    \begin{align*}
        \frac{(\sigma_i^2)^2~p^{\rm Ga}(\sigma_i^2\mid\alpha,\boldsymbol{z}_i^\top\boldsymbol{\eta})}{\pi^{\rm Ga}(1/\xi_i)/\xi_i^2}&\sim\alpha\tilde\eta_i\int_0^\infty\omega_i e^{-\omega_i}\times\frac{1(\omega_i<\xi_i)\pi^{\rm Ga}(\omega_i/\xi_i)}{\pi^{\rm Ga}(1/\xi_i)}\times 1(\omega_i<\xi_i)f(\omega_i/\xi_i,\alpha)~d\omega_i\\
        &\to\alpha\tilde\eta_i\int_0^\infty\omega_ie^{-\omega_i}\times\omega_i^{c-1}\times1~d\omega_i=\alpha\tilde\eta_i\Gamma(c+1).
    \end{align*}

    Hence, as $\sigma_i^2\to0$, we have $\xi_i\sim\alpha\tilde\eta_i/\sigma_i^2\to\infty$ and 
    
    \begin{align*}
        p^{\rm Ga}(\sigma_i^2\mid\alpha,\boldsymbol{z}_i^\top\boldsymbol{\eta})&\sim\alpha\tilde\eta_i\Gamma(c+1)\frac{1}{(\sigma_i^2)^2}\frac{1}{\xi_i^2}\pi^{\rm Ga}(1/\xi_i)\\
        &=\alpha\tilde\eta_i\Gamma(c+1)\frac{1}{(\sigma_i^2)^2}\frac{1}{\xi_i^2}\times\frac{d^c}{\Gamma(c)}(1/\xi_i)^{c-1}\exp(-d/\xi_i)\\
        &\sim\alpha\tilde\eta_i cd^c\frac{1}{(\sigma_i^2)^2}\frac{1}{\xi_i^{c+1}} \\
        &\sim\alpha\tilde\eta_i cd^c\frac{1}{(\sigma_i^2)^2}\frac{1}{(\alpha\tilde\eta_i/\sigma_i^2)^{c+1}}\\
        &=\frac{cd^c}{(\alpha\tilde\eta_i)^c}(\sigma_i^2)^{c-1}\to\begin{cases}
        0 & \text{if}~c>1,\\
        d/(\alpha\tilde\eta_i)<\infty & \text{if}~c=1,\\
        \infty&\text{if}~c<1,
    \end{cases}
    \end{align*}

    which completes the proof for Part (i). Next, we prove Part (ii). As $\sigma_i^2\to\infty$, we have $\xi_i\sim-\alpha\log(\tilde\eta_i/\sigma_i^2)\propto\alpha\log(\sigma_i^2)\to\infty$ and    
    
    \begin{align*}
        p^{\rm Ga}(\sigma_i^2\mid\alpha,\boldsymbol{z}_i^\top\boldsymbol{\eta})&\sim\alpha\tilde\eta_i\Gamma(c+1)\frac{1}{(\sigma_i^2)^2}\frac{1}{\xi_i^2}\pi^{\rm Ga}(1/\xi_i)\\
        &\sim\alpha\tilde\eta_i cd^c\frac{1}{(\sigma_i^2)^2}\frac{1}{\xi_i^{c+1}} \sim \alpha\tilde\eta_i  cd^c\frac{1}{(\sigma_i^2)^2}\left(\frac{1}{\log(\sigma_i^2/\tilde\eta_i)^\alpha}\right)^{c+1}\to0.
        \end{align*}

    Finally, we prove Part (iii). By the MCT and the lower bound in (\ref{bounds_gamma_func}), as $\sigma_i^2\to\tilde\eta_i$, we have $\xi_i\sim\frac{\alpha}{2\tilde\eta_i^2}
    (\sigma_i^2-\tilde\eta_i)^2\to0$ and 
            
        \begin{align*}
        p^{\rm Ga}(\sigma_i^2\mid\alpha,\boldsymbol{z}_i^\top\boldsymbol{\eta})&\sim\frac{\alpha}{\tilde\eta_i}\int_0^\infty\omega_i e^{-\xi_i\omega_i}f(\omega_i,\alpha)\pi^{\rm Ga}(\omega_i) ~d\omega_i\\
        &\to\frac{\alpha}{\tilde\eta_i}\int_0^\infty\omega_i f(\omega_i,\alpha)\pi^{\rm Ga}(\omega_i)~d\omega_i\\
        &=\frac{\alpha}{\tilde\eta_i}\int_0^\infty\sqrt{\omega_i}\frac{(\alpha\omega_i)^{\alpha\omega_i-1/2}e^{-\alpha\omega_i}}{\sqrt{\alpha}\Gamma(\alpha\omega_i)}\pi^{\rm Ga}(\omega_i)~d\omega_i\\
        &\leq\frac{\alpha}{\tilde\eta_i}\frac{1}{\sqrt{2\pi\alpha}}\frac{ d^c}{\Gamma(c)}\int_0^\infty\omega_i^{(1/2+c)-1}e^{-d\omega_i}~d\omega_i=\frac{\alpha}{\tilde\eta_i}\frac{1}{\sqrt{2\pi\alpha}}\frac{ d^c}{\Gamma(c)}\frac{\Gamma(c+1/2)}{d^{c+1/2}}<\infty.
        \end{align*}
\end{proof}

\subsection{Proofs of Proposition \ref{proposition_post} and Theorem \ref{thm:prob_kappa}}
\label{proof_Proposition_2_2_theorem_2_1}

In this section, we let $\Delta_i^{\rm post}=(y_i,D_i,n_i,\alpha,\boldsymbol{z}_i^\top\boldsymbol{\eta},\boldsymbol{x}_i^\top\boldsymbol{\beta},\sigma_u^2)$, $\tilde\eta_i=\exp(\boldsymbol{z}_i^\top\boldsymbol{\eta})$, and $\tilde n_i=(n_i-1)/2$.

\begin{proof}[Proof of Proposition \ref{proposition_post}]
Assume $\mu=\E(\theta_i\mid \Delta_i^{\rm post})$. Then by the identity $(y_i-\theta_i)^2=(\mu-y_i)^2+2(\mu-y_i)(\theta_i-\mu)+(\theta_i-\mu)^2$, we obtain the following decomposition:

\begin{align*}
    \E_{\theta_i}\left[(y_i-\theta_i)^2\mid \Delta_i^{\rm post}\right]=\left(\E(\theta_i\mid \Delta_i^{\rm post})-y_i\right)^2+{\rm Var}\left(\theta_i\mid \Delta_i^{\rm post}\right).
\end{align*}

Using the fact that the full conditional distribution of $\theta_i$ is $p(\theta_i\mid y_i,\boldsymbol{x}_i^\top\boldsymbol{\beta},\sigma_u^2,\sigma_i^2)=N((1-\gamma_{i,\sigma_i^2})y_i+\gamma_{i,\sigma_i^2}\boldsymbol{x}_i^\top\boldsymbol{\beta},\gamma_{i,\sigma_i^2}\sigma_u^2)$ where $\gamma_{i,\sigma_i^2}=\sigma_i^2/(\sigma_i^2+\sigma_u^2)$, the first and second terms of the decomposition can be expressed as follows:

\begin{align*}
    \left(\E(\theta_i\mid \Delta_i^{\rm post})-y_i\right)^2&=\left(\E_{\sigma_i^2}(\E(\theta_i\mid y_i,\boldsymbol{x}_i^\top  \boldsymbol{\beta},\sigma_u^2,\sigma_i^2)\mid \Delta_i^{\rm post})-y_i\right)^2\\
    &=\left(\E_{\sigma_i^2}\left((1-\gamma_{i,\sigma_i^2})y_i+\gamma_{i,\sigma_i^2}\boldsymbol{x}_i^\top\boldsymbol{\beta}\mid \Delta_i^{\rm post}\right)-y_i\right)^2\\
    &=\left(-\E_{\sigma_i^2}(\gamma_{i,\sigma_i^2}\mid \Delta_i^{\rm post})y_i+\E_{\sigma_i^2}(\gamma_{i,\sigma_i^2}\mid \Delta_i^{\rm post})~\boldsymbol{x}_i^\top\boldsymbol{\beta}\right)^2\\
    &=\left(y_i-\boldsymbol{x}_i^\top\boldsymbol{\beta}\right)^2\left(\E_{\sigma_i^2}(\gamma_{i,\sigma_i^2}\mid \Delta_i^{\rm post})\right)^2
\end{align*}

and

\begin{align*}
    {\rm Var}&\left(\theta_i\mid \Delta_i^{\rm post}\right)\\
    &=\E_{\sigma_i^2}({\rm Var}(\theta_i\mid y_i,\boldsymbol{x}_i^\top  \boldsymbol{\beta},\sigma_u^2,\sigma_i^2)\mid \Delta_i^{\rm post})+{\rm Var}_{\sigma_i^2}(\E(\theta_i\mid y_i,\boldsymbol{x}_i^\top  \boldsymbol{\beta},\sigma_u^2,\sigma_i^2)\mid \Delta_i^{\rm post})\\
    &=\E_{\sigma_i^2}\left(\gamma_{i,\sigma_i^2}\sigma_u^2\mid \Delta_i^{\rm post}\right)+{\rm Var}_{\sigma_i^2}\left((1-\gamma_{i,\sigma_i^2})y_i+\gamma_{i,\sigma_i^2}\boldsymbol{x}_i^\top\boldsymbol{\beta}\mid \Delta_i^{\rm post}\right)\\
    &=\sigma_u^2\E_{\sigma_i^2}(\gamma_{i,\sigma_i^2}\mid \Delta_i^{\rm post})+(y_i-\boldsymbol{x}_i^\top\boldsymbol{\beta})^2{\rm Var}_{\sigma_i^2}(\gamma_{i,\sigma_i^2}\mid \Delta_i^{\rm post})\\
    &=\sigma_u^2\E_{\sigma_i^2}(\gamma_{i,\sigma_i^2}\mid \Delta_i^{\rm post})+(y_i-\boldsymbol{x}_i^\top\boldsymbol{\beta})^2\left[\E_{\sigma_i^2}(\gamma^2_{i,\sigma_i^2}\mid \Delta_i^{\rm post})-\left(\E_{\sigma_i^2}(\gamma_{i,\sigma_i^2}\mid \Delta_i^{\rm post})\right)^2\right].
\end{align*}

Combining the above two results yields the following exact representation of the expectation term in (\ref{observed_var}):

\begin{align*}
    \E_{\theta_i}&\left[(y_i-\theta_i)^2\mid \Delta_i^{\rm post}\right]=\sigma_u^2\E_{\sigma_i^2}(\gamma_{i,\sigma_i^2}\mid \Delta_i^{\rm post})+(y_i-\boldsymbol{x}_i^\top\boldsymbol{\beta})^2\E_{\sigma_i^2}(\gamma^2_{i,\sigma_i^2}\mid \Delta_i^{\rm post}),
\end{align*}

where

\begin{align*}
    \mathbb{E}_{\sigma^2_i}(\gamma^k_{i,\sigma^2_i}\mid \Delta_i^{\rm post}) 
    &= \dfrac{\int_0^\infty\int_{-\infty}^\infty \int_{0}^{\infty}\gamma^k_{i,\sigma^2_i} ~p(\sigma_i^2, \theta_i,\omega_i\mid\Delta_i^{\rm post})~d\sigma^2_i d\theta_id\omega_i }{\int_0^\infty\int_{-\infty}^\infty\int_{0}^{\infty}  p(\sigma_i^2, \theta_i,\omega_i\mid\Delta_i^{\rm post})~ d\sigma^2_id\theta_id\omega_i},\quad {\rm for}~k=1,2,
\end{align*}

where

\begin{align}\label{pf_prop2}
    p(\sigma_i^2, \theta_i,\omega_i&\mid\Delta_i^{\rm post})=\frac{\omega_i^{a-1}}{(1+\omega_i)^{a+b}}~N(\theta_i\mid\boldsymbol{x}_i^\top\boldsymbol{\beta},\sigma_u^2)\nonumber\\
    &\times\frac{(\alpha\omega_i\tilde\eta_i)^{\alpha\omega_i+1}}{\Gamma(\alpha\omega_i+1)}\left(\frac{1}{\sigma_i^2}\right)^{\frac{n_i}{2}+(\alpha\omega_i+1)+1}\exp\left\{-\frac{(y_i-\theta_i)^2/2+\tilde n_iD_i+\alpha\omega_i\tilde\eta_i}{\sigma_i^2}\right\}.
\end{align}


\end{proof}

\begin{proof}[Proof of Theorem \ref{thm:prob_kappa}]
Let $\varepsilon_i\in(0,1)$. Then

\begin{align*}
    \P(\kappa^{\rm post}_i<\varepsilon_i\mid \Delta_i^{\rm post})=\int_0^{c_i/\alpha} p(\omega_i\mid \Delta_i^{\rm post})d\omega_i/\int_0^\infty   p(\omega_i\mid \Delta_i^{\rm post}) d\omega_i,\quad
\end{align*}

where $c_i= (n_i/2)(\varepsilon_i/(1-\varepsilon_i))$, and from (\ref{pf_prop2}),

\begin{align*}
     p(\omega_i\mid \Delta_i^{\rm post})&=
      \int_{-\infty}^\infty\int_0^\infty p(\sigma_i^2,\theta_i,\omega_i\mid\Delta_i^{\rm post})~d\sigma_i^2d\theta_i\\
     &=\frac{\Gamma( n_i/2+\alpha\omega_i+1)}{\Gamma(\alpha\omega_i+1)}\frac{\omega_i^{a-1}}{(1+\omega_i)^{a+b}}\nonumber\\
    &\quad\int_{-\infty}^\infty\frac{(\alpha\omega_i\tilde\eta_i)^{\alpha\omega_i+1}}{\left((y_i-\theta_i)^2/2+\tilde n_iD_i+\alpha\omega_i\tilde\eta_i\right)^{ n_i/2+\alpha\omega_i+1}}N(\theta_i\mid\boldsymbol{x}_i^\top\boldsymbol{\beta},\sigma_u^2)~d\theta_i.
\end{align*}

We make a change of variable $\omega_i=(\tilde n_iD_i)u_i$, so now $\P(\kappa^{\rm post}_i<\varepsilon_i\mid \Delta_i^{\rm post})$ can be expressed as follows:

\begin{align*}
    \P(\kappa^{\rm post}_i<\varepsilon_i\mid \Delta_i^{\rm post})=I_0/I_1
\end{align*}

where

\begin{align}\label{define_I0I1}
    I_0=\int_0^{c_i/\alpha\tilde n_iD_i} g(u_i,\alpha) du_i,\quad I_1=\int_0^{\infty} g(u_i,\alpha) du_i,
\end{align}

and

\begin{align}
    g(u_i,\alpha)&=\alpha\tilde\eta_i\frac{\Gamma(n_i/2+\alpha(\tilde n_iD_i)u_i+1)}{\Gamma(\alpha(\tilde n_iD_i)u_i+1)}\frac{u_i^{a}}{(1+\tilde n_iD_iu_i)^{a+b}}~ 
    \nonumber\\
    &\quad\int_{-\infty}^\infty\frac{(\alpha u_i\tilde\eta_i)^{\alpha(\tilde n_iD_i)u_i}}{\left(1+(y_i-\theta_i)^2/(2\tilde n_iD_i)+\alpha u_i\tilde\eta_i\right)^{ n_i/2+\alpha(\tilde n_iD_i)u_i+1}}N(\theta_i\mid\boldsymbol{x}_i^\top\boldsymbol{\beta},\sigma_u^2)~d\theta_i\label{19}.
\end{align}

First, we find the upper bound of $I_0$ defined in (\ref{define_I0I1}). Using (\ref{19}), we have

\begin{align}
    g(u_i,\alpha)
    &\leq \alpha\tilde\eta_i\frac{\Gamma(n_i/2+\alpha(\tilde n_iD_i)u_i+1)}{\Gamma(\alpha(\tilde n_iD_i)u_i+1)}\frac{u_i^{a}}{(1+\tilde n_iD_iu_i)^{a+b}}
    \nonumber\\
    &\quad\int_{-\infty}^\infty\frac{(\alpha u_i\tilde\eta_i)^{\alpha(\tilde n_iD_i)u_i}}{\left(1+\alpha u_i\tilde\eta_i\right)^{ n_i/2+\alpha(\tilde n_iD_i)u_i+1}}N(\theta_i\mid\boldsymbol{x}_i^\top\boldsymbol{\beta},\sigma_u^2)~d\theta_i\nonumber\\
    &=\alpha\tilde\eta_i\frac{\Gamma(n_i/2+\alpha(\tilde n_iD_i)u_i+1)}{\Gamma(\alpha(\tilde n_iD_i)u_i+1)}\frac{u_i^{a}}{(1+\tilde n_iD_iu_i)^{a+b}}\frac{(\alpha u_i\tilde\eta_i)^{\alpha(\tilde n_iD_i)u_i}}{\left(1+\alpha u_i\tilde\eta_i\right)^{ n_i/2+\alpha(\tilde n_iD_i)u_i+1}}\nonumber\\
    &\leq \alpha\tilde\eta_i\frac{\Gamma(n_i/2+\alpha(\tilde n_iD_i)u_i+1)}{\Gamma(\alpha(\tilde n_iD_i)u_i+1)}\frac{u_i^{a}}{(1+\tilde n_iD_iu_i)^{a+b}}\frac{1}{\left(1+\alpha u_i\tilde\eta_i\right)^{ n_i/2+1}}\label{20}.
\end{align}

Using the bounds of a Gamma function in (\ref{bounds_gamma_func}), we can show that

\begin{align}
    \frac{\Gamma(n_i/2+\alpha(\tilde n_iD_i)u_i+1)}{\Gamma(\alpha(\tilde n_iD_i)u_i+1)}&\leq\frac{( n_i/2+\alpha(\tilde n_iD_i)u_i+1)^{ n_i/2+\alpha(\tilde n_iD_i)u_i+1-1/2}}{(\alpha(\tilde n_iD_i)u_i+1)^{\alpha(\tilde n_iD_i)u_i+1-1/2}e^{n_i/2}}\nonumber\\
    &=\frac{(n_i/2+\alpha(\tilde n_iD_i)u_i+1)^{n_i/2}}{e^{ n_i/2}}\left(1+\frac{n_i/2}{\alpha(\tilde n_iD_i)u_i+1}\right)^{\alpha(\tilde n_iD_i)u_i+1/2}\nonumber\\
    &\leq\frac{(n_i/2+\alpha(\tilde n_iD_i)u_i+1)^{n_i/2}}{e^{ n_i/2}}\left(1+\frac{n_i/2}{\alpha(\tilde n_iD_i)u_i+1}\right)^{\alpha(\tilde n_iD_i)u_i+1}\nonumber\\
    &\leq(n_i/2+\alpha(\tilde n_iD_i)u_i+1)^{n_i/2}\label{21}.
\end{align}

Thus, from (\ref{20}) and (\ref{21}), we obtain that

\begin{align}
    g(u_i,\alpha)
    &\leq \alpha\eta_i\frac{(n_i/2+\alpha(\tilde n_iD_i)u_i+1)^{n_i/2}}{(1+\alpha u_i\eta_i)^{n_i/2+1}}\frac{u_i^{a}}{(1+\tilde n_iD_iu_i)^{a+b}}\label{22}.
\end{align}

By (\ref{22}), if $n_i/2>a$, we obtain the upper bound of $I_0$ defined in (\ref{define_I0I1}) as follows:

\begin{align}
    I_0&\leq \alpha\tilde\eta_i\int_0^{c_i/\alpha\tilde n_iD_i}\frac{( n_i/2+\alpha(\tilde n_iD_i)u_i+1)^{n_i/2}}{(1+\alpha u_i\tilde\eta_i)^{n_i/2+1}}\frac{u_i^{a}}{(1+\tilde n_iD_iu_i)^{a+b}}~du_i\nonumber\\
    &= \tilde\eta_i\int_0^{c_i/\tilde n_iD_i}\frac{(n_i/2+\tilde n_iD_it_i+1)^{n_i/2}}{(1+\tilde\eta_it_i)^{ n_i/2+1}}\frac{(t_i/\alpha)^a}{(1+\tilde n_iD_it_i/\alpha)^{a+b}}~ dt_i;&&t_i=\alpha u_i\nonumber\\
    &\leq C_1~\alpha^{-a}\int_0^{c_i/\tilde n_iD_i}\frac{t_i^a}{(1+\tilde\eta_it_i)^{n_i/2+1}}dt_i\nonumber\\
    &=C_1~\alpha^{-a}\int_0^{\tilde\eta_ic_i/\tilde n_iD_i}\frac{(v_i/\tilde\eta_i)^a}{(1+v_i)^{n_i/2+1}}\frac{dv_i}{\tilde\eta_i};&&v_i=\tilde\eta_it_i\nonumber\\
    &\leq C_2~\alpha^{-a}\int_0^\infty\frac{v_i^a}{(1+v_i)^{n_i/2+1}}dv_i
    ~ \leq C_3~\alpha^{-a}.\label{23}
\end{align}

Next, we find the lower bound of $I_1$ (i.e., the upper bound of $I_1^{-1}$) defined in (\ref{define_I0I1}). From (\ref{19}), we have

\begin{align}
    g(u_i,\alpha)
    &\geq \alpha\tilde\eta_i\frac{\Gamma(n_i/2+\alpha(\tilde n_iD_i)u_i+1)}{\Gamma(\alpha(\tilde n_iD_i)u_i+1)}\frac{u_i^{a}}{(1+\tilde n_iD_iu_i)^{a+b}}\nonumber\\
    &\quad\int_{-\infty}^\infty\frac{\exp\left\{-((y_i-\theta_i)^2/2+\tilde n_iD_i)/\tilde\eta_i\right\}}{\left(1+(y_i-\theta_i)^2/(2\tilde n_iD_i)+\alpha u_i\tilde\eta_i\right)^{ n_i/2+1}}N(\theta_i\mid\boldsymbol{x}_i^\top\boldsymbol{\beta},\sigma_u^2)~d\theta_i.\label{24}
\end{align}

Using the bounds of a Gamma function in (\ref{bounds_gamma_func}), we can show that

\begin{align}
    \frac{\Gamma(n_i/2+\alpha(\tilde n_iD_i)u_i+1)}{\Gamma(\alpha(\tilde n_iD_i)u_i+1)}&\geq\frac{( n_i/2+\alpha(\tilde n_iD_i)u_i+1)^{( n_i/2+\alpha(\tilde n_iD_i)u_i+1)-1/2}}{(\alpha(\tilde n_iD_i)u_i+1)^{(\alpha(\tilde n_iD_i)u_i+1)-1/2}e^{1/12+n_i/2}}\nonumber\\
    &=\frac{(n_i/2+\alpha(\tilde n_iD_i)u_i+1)^{n_i/2}}{e^{1/12+ n_i/2}}\frac{(n_i/2+\alpha(\tilde n_iD_i)u_i+1)^{\alpha(\tilde n_iD_i)u_i+1/2}}{(\alpha(\tilde n_iD_i)u_i+1)^{\alpha(\tilde n_iD_i)u_i+1/2}}\nonumber\\
    &\geq\frac{(n_i/2+\alpha(\tilde n_iD_i)u_i+1)^{n_i/2}}{e^{1/12+ n_i/2}}.\label{25}
\end{align}

Thus, from (\ref{24}) and (\ref{25}), we obtain that

\begin{align*}
    g(u_i,\alpha)&\geq \alpha\tilde\eta_i \frac{(n_i/2+\alpha(\tilde n_iD_i)u_i+1)^{n_i/2}}{e^{1/12+n_i/2+\tilde n_iD_i/\tilde\eta_i}}\frac{u_i^a}{(1+\tilde n_iD_iu_i)^{a+b}}\nonumber\\
    &\quad\int_{-\infty}^\infty\frac{\exp\left\{-(y_i-\theta_i)^2/(2\tilde\eta_i)\right\}}{\left(1+(y_i-\theta_i)^2/(2\tilde n_iD_i)+\alpha u_i\tilde\eta_i\right)^{ n_i/2+1}}N(\theta_i\mid\boldsymbol{x}_i^\top\boldsymbol{\beta},\sigma_u^2)~d\theta_i\nonumber\\
    &\geq C_4~\alpha(n_i/2+\alpha(\tilde n_iD_i)u_i+1)^{n_i/2}\frac{u_i^a}{(1+\tilde n_iD_iu_i)^{a+b}}\nonumber\\
    &\quad\int_{-\infty}^\infty\frac{\exp\left\{-(y_i-\theta_i)^2/(2\tilde\eta_i)\right\}}{\left(1+(y_i-\theta_i)^2/(2\tilde n_iD_i)+\alpha u_i\tilde\eta_i\right)^{ n_i/2+1}}N(\theta_i\mid\boldsymbol{x}_i^\top\boldsymbol{\beta},\sigma_u^2)~d\theta_i.
\end{align*}

Therefore, the lower bound of $I_1$ defined in (\ref{define_I0I1}) is as follows:

\begin{align}
    I_1&\geq C_4\int_0^\infty\int_{-\infty}^\infty  \alpha(n_i/2+\alpha(\tilde n_iD_i)u_i+1)^{n_i/2}\frac{u_i^a}{(1+\tilde n_iD_iu_i)^{a+b}}\nonumber\\
    &\quad\frac{\exp\left\{-(y_i-\theta_i)^2/(2\tilde\eta_i)\right\}}{\left(1+(y_i-\theta_i)^2/(2\tilde n_iD_i)+\alpha u_i\tilde\eta_i\right)^{ n_i/2+1}}N(\theta_i\mid\boldsymbol{x}_i^\top\boldsymbol{\beta},\sigma_u^2)~d\theta_idu_i\nonumber\\
    &\geq C_4\int_1^2\int_{-\infty}^\infty  \alpha(n_i/2+\alpha(\tilde n_iD_i)u_i+1)^{n_i/2}\frac{u_i^a}{(1+\tilde n_iD_iu_i)^{a+b}}\nonumber\\
    &\quad\frac{\exp\left\{-(y_i-\theta_i)^2/(2\tilde\eta_i)\right\}}{\left(1+(y_i-\theta_i)^2/(2\tilde n_iD_i)+\alpha u_i\tilde\eta_i\right)^{ n_i/2+1}}N(\theta_i\mid\boldsymbol{x}_i^\top\boldsymbol{\beta},\sigma_u^2)~d\theta_idu_i\nonumber\\
    &\geq C_4\int_{-\infty}^\infty  \alpha(n_i/2+\alpha(\tilde n_iD_i)+1)^{n_i/2}\frac{1^a}{(1+2\tilde n_iD_i)^{a+b}}\nonumber\\
    &\quad\frac{\exp\left\{-(y_i-\theta_i)^2/(2\tilde\eta_i)\right\}}{\left(1+(y_i-\theta_i)^2/(2\tilde n_iD_i)+2\alpha \tilde\eta_i\right)^{ n_i/2+1}}N(\theta_i\mid\boldsymbol{x}_i^\top\boldsymbol{\beta},\sigma_u^2)~d\theta_i\nonumber\\
    &\geq C_5\int_{-\infty}^\infty  \frac{\alpha(n_i/2+\alpha(\tilde n_iD_i)+1)^{n_i/2}}{\left(1+(y_i-\theta_i)^2/(2\tilde n_iD_i)+2\alpha \tilde\eta_i\right)^{ n_i/2+1}}\exp\left\{-\frac{(y_i-\theta_i)^2}{2\tilde\eta_i}\right\}N(\theta_i\mid\boldsymbol{x}_i^\top\boldsymbol{\beta},\sigma_u^2)~d\theta_i.\label{26}
\end{align}

For $\alpha>1$ and $\theta_i\in\mathbb{R}$, we note that

\begin{align*}
    &\frac{\alpha(n_i/2+\alpha(\tilde n_iD_i)+1)^{n_i/2}}{\left(1+(y_i-\theta_i)^2/(2\tilde n_iD_i)+2\alpha \tilde\eta_i\right)^{ n_i/2+1}}\exp\left\{-\frac{(y_i-\theta_i)^2}{2\tilde\eta_i}\right\}N(\theta_i\mid\boldsymbol{x}_i^\top\boldsymbol{\beta},\sigma_u^2)\\
    &\quad =\frac{((n_i/2+1)/\alpha+ \tilde n_iD_i)^{n_i/2}}{\left((1+(y_i-\theta_i)^2/(2\tilde n_iD_i))/\alpha+2 \tilde\eta_i\right)^{ n_i/2+1}}\exp\left\{-\frac{(y_i-\theta_i)^2}{2\tilde\eta_i}\right\}N(\theta_i\mid\boldsymbol{x}_i^\top\boldsymbol{\beta},\sigma_u^2)\\
    &\quad\leq \frac{(n_i/2+1+\tilde n_iD_i)^{n_i/2}}{(2\tilde\eta_i)^{n_i/2+1}} ~\exp\left\{-\frac{(y_i-\theta_i)^2}{2\tilde\eta_i}\right\}N(\theta_i\mid\boldsymbol{x}_i^\top\boldsymbol{\beta},\sigma_u^2).
\end{align*}

Since the function in (\ref{26}) is bounded by the integrable function of $\theta_i$ for all $\alpha>1$, we apply the DCT to obtain that

\begin{align}
    \lim_{\alpha\to\infty}I_1\geq C_5\int_{-\infty}^\infty  \frac{ (\tilde n_iD_i)^{n_i/2}}{  (2\tilde\eta_i)^{ n_i/2+1}}\exp\left\{-\frac{(y_i-\theta_i)^2}{2\tilde\eta_i}\right\}N(\theta_i\mid\boldsymbol{x}_i^\top\boldsymbol{\beta},\sigma_u^2)~d\theta_i=C_6<\infty.\label{27}
\end{align}

From the upper bound of $I_0$ in (\ref{23}) and the upper bound of $I_1^{-1}$ in (\ref{27}), we have

\begin{align*}
    0\leq \lim_{\alpha\to\infty}\P(\kappa^{\rm post}_i<\varepsilon_i\mid \Delta_i^{\rm post})=\lim_{\alpha\to\infty}\frac{I_0}{I_1}
    \leq \lim_{\alpha\to\infty}\frac{C_3}{C_6} \alpha^{-a}=0.
\end{align*}

\end{proof}

\subsection{Proof of Theorem \ref{thm:proper_posterior}}
\label{proof_Theorem_3_1}

\begin{proof}
Denote $\tilde n_i=(n_i-1)/2$ and let $\textbf{X}=(\boldsymbol{x}_1,...,\boldsymbol{x}_m)$ be a ($p\times m$) matrix of covariates in the FH model (\ref{FH_model}). We will prove that the joint posterior distribution in (\ref{HB_posterior}) is proper, that is,
\begin{align*}
\int\pi(\boldsymbol{\theta},\boldsymbol{\sigma}^2,\boldsymbol{\beta},\sigma_u^2,\alpha,\boldsymbol{\omega},\boldsymbol{\eta}\mid \boldsymbol{\mathcal{D}})~d\boldsymbol{\theta}~d\boldsymbol{\sigma}^2~d\boldsymbol{\beta}~d\sigma_u^2 ~d\alpha~d\boldsymbol{\omega}~d\boldsymbol{\eta}<\infty 
\end{align*}

where 
\begin{align*}
\pi(\boldsymbol{\theta},\boldsymbol{\sigma}^2,\boldsymbol{\beta},\sigma_u^2,\alpha,&\boldsymbol{\omega},\boldsymbol{\eta}\mid \boldsymbol{\mathcal{D}})
\propto(\sigma_u^2)^{-m/2}\times\pi(\alpha)\nonumber\\
&\times\prod_{i=1}^m\left[\left(\frac{1}{\sigma_i^2}\right)^{\frac{n_i}{2}+(\alpha\omega_i+1)+1}\times\frac{(\alpha\omega_i\exp(\boldsymbol{z}_i^\top\boldsymbol{\eta}))^{\alpha\omega_i+1}}{\Gamma(\alpha\omega_i+1)}\times\pi^{\rm BP}(\omega_i)\right]\nonumber\\
&\times\prod_{i=1}^m\exp\left\{-\frac{(y_i-\theta_i)^2/2+\tilde n_iD_i+\alpha\omega_i\exp(\boldsymbol{z}_i^\top\boldsymbol{\eta})}{\sigma_i^2}-\frac{(\theta_i-\boldsymbol{x}_i^\top\boldsymbol{\beta})^2}{2\sigma_u^2}\right\}.
\end{align*}

Let $\varphi_i(\theta_i,\alpha,\omega_i,\boldsymbol{\eta})=(y_i-\theta_i)^2/2+\tilde n_i D_i+\alpha\omega_i\exp(\boldsymbol{z}_i^\top\boldsymbol{\eta})$ for $i=1,...,m$. We integrate (\ref{HB_posterior}) with respect to $\sigma_1^2,...,\sigma_m^2$ to obtain

\begin{align*}
\pi(\boldsymbol{\theta},&\boldsymbol{\beta},\sigma_u^2,\alpha,\boldsymbol{\omega},\boldsymbol{\eta}\mid \boldsymbol{\mathcal{D}})\propto~(\sigma_u^2)^{-m/2}\times\pi(\alpha)\times\prod_{i=1}^m\left[\frac{(\alpha\omega_i\exp(\boldsymbol{z}_i^\top\boldsymbol{\eta}))^{\alpha\omega_i+1}}{\Gamma(\alpha\omega_i+1)}\times\pi^{\rm BP}(\omega_i)\right]\\
&\times\prod_{i=1}^m\left[\Gamma\left(\frac{n_i}{2}+\alpha\omega_i+1\right)\times\varphi_i(\theta_i,\alpha,\omega_i,\boldsymbol{\eta})^{-(n_i/2+\alpha\omega_i+1)}\right]\times\exp\left\{-\frac{(\boldsymbol{\theta-\textbf{X}^\top\boldsymbol{\beta}})^\top(\boldsymbol{\theta}-\textbf{X}^\top\boldsymbol{\beta})}{2\sigma_u^2}\right\}.
\end{align*}

Next, let $\textbf{A}=\textbf{I}_m-\textbf{X}^\top(\textbf{X}\textbf{X}^\top)^{-1}\textbf{X}$. We use the fact that

\begin{align*}
    \int_{\mathbb{R}^p}\exp\left\{-\frac{(\boldsymbol{\theta-\textbf{X}^\top\boldsymbol{\beta}})^\top(\boldsymbol{\theta}-\textbf{X}^\top\boldsymbol{\beta})}{2\sigma_u^2}\right\}~d\boldsymbol{\beta}=( 2\pi\sigma_u^2)^{p/2}| \textbf{X}\textbf{X}^\top|^{-1/2}\exp\left\{-\frac{\boldsymbol{\theta}^\top\textbf{A}\boldsymbol{\theta}}{2\sigma_u^2}\right\}
\end{align*}

to obtain

\begin{align}\label{proof_thm_joint_w/o_beta}
    \pi(\boldsymbol{\theta}&,\sigma_u^2,\alpha,\boldsymbol{\omega},\boldsymbol{\eta}\mid \boldsymbol{\mathcal{D}})\propto~(\sigma_u^2)^{-(m-p-2)/2-1}\exp\left\{-\frac{\boldsymbol{\theta}^\top\textbf{A}\boldsymbol{\theta}}{2\sigma_u^2}\right\}\times\pi(\alpha)\nonumber\\
    &\times\prod_{i=1}^m\left[\frac{\Gamma\left(\frac{n_i}{2}+\alpha\omega_i+1\right)(\alpha\omega_i\exp(\boldsymbol{z}_i^\top\boldsymbol{\eta}))^{\alpha\omega_i+1}}{\Gamma(\alpha\omega_i+1)}\times\pi^{\rm BP}(\omega_i)\times\varphi_i(\theta_i,\alpha,\omega_i,\boldsymbol{\eta})^{-(n_i/2+\alpha\omega_i+1)}\right].
\end{align}

When $m-p-2>0$, we can integrate the equation in (\ref{proof_thm_joint_w/o_beta}) with respect to $\sigma_u^2$ to obtain

\begin{align}\label{proof_thm_joint_w/o_sigma2_u}
    \pi(\boldsymbol{\theta}&,\alpha,\boldsymbol{\omega},\boldsymbol{\eta}\mid \boldsymbol{\mathcal{D}})\propto~(\boldsymbol{\theta}^\top\textbf{A}\boldsymbol{\theta})^{-(m-p-2)/2}\times\pi(\alpha)\nonumber\\
    &\times\prod_{i=1}^m\left[\frac{\Gamma\left(\frac{n_i}{2}+\alpha\omega_i+1\right)(\alpha\omega_i\exp(\boldsymbol{z}_i^\top\boldsymbol{\eta}))^{\alpha\omega_i+1}}{\Gamma(\alpha\omega_i+1)}\times\pi^{\rm BP}(\omega_i)\times\varphi_i(\theta_i,\alpha,\omega_i,\boldsymbol{\eta})^{-(n_i/2+\alpha\omega_i+1)}\right].
\end{align}

\cite{2017_Sukasawa} computed the integral of 
(\ref{proof_thm_joint_w/o_sigma2_u}) 
with respect to $\boldsymbol{\theta}$ by decomposing it into the domains of $\Omega=\{\boldsymbol{\theta}\mid\boldsymbol{\theta}^\top\textbf{A}\boldsymbol{\theta}\leq1\}\subset\mathbb{R}^m$ and $\Omega^c$, as follows:

\begin{align}\label{proof_thm_correction_sugasawa}
    \int_{\mathbb{R}^m}(&\boldsymbol{\theta}^\top\textbf{A}\boldsymbol{\theta})^{-(m-p-2)/2}\times\prod_{i=1}^m\varphi_i(\theta_i,\alpha,\omega_i,\boldsymbol{\eta})^{-(n_i/2+\alpha\omega_i+1)}~d\boldsymbol{\theta}\nonumber\\
    \leq &~C_1~\prod_{i=1}^{m-p}\left(\tilde n_i D_i+\alpha\omega_i\exp(\boldsymbol{z}_i^\top\boldsymbol{\eta})\right)^{-(n_i/2+\alpha\omega_i+1)}\nonumber\\
    &\hspace{2cm}\times\prod_{i=m-p+1}^m\int_{-\infty}^\infty\varphi_i(\theta_i,\alpha,\omega_i,\boldsymbol{\eta})^{-(n_i/2+\alpha\omega_i+1)}~d\theta_i&&\text{over the domain $\Omega$}\nonumber\\
    &+C_2~\prod_{i=1}^m\int_{-\infty}^\infty\varphi_i(\theta_i,\alpha,\omega_i,\boldsymbol{\eta})^{-(n_i/2+\alpha\omega_i+1)}~d\theta_i &&\text{over the domain $\Omega^c$}.
\end{align}

We then make a transformation $\mu_i= (y_i-\theta_i)/\sqrt{2\tilde n_i D_i+2\alpha\omega_i\exp(\boldsymbol{z}_i^\top\boldsymbol{\eta})}$ to compute

\begin{align*}
    \int_{-\infty}^\infty& 
    \varphi_i(\theta_i,\alpha,\omega_i,\boldsymbol{\eta})~d\theta_i\\
    &=\sqrt{2}\left(\tilde n_i D_i+\alpha\omega_i\exp(\boldsymbol{z}_i^\top\boldsymbol{\eta})\right)^{-(\tilde n_i+\alpha\omega_i+1)}\int_{-\infty}^\infty(1+\mu_i^2)^{-(n_i/2+\alpha\omega_i+1)}~d\mu_i\\
    &\leq \sqrt{2\pi}\left(\tilde n_i D_i+\alpha\omega_i\exp(\boldsymbol{z}_i^\top\boldsymbol{\eta})\right)^{-(\tilde n_i+\alpha\omega_i+1)}(n_i/2+\alpha\omega_i+1)^{-1/2}
\end{align*}

because, let $B=(n_i/2+\alpha\omega_i+1)$,

\begin{align*}
    \int_{-\infty}^\infty(1+\mu_i^2)^{-B}d\mu_i&=2\int_0^\infty(1+\mu_i^2)^{-B}d\mu_i=\sqrt{\pi}~\frac{\Gamma(B-1/2)}{\Gamma(B)}\leq\sqrt{\pi}~ B^{-1/2}.
\end{align*}

Thus, the integral of $\pi(\boldsymbol{\theta},\alpha,\boldsymbol{\omega},\boldsymbol{\eta}\mid \boldsymbol{\mathcal{D}})$ in (\ref{proof_thm_joint_w/o_sigma2_u}) with respect to $\boldsymbol{\theta}$ is bounded by

\begin{align}\label{proof_thm_joint_w/o_theta}
    \int\pi(\boldsymbol{\theta},\alpha,\boldsymbol{\omega},\boldsymbol{\eta}\mid \boldsymbol{\mathcal{D}})~d\boldsymbol{\theta}<&C_3\times\pi(\alpha)\times\prod_{i=1}^m\left[\frac{\Gamma(n_i/2+\alpha\omega_i+1)}{\Gamma(\alpha\omega_i+1)}\times(n_i/2+\alpha\omega_i+1)^{-1/2}\times\pi^{\rm BP}(\omega_i)\right]\nonumber\\
    &\times\prod_{i=1}^m~(\alpha\omega_i\exp(\boldsymbol{z}_i^\top\boldsymbol{\eta}))^{\alpha\omega_i+1}(\tilde n_iD_i+\alpha\omega_i\exp(\boldsymbol{z}_i^\top\boldsymbol{\eta}))^{-(\tilde n_i+\alpha\omega_i+1)}.
\end{align}

Note that the expression in (\ref{proof_thm_joint_w/o_theta}) is only considered the second term in (\ref{proof_thm_correction_sugasawa}) evaluated over the domain $\Omega^c$ of $\boldsymbol{\theta}$ and the proof of $\int\pi(\boldsymbol{\theta},\alpha,\boldsymbol{\omega},\boldsymbol{\eta}\mid \boldsymbol{\mathcal{D}})~d\boldsymbol{\theta}d\boldsymbol{\eta}d\boldsymbol{\omega}d\alpha<\infty $ below focuses on this domain. A similar proof with small modifications can be applied to show that the aforementioned integral is also finite over the domain $\Omega$ of $\boldsymbol{\theta}$.

Next, we use the bounds of a Gamma function in (\ref{bounds_gamma_func}) to obtain

\begin{align*}
    \frac{\Gamma(n_i/2+\alpha\omega_i+1)}{\Gamma(\alpha\omega_i+1)}&\leq \frac{(n_i/2+\alpha\omega_i+1)^{n_i/2+\alpha\omega_i+1-1/2}e^{-n_i/2}}{(\alpha\omega_i+1)^{\alpha\omega_i+1-1/2}}\\
    &=e^{-n_i/2}(n_i/2+\alpha\omega_i+1)^{n_i/2}\left(1+\frac{n_i/2}{\alpha\omega_i+1}\right)^{\alpha\omega_i+1/2}\\
    &\leq e^{-n_i/2}(n_i/2+\alpha\omega_i+1)^{n_i/2}\left(1+\frac{n_i/2}{\alpha\omega_i+1}\right)^{\alpha\omega_i+1}\\
    &\leq (n_i/2+\alpha\omega_i+1)^{n_i/2}.
\end{align*}

Hence, the expression in (\ref{proof_thm_joint_w/o_theta}) can be bounded further by

\begin{align*}
    \int\pi(\boldsymbol{\theta},\alpha,\boldsymbol{\omega},\boldsymbol{\eta}\mid \boldsymbol{\mathcal{D}})~d\boldsymbol{\theta}<&C_4\times\pi(\alpha)\times\prod_{i=1}^m [(n_i/2+\alpha\omega_i+1)^{\tilde n_i}\times\pi^{\rm BP}(\omega_i)]\nonumber\\
    &\times\prod_{i=1}^m~(\alpha\omega_i\exp(\boldsymbol{z}_i^\top\boldsymbol{\eta}))^{\alpha\omega_i+1}(\tilde n_iD_i+\alpha\omega_i\exp(\boldsymbol{z}_i^\top\boldsymbol{\eta}))^{-(\tilde n_i+\alpha\omega_i+1)}.
\end{align*}

Let $\mathbb{R}_+=\{x\in\mathbb{R}\mid x>0\}$. To complete the proof, we need to show that

\begin{align*}
    I:=&\int_{\mathbb{R}_+^{m+1}}\pi(\alpha)\times\prod_{i=1}^m [(n_i/2+\alpha\omega_i+1)^{\tilde n_i}\times(\alpha\omega_i)^{\alpha\omega_i+1}\times\pi^{\rm BP}(\omega_i)]\\
    &\times\int_{\mathbb{R}^q}~\prod_{i=1}^m~(\exp(\boldsymbol{z}_i^\top\boldsymbol{\eta}))^{\alpha\omega_i+1}(\tilde n_iD_i+\alpha\omega_i\exp(\boldsymbol{z}_i^\top\boldsymbol{\eta}))^{-(\tilde n_i+\alpha\omega_i+1)}~d\boldsymbol{\eta}~d\boldsymbol{\omega}d\alpha<\infty.
\end{align*}

To do so, we follow the approach in \cite{2017_Sukasawa}. Without loss of generality, we assume that the signs of $z_{ik}$ are all positive for $i=1,...,m$ and $k=1,...,q$. Since $\tilde n_iD_i$ and $\alpha\omega_i\exp(\boldsymbol{z}_i^\top\boldsymbol{\eta})$ are positive, we obtain that

\begin{align}\label{proof_thm_joint_eta}
I\leq\int_{\mathbb{R}_+^{m+1}}\pi(\alpha)\times&\prod_{i=1}^m [(n_i/2+\alpha\omega_i+1)^{\tilde n_i}\times(\alpha\omega_i)^{\alpha\omega_i+1}\times\pi^{\rm BP}(\omega_i)]\nonumber\\&\times\int_{\mathbb{R}^q}~\prod_{k=1}^q \exp(\eta_k)^{B_{1k}}\left\{C_*+b_*\prod_{k=1}^q\exp(\eta_k)^{B_{1k}+B_{2k}}\right\}^{-1}~d\boldsymbol{\eta}~d\boldsymbol{\omega}d\alpha,
\end{align}

where $B_{1k}=\sum_{i=1}^m(\alpha\omega_i+1)z_{ik}$, $B_{2k}=\sum_{i=1}^m\tilde n_iz_{ik}$, $b_*=\prod_{i=1}^m(\alpha\omega_i)^{\tilde n_i+\alpha\omega_i+1}$, and $C_*=\prod_{i=1}^m(\tilde n_iD_i)^{\tilde n_i+\alpha\omega_i+1}$. Because $B_{1k}>0$ and $B_{2k}>0$, there exists $\lambda>0$ such that $B_{1k}>1/\lambda>0$, for all $k$. Making a transformation $\phi_k=\exp(\eta_k/\lambda)$, the expression in (\ref{proof_thm_joint_eta}) is given by

\begin{align*}
I\leq\int_{\mathbb{R}_+^{m+1}}\pi(\alpha)\times&\prod_{i=1}^m [(n_i/2+\alpha\omega_i+1)^{\tilde n_i}\times(\alpha\omega_i)^{\alpha\omega_i+1}\times\pi^{\rm BP}(\omega_i)]\nonumber\\&\times\lambda^q\int_{\mathbb{R}_+^q}~\prod_{k=1}^q \phi_k^{\lambda B_{1k}-1}\left\{C_*+b_*\prod_{k=1}^q \phi_k^{\lambda B_{1k}+\lambda B_{2k}}\right\}^{-1}~d\boldsymbol{\phi}~d\boldsymbol{\omega}d\alpha,
\end{align*}

where $\boldsymbol{\phi}=(\phi_1,...,\phi_q)^\top$. We decompose the integral $I$ into the domains $\{\phi_k\leq1\}$ or $\{\phi_k\geq1\}$ for $k=1,...,q$, and $\{\alpha\omega_i\leq n_i/2+1\}$ or $\{\alpha\omega_i\geq n_i/2+1\}$ for $i=1,...,m$. We assume that $0\leq\phi_1,...,\phi_r\leq1$ and $\phi_{r+1},...,\phi_q\geq1$ for fixed $r=1,...,q$. Define a function

\begin{align*}
    g(\phi_1,...,\phi_r,\alpha,\boldsymbol{\omega}):=\lambda^q&\times 
    \pi(\alpha)\times\prod_{i=1}^m [(n_i/2+\alpha\omega_i+1)^{\tilde n_i}\times(\alpha\omega_i)^{\alpha\omega_i+1}\times\pi^{\rm BP}(\omega_i)]\\
    &\times\int_{{[1,\infty)}^{q-r}}~\prod_{k=1}^q \phi_k^{\lambda B_{1k}-1}\left\{C_*+b_*\prod_{k=1}^q \phi_k^{\lambda B_{1k}+\lambda B_{2k}}\right\}^{-1}~d\phi_{r+1}\cdots d\phi_q.
\end{align*}

Hence, let $\mathcal{D}=\{(\alpha,\boldsymbol{\omega})\in \mathbb{R}_+^{m+1}:\alpha\omega_i\leq n_i/2+1,~i=1,...,m \}$, it is sufficient to show that

\begin{align*}
    I_1:=&~\int_{\mathcal{D}}\int_{(0,1]^r}~g(\phi_1,...,\phi_r,\alpha,\boldsymbol{\omega})~d\phi_1\cdots d\phi_r~d\boldsymbol{\omega}d\alpha<\infty
\end{align*}

and

\begin{align*}
    I_2:=&~\int_{\mathcal{D}^c}\int_{(0,1]^r}~g(\phi_1,...,\phi_r,\alpha,\boldsymbol{\omega})~d\phi_1\cdots d\phi_r~d\boldsymbol{\omega}d\alpha<\infty.
\end{align*}

Note that the function $g$ is 0 if at least one of $\phi_1,...,\phi_r,\alpha,\omega_1,....,\omega_m$ is 0. Otherwise,

\begin{align}\label{proof_thm_bound_eta_sugasawa}
    g(\phi_1,.&..,\phi_r,\alpha,\boldsymbol{\omega})\nonumber\\
    &\leq\lambda^q\times 
    \pi(\alpha)\times\prod_{i=1}^m [(n_i/2+\alpha\omega_i+1)^{\tilde n_i}\times(\alpha\omega_i)^{\alpha\omega_i+1}\times\pi^{\rm BP}(\omega_i)]\nonumber\\
    &\hspace{.5cm}\times\int_{{[1,\infty)}^{q-r}}~\prod_{k=1}^q \phi_k^{\lambda B_{1k}-1}\left\{b_*\prod_{k=1}^q \phi_k^{\lambda B_{1k}+\lambda B_{2k}}\right\}^{-1}~d\phi_{r+1}\cdots d\phi_q\nonumber\\
    &=\lambda^q\times\pi(\alpha)\times\prod_{i=1}^m [(n_i/2+\alpha\omega_i+1)^{\tilde n_i}\times(\alpha\omega_i)^{\alpha\omega_i+1}\times\pi^{\rm BP}(\omega_i)]\times b_*^{-1}D_*^{-1}\nonumber\\
    &\hspace{.5cm}\times \prod_{k=1}^r \phi_k^{\lambda B_{1k}-1}\times\prod_{j=r+1}^q\int_1^\infty~\phi_j^{-\lambda B_{2k}-1}~d\phi_j\nonumber\\
    &\leq C_5~D_*^{-1} \pi(\alpha)\times\prod_{i=1}^m\frac{(n_i/2+\alpha\omega_i+1)^{\tilde n_i}}{(\alpha\omega_i)^{\tilde n_i}}\times~\pi^{\rm BP}(\omega_i)\times\prod_{k=1}^r \phi_k^{\lambda B_{1k}-1},
\end{align}

where $D_*=\prod_{k=1}^r\phi_k^{\lambda B_{1k}+\lambda B_{2k}}$. First, we evaluate $I_1$. We make a change of variable $\xi_i=\alpha\omega_i$, so over the domain of $I_1$, we have $\xi_i\in[0,n_i/2+1]$ for $i=1,...,m$. Also, note that using the BP prior,  $\pi^{\rm BP}(\omega_i)\propto\omega_i^{a-1}(1+\omega_i)^{-(a+b)}\leq\omega_i^{a-1}$ for $\omega_i>0$. Hence, from (\ref{proof_thm_bound_eta_sugasawa}), for $0<\phi_1,...,\phi_r\leq1$ and $0<\xi_i\leq n_i/2+1$, we obtain $g(\phi_1,...,\phi_r,\alpha,\boldsymbol{\omega})\leq C_6~D_{**}^{-1}\alpha^{-am}\pi(\alpha)\prod_{i=1}^m\xi_i^{a-1}\prod_{k=1}^r\phi_k^{\lambda B_{1k}-1}$ where $D_{**}=\prod_{i=1}^m\xi_i^{\tilde n_i}D_*$. If we assume that $\int_0^\infty \alpha^{-am}\pi(\alpha)d\alpha$ is finite, the function $g$ is bounded over $[0,1]^r\times[0,n_1/2+1]\times\cdots\times[0,n_m/2+1]$. Therefore, the integral $I_1$ is finite. Next, we evaluate $I_2$. For $\alpha\omega_i\geq n_i/2+1$, the term $(1+(n_i/2+1)/\alpha\omega_i)^{\tilde n_i}<\infty$ in (\ref{proof_thm_bound_eta_sugasawa}) is bounded for $i=1,...,m$. Therefore, if the prior $\pi(\alpha)$ is proper, the integral $I_2$ is finite.

\end{proof}

\clearpage
\section{MCMC algorithms for the existing models}
\label{MCMC_schemes}

\begin{algorithm}
    \caption{MCMC algorithm for the YC model proposed by \cite{2006_You&Chapman}}
    \begin{algorithmic}[1]
    \small
    \State \textbf{Parameters}: $\boldsymbol{\mathcal{V}}=(\boldsymbol{\theta},\boldsymbol{\beta},\boldsymbol{\sigma}^2,\sigma_u^2)$.
    \State \textbf{Input}: initial values $\boldsymbol{\mathcal{V}}^{(0)}$ and observed data $\{n_i,y_i,D_i,\boldsymbol{x}_i\}_{i=1}^m$.
    \State \textbf{Output}: posterior samples $\boldsymbol{\mathcal{V}}^{(s)}$ for $s=1,...,S$.
    \State $a_i\leftarrow0.0001$ and $b_i\leftarrow0.0001$.
    \For {iteration $s=1,2,\ldots,S$}
    \State Sample $\theta_i^{(s)}\sim N\left((1-\gamma_i)y_i+\gamma_i\boldsymbol{x}_i^\top\boldsymbol\beta,\gamma_i\sigma_u^{2}\right)$, $\gamma_i=\frac{\sigma_i^{2}}{\sigma_i^{2}+\sigma_u^{2}}$ independently for $i=1,...,m$.
    \State Sample $\sigma_i^{2(s)}\sim {\rm IG}\left(a_i+\frac{n_i}{2},~b_i+\frac{1}{2}[(y_i-\theta_i)^2+(n_i-1)D_i]\right)$  independently for $i=1,...,m$.
	\State Sample $\boldsymbol{\beta}^{(s)}\sim  N_p\left(\left(\sum_{i=1}^m\boldsymbol{x}_i\boldsymbol{x}_i^\top\right)^{-1}\left(\sum_{i=1}^m\boldsymbol{x}_i\theta_i\right),~\sigma_u^{2}\left(\sum_{i=1}^m\boldsymbol{x}_i\boldsymbol{x}_i^\top\right)^{-1}\right).$
    \State Sample $\sigma_u^{2(s)}\sim {\rm IG}\left(\frac{m}{2}-1,~\frac{1}{2}\sum_{i=1}^m(\theta_i-\boldsymbol{x}_i^\top\boldsymbol{\beta})^2\right)$.
    \EndFor
	\end{algorithmic} 
    \label{alg:YC}
\end{algorithm}

\begin{algorithm}
    \caption{MCMC algorithm for the STK1 model proposed by \cite{2017_Sukasawa}}
    \begin{algorithmic}[1]
    \small
    \State \textbf{Parameters}: $\boldsymbol{\mathcal{V}}=(\boldsymbol{\theta},\boldsymbol{\beta},\boldsymbol{\sigma}^2,\sigma_u^2,\gamma)$.
    \State \textbf{Input}: initial values $\boldsymbol{\mathcal{V}}^{(0)}$ and observed data $\{n_i,y_i,D_i,\boldsymbol{x}_i\}_{i=1}^m$.
    \State \textbf{Output}: posterior samples $\boldsymbol{\mathcal{V}}^{(s)}$ for $s=1,...,S$.
    \State $a_i\leftarrow2$ and $b_i\leftarrow1/n_i$.
    \For {iteration $s=1,2,\ldots,S$}
    \State Sample $\theta_i^{(s)}\sim N\left((1-\gamma_i)y_i+\gamma_i\boldsymbol{x}_i^\top\boldsymbol\beta,\gamma_i\sigma_u^{2}\right)$, $\gamma_i=\frac{\sigma_i^{2}}{\sigma_i^{2}+\sigma_u^{2}}$ independently for $i=1,...,m$.
    \State Sample $\sigma_i^{2(s)}\sim {\rm IG}\left(a_i+\frac{n_i}{2},~b_i\gamma+\frac{1}{2}[(y_i-\theta_i)^2+(n_i-1)D_i]\right)$  independently for $i=1,...,m$.
    \State Sample $\boldsymbol{\beta}^{(s)}\sim  N_p\left(\left(\sum_{i=1}^m\boldsymbol{x}_i\boldsymbol{x}_i^\top\right)^{-1}\left(\sum_{i=1}^m\boldsymbol{x}_i\theta_i\right),~\sigma_u^{2}\left(\sum_{i=1}^m\boldsymbol{x}_i\boldsymbol{x}_i^\top\right)^{-1}\right).$
    \State Sample $\sigma_u^{2(s)}\sim {\rm IG}\left(\frac{m}{2}-1,~\frac{1}{2}\sum_{i=1}^m(\theta_i-\boldsymbol{x}_i^\top\boldsymbol{\beta})^2\right)$.
    \State Sample $\gamma^{(s)}\sim {\rm Ga}\left(1+\sum_{i=1}^m a_i,~\sum_{i=1}^m\frac{b_i}{\sigma_i^{2}}\right)$.
    \EndFor
	\end{algorithmic} 
    \label{alg:STK1}
\end{algorithm}

\begin{algorithm}
    \caption{MCMC algorithm for the STK2 model proposed by \cite{2017_Sukasawa}}
    \begin{algorithmic}[1]
    \small
    \State \textbf{Parameters}: $\boldsymbol{\mathcal{V}}=(\boldsymbol{\theta},\boldsymbol{\beta},\boldsymbol{\sigma}^2,\sigma_u^2,\gamma,\boldsymbol{\eta})$.
    \State \textbf{Input}: initial values $\boldsymbol{\mathcal{V}}^{(0)}$, observed data $\{n_i,y_i,D_i,\boldsymbol{x}_i,\boldsymbol{z}_i\}_{i=1}^m$, and $c^{(1)}>0$.
    \State \textbf{Output}: posterior samples $\boldsymbol{\mathcal{V}}^{(s)}$ for $s=1,...,S$.
    \State $a_i\leftarrow2$ and $b_i\leftarrow1/n_i$. 
    \For {iteration $s=1,2,\ldots,S$}
    \State Sample $\theta_i^{(s)}\sim N\left((1-\gamma_i)y_i+\gamma_i\boldsymbol{x}_i^\top\boldsymbol\beta,\gamma_i\sigma_u^{2}\right)$, $\gamma_i=\frac{\sigma_i^{2}}{\sigma_i^{2}+\sigma_u^{2}}$  independently for $i=1,...,m$.
    \State Sample $\sigma_i^{2(s)}\sim {\rm IG}\left(a_i+\frac{n_i}{2},~b_i\gamma+\frac{1}{2}[(y_i-\theta_i)^2+(n_i-1)D_i]\right)$  independently for $i=1,...,m$.
    \State Sample $\boldsymbol{\beta}^{(s)}\sim  N_p\left(\left(\sum_{i=1}^m\boldsymbol{x}_i\boldsymbol{x}_i^\top\right)^{-1}\left(\sum_{i=1}^m\boldsymbol{x}_i\theta_i\right),~\sigma_u^{2}\left(\sum_{i=1}^m\boldsymbol{x}_i\boldsymbol{x}_i^\top\right)^{-1}\right).$
    \State Sample $\sigma_u^{2(s)}\sim {\rm IG}\left(\frac{m}{2}-1,~\frac{1}{2}\sum_{i=1}^m(\theta_i-\boldsymbol{x}_i^\top\boldsymbol{\beta})^2\right)$.
    \State Sample $\gamma^{(s)}\sim {\rm Ga}\left(1+\sum_{i=1}^m a_i,~\sum_{i=1}^m\frac{b_i}{\sigma_i^{2}}\right)$.
    \State Sample $\boldsymbol{\eta}^{(s)}$ using an adaptive Metropolis Hastings algorithm:
    
        \begin{enumerate}
        \item Sample the proposal $\boldsymbol{\eta}^*\sim N_q(\boldsymbol{\eta}^{(s-1)},c^{(s)}\textbf{I}_q)$ where


        $$
        c^{(s)}=
        \begin{cases}
        c^{(s-1)} *1.1 & \text{if the acceptance rate is $>0.6$}\\
        c^{(s-1)} /1.1 & \text{if the acceptance rate is $<0.4$}
        \end{cases}
        $$


        The variance $c^{(s)}$ is updated every $\ell$ iteration.
        
        \item Compute the acceptance probability 
        \begin{align*}
        {\rm AP}(\boldsymbol{\eta})={\rm min}\left\{1,~\frac{\pi(\boldsymbol{\eta}^*\mid\boldsymbol{\sigma}^2,\gamma)}{\pi(\boldsymbol{\eta}^{(s-1)}\mid\boldsymbol{\sigma}^2,\gamma)}
        \right\},
        \end{align*}
        where $\pi(\boldsymbol{\eta}\mid\boldsymbol{\sigma}^2,\gamma)\propto \prod_{i=1}^m\exp\left(a_i\boldsymbol{z}_i^\top\boldsymbol{\eta}\right)\exp\left(-b_i\gamma\exp(\boldsymbol{z}_i^\top\boldsymbol{\eta})/\sigma_i^2\right)$ is the full conditional distribution of $\boldsymbol{\eta}$.
        \item Sample $u\sim{\rm Uniform}(0,1)$. If $u<{\rm AP}(\boldsymbol{\eta})$, $\boldsymbol{\eta}^{(s)}=\boldsymbol{\eta}^*$. Otherwise,  $\boldsymbol{\eta}^{(s)}=\boldsymbol{\eta}^{(s-1)}$.
        \end{enumerate}
    \EndFor
	\end{algorithmic} 
    \label{alg:STK2}
\end{algorithm}

\clearpage

\section{Estimating  $\nu_i$ using the Delta method in the PEAI--AHS application}

\label{application_sup}

Consider an indicator $y_{ij}\sim{\rm Ber}(\theta_i)$ and let $y_i=n_i^{-1}\sum_{j=1}^{n_i}y_{ij}$ be a direct estimator for the proportion $\theta_i$. Then 
\begin{align*}
    \E(y_i)=\E(y_{ij})=\theta_i\quad{\rm and}\quad\sigma_i^2={\rm Var}(y_i)=\frac{1}{n_i}{\rm Var}(y_{ij})=\frac{\theta_i(1-\theta_i)}{n_i}.
\end{align*}
Thus, the Direct Variance Estimator (DVE) is given by 
\begin{align*}
    D_i=\widehat{\rm Var}(y_i)=\frac{\hat{\theta_i}(1-\hat{\theta_i})}{n_i}=\frac{y_i(1-y_i)}{n_i}.
\end{align*}
Moreover, by the Central Limit Theorem, we have 
\begin{align*}
    y_i\sim N(\theta_i,\theta_i(1-\theta_i)/n_i)\quad{\rm as}\quad n_i\to\infty.
\end{align*}
By the Delta method with a function $g(\theta_i)=\theta_i(1-\theta_i)/n_i$, we obtain
\begin{align*}
    {\rm Var}(D_i)\approx [g'(\theta_i)]^2{\rm Var}(y_i)=\left(\frac{1-2\theta_i}{n_i}\right)^2\frac{\theta_i(1-\theta_i)}{n_i}.
\end{align*}
Since $\nu_iD_i/\sigma_i^2\sim\chi_{\nu_i}^2$, we have ${\rm Var}(D_i)=(\sigma_i^2/\nu_i)^22\nu_i=2\sigma_i^4/\nu_i$. Hence,
\begin{align*}
    \frac{2\sigma_i^4}{\nu_i}=\left(\frac{1-2\theta_i}{n_i}\right)^2\frac{\theta_i(1-\theta_i)}{n_i}\quad\Longrightarrow\quad\nu_i=\frac{2\sigma_i^4n_i^3}{(1-2\theta_i)^2\theta_i(1-\theta_i)}.
\end{align*}
Since $\sigma_i^2={\rm Var}(y_i)=\theta_i(1-\theta_i)/n_i$, we have
\begin{align*}
    \nu_i=\frac{2[\theta_i(1-\theta_i)/n_i]^2n_i^3}{(1-2\theta_i)^2\theta_i(1-\theta_i)}=\frac{2n_i\theta_i(1-\theta_i)}{(1-2\theta_i)^2}.
\end{align*}
Therefore, the estimated degrees of freedom for  the PEAI–AHS application are given by
\begin{align*}
    \hat{\nu}_i=\frac{2n_iy_i(1-y_i)}{(1-2y_i)^2}.
\end{align*}


\end{document}